\DeclareFontShape{OT1}{cmr}{bx}{sc}{<-> cmbcsc10}{}
\newtheorem{theorem}{Theorem}
\newtheorem{lemma}{Lemma}
\newtheorem{definition}{Definition}
\newtheorem{observation}{Observation}
\newtheorem{corollary}{Corollary}
\def\MPC{\mathrm{MPC}}
\def\SAT{\mathrm{3SAT}}
\def\EMP{\mathrm{EMP}}
\def\Gap#1{\textsc{Gap} #1}
\definecolor{xxxcolor}{rgb}{0.8,0,0}
\let\realbfseries=\bfseries
\def\bfseries{\realbfseries\boldmath}
\title{Even $1 \times n$ Edge-Matching and Jigsaw Puzzles are Really Hard}
\author[1]{Jeffrey Bosboom}
\author[1]{Erik D. Demaine}
\author[1]{Martin L. Demaine}
\author[1]{Adam Hesterberg}
\author[2]{Pasin~Manurangsi\thanks{Part of this work was completed while the author was at MIT and Dropbox, Inc.}}
\author[1]{Anak Yodpinyanee\thanks{Research supported by NSF grant CCF-1420692.}}
\affil[1]{Massachusetts Institute of Technology,
		  Cambridge, MA 02139, USA \authorcr
          \url{{jbosboom,edemaine,mdemaine,achester,anak}@mit.edu}}
\affil[2]{University of California,
		  Berkeley, CA 94720, USA \authorcr
          \url{pasin@berkeley.edu}}
\date{}
\begin{document}
\maketitle

\begin{abstract}
  We prove the computational intractability of rotating and placing $n$ square
  tiles into a $1 \times n$ array such that adjacent tiles are
  compatible---either equal edge colors, as in edge-matching puzzles,
  or matching tab/pocket shapes, as in jigsaw puzzles.
  Beyond basic NP-hardness, we prove that it is
  NP-hard even to approximately maximize the number of placed tiles (allowing
  blanks), while satisfying the compatibility constraint between nonblank
  tiles, within a factor of 0.9999999851.  (On the other hand, there is
  an easy $1 \over 2$-approximation.)  This is the first (correct) proof of
  inapproximability for edge-matching and jigsaw puzzles.
  Along the way, we prove NP-hardness of distinguishing, for a directed graph
  on $n$ nodes, between having a Hamiltonian path (length $n-1$) and having
  at most $0.999999284(n-1)$ edges that form a vertex-disjoint union of paths.
  We use this gap hardness and gap-preserving reductions to establish similar
  gap hardness for $1 \times n$ jigsaw and edge-matching puzzles.
\end{abstract}

\section{Introduction}

Jigsaw puzzles \cite{Williams-2004} and edge-matching puzzles
\cite{Haubrich-1995} are two ancient types of puzzle, going back to the
1760s and 1890s, respectively.  Jigsaw puzzles involve fitting together a given
set of pieces (usually via translation and rotation) into a desired shape
(usually a rectangle), often revealing a known image or pattern.
The pieces are typically squares with a pocket cut out of or a tab attached
to each side, except for boundary pieces which have one flat side
and corner pieces which have two flat sides.
Most jigsaw puzzles have unique tab/pocket pairs that fit together, but we
consider the generalization to ``ambiguous mates'' where multiple tabs and
pockets have the same shape and are thus compatible.

Edge-matching puzzles are similar to jigsaw puzzles: they too feature
square tiles, but instead of pockets or tabs, each edge has a color or pattern.
In \emph{signed} edge-matching puzzles, the edge labels come in complementary
pairs (e.g., the head and tail halves of a colored lizard), and adjacent tiles
must have complementary edge labels on their shared edge (e.g., forming an
entire lizard of one color).  This puzzle type is essentially identical to
jigsaw puzzles, where complementary pairs of edge labels act as identically
shaped tab/pocket pairs.
In \emph{unsigned} edge-matching puzzles, edge labels are arbitrary, and
the requirement is that adjacent tiles must have identical edge labels.
In both cases, the goal is to place (via translation and rotation) the tiles
into a target shape, typically a rectangle.

A recent popular (unsigned) edge-matching puzzle is \emph{Eternity II}
\cite{eternity2-wiki}, which featured a US\$2,000,000 prize for the first
solver (before 2011).  The puzzle remains unsolved (except presumably by its
creator, Christopher Monckton).  The best partial solution to date
\cite{eternity2-Sydsvenskan} either places 247 out of the 256 pieces without
error, or places all 256 pieces while correctly matching 467 out of 480 edges.

\paragraph{Previous work.}
The first study of jigsaw and edge-matching puzzles from a computational
complexity perspective proved NP-hardness \cite{Jigsaw_GC}.
Four years later, unsigned edge-matching puzzles were
proved NP-hard even for a target shape of a $1 \times n$ rectangle
\cite{EFW11}.
There is a simple reduction from unsigned edge-matching puzzles to
signed edge-matching/jigsaw puzzles \cite{Jigsaw_GC}, which expands the
puzzle by a factor of two in each dimension, thereby establishing NP-hardness
of $2 \times n$ jigsaw puzzles.
Unsigned $2 \times n$ edge-matching puzzles were claimed to be APX-hard
(implying they have no PTAS) \cite{Antoniadis-Lingas-2010}, but the proof
is incorrect.%
\footnote{Personal communication with Antonios Antoniadis, October 2014.
  In particular, Lemma 3's proof is incomplete.}

\paragraph{Our results.}
We prove that $1 \times n$ jigsaw puzzles and $1 \times n$ edge-matching
puzzles are both NP-hard, even to approximate within a factor of
0.9999999851 ($ > {67038719 \over 67038720}$).
This is the first correct inapproximability result for either problem.
Even NP-hardness is new for $1 \times n$ signed edge-matching/jigsaw puzzles.
By a known reduction \cite{Jigsaw_GC}, these results imply NP-hardness
for polyomino packing (exact packing of a given set of
polyominoes into a given rectangle) when the polyominoes all have
area $\Theta(\log n)$; the previous NP-hardness proof \cite{Jigsaw_GC}
needed polyominoes of area $\Theta(\log^2 n)$.

We prove inapproximability for two different optimization versions of the
problems.  First, we consider placing the maximum number of tiles without
any violations of the matching constraints.
This objective has a simple $1 \over 2$-approximation for $1 \times n$
puzzles: alternate between placing a tile and leaving a blank.
(A $2 \over 3$-approximation is also possible; see Section~\ref{sec:open}.)
Second, we consider placing all of the tiles while maximizing the number of
compatible edges between adjacent tiles (as in \cite{Antoniadis-Lingas-2010}).%
\footnote{A dual objective would be to place all tiles while minimizing the
  number of mismatched edges, but this problem is already NP-hard to
  distinguish between an answer of zero and positive, so it cannot be
  approximated.}
This objective also has a simple $1 \over 2$-approximation for $1 \times n$
puzzles, via a maximum-cardinality matching on the tiles in a graph where
edges represent having any compatible edges: any solution with $k$
compatibilities induces a matching of size at least $k/2$
\cite{Antoniadis-Lingas-2010}.
Thus, up to constant factors, we resolve the approximability of these puzzles.

Our reduction is from Hamiltonian path on directed graphs whose
vertices each has maximum in-degree and out-degree 2, which was shown to be NP-complete by Plesn{\'i}k \cite{plesnik}.
To prove inapproximability, we reduce from a maximization version of this
Hamiltonicity problem, called \emph{maximum vertex-disjoint path cover}, where
the goal is to choose as many edges as possible to form vertex-disjoint paths.
(A Hamiltonian path would be an ideal path cover, forming a single
path of length $|V|-1$.)  This problem has a $12 \over 17$-approximation
\cite{Vishwanathan-1992},
and is NP-hard to approximate within some constant factor \cite{Enge03}
via a known connection to Asymmetric TSP with weights in $\{1,2\}$
\cite{Vishwanathan-1992}.

We prove that maximum vertex-disjoint path cover satisfies a stronger type
of hardness, called \emph{gap hardness}: it is NP-hard to distinguish between a
directed graph having a Hamiltonian path versus one where all vertex-disjoint
path covers having at most $ 0.999999284 \, |V|$
($> {1396639 \over 1396640} \, |V|$)
edges, given a promise that the graph falls into one of these two categories.
This gap hardness immediately implies inapproximability within a factor of
$0.999999284$ (though this constant is weaker than the known
inapproximability bound \cite{Enge03}).
More useful is that our reduction to $1 \times n$ jigsaw/edge-matching puzzles
is gap-preserving, implying gap hardness and inapproximability for the latter.
This approach lets us focus on ``perfect'' instances (where all tiles are
compatible) versus ``very bad'' instances (where many tiles are incompatible),
which seems far easier than standard L-reductions used in many
inapproximability results, where we must distinguish between an arbitrary
optimal and a factor below that arbitrary optimal.
We posit that gap hardness---where the high end of the gap is ``perfect'',
attaining the maximum possible bound, and thus matching the NP-hardness of the
original decision problem---is the better way to prove hardness of approximation
for many puzzles and games,
and hope that our results and approach find use in other research as well.



\section{Problem Definitions}

In this section, we formally define the relevant concepts and problems we
consider in this paper.

\subsection{Approximation Algorithms and Gap Problems}

One approach to dealing with NP-hard problems is to allow an algorithm to approximate the answer to the problem instead of solving it exactly. The quality of such an \emph{approximation algorithm} is given by its \emph{approximation ratio} as defined below. For the purposes of this paper, we define approximation ratio and subsequent notions only for maximization problems; these notions can be extended for minimization problems as well.

\begin{definition}
An approximation algorithm for a maximization problem has an \emph{approximation ratio} $\rho \leq 1$ if it outputs a solution that has value at least $\rho$ times the optimum.
We call such an algorithm a \emph{$\rho$-approximation algorithm}.
\end{definition}

Whereas many NP-hard problems admit polynomial-time $\rho$-approximation algorithms for some sensible $\rho$'s, some are proven to be NP-hard to approximate to within some approximation ratio. To prove such inapproximability results, one typically reduces from a known NP-hard problem (e.g., \textsc{3SAT}) to a maximization problem so that, if the original instance is a \textsc{yes} instance, then the optimum of the resulting instance is large and, if the original instance is a \text{no} instance, then the optimum is small. This polynomial-time reduction implies that the optimization is NP-hard to approximate to within the ratio between the optimums in the two cases. For convenience, we use the following notation of \emph{gap problems} to describe these inapproximability results.

\begin{definition}
For a maximization problem $P$ and $\beta > \gamma$, \Gap{$P$}[$\beta, \gamma$] is the problem of distinguishing whether an instance of $P$ has optimum at least $\beta$ or at most $\gamma$.
\end{definition}

Gap problems are widely used in hardness of approximation because NP-hardness of \Gap{$P$}[$\beta, \gamma$] (which we refer to informally as \emph{gap hardness}) implies NP-hardness of approximating $P$ to within a factor of $\gamma/\beta$.

\subsection{Edge-Matching and Jigsaw Puzzles}

Next, we give formal definitions of our main problems of interest, (unsigned) edge-matching and signed edge-matching puzzles (or jigsaw puzzles), as described in the introduction.
While our paper will focus on the $1 \times n$ rectangular grid case,
we define the more general problem of edge-matching puzzles on an $h \times w$
rectangular grid.
We begin with the unsigned decision problem:

\begin{definition}
	\textsc{$h \times w$ Edge-Matching Puzzle} is the following problem:

	\textsc{Input:} $n = h w$ unit-square tiles where each of the four sides has a color.

	\textsc{Output:} whether the tiles can be assembled into the $h \times w$ rectangular grid (board) so that any two adjacent pieces have the same color on the shared edge.
\end{definition}

\begin{figure}
	\centering
	\begin{subfigure}[c]{\textwidth}
		\centering
		\scalebox{0.8}{\begin{tikzpicture}
\definecolor{color1}{rgb}{0.558594,0.566406,0.554688}
\definecolor{color2}{rgb}{0.945312,0.371094,0.359375}
\definecolor{color3}{rgb}{0.996094,0.875000,0.398438}
\definecolor{color4}{rgb}{0.140625,0.480469,0.625000}
\definecolor{color5}{rgb}{0.437500,0.753906,0.699219}
\definecolor{color6}{rgb}{0.929688,0.929688,0.929688}
\fill[fill=color1] (0.0, 0.0) -- (0.8, 0.8) -- (0.0, 1.6) -- cycle;
\fill[fill=color3] (1.6, 0.0) -- (0.8, 0.8) -- (1.6, 1.6) -- cycle;
\fill[fill=color2] (0.0, 1.6) -- (0.8, 0.8) -- (1.6, 1.6) -- cycle;
\fill[fill=color4] (0.0, 0.0) -- (0.8, 0.8) -- (1.6, 0.0) -- cycle;
\node [anchor=north,fill=white] at (0.8, 0.0) {\large $$};
\node () at (0.8, 0.8) [rectangle,draw,fill=none,thick,minimum width=1.6cm, minimum height=1.6cm] {};
\draw [color=black] (0.0, 0.0) -- (1.6, 1.6);
\draw [color=black] (0.0, 1.6) -- (1.6, 0.0);
\node [color=black] at (0.352, 0.8) {$1$};
\node [color=black] at (1.248, 0.8) {$3$};
\node [color=black] at (0.8, 0.288) {$4$};
\node [color=black] at (0.8, 1.312) {$2$};
\fill[fill=color2] (1.92, 0.0) -- (2.72, 0.8) -- (1.92, 1.6) -- cycle;
\fill[fill=color4] (3.52, 0.0) -- (2.72, 0.8) -- (3.52, 1.6) -- cycle;
\fill[fill=color1] (1.92, 1.6) -- (2.72, 0.8) -- (3.52, 1.6) -- cycle;
\fill[fill=color5] (1.92, 0.0) -- (2.72, 0.8) -- (3.52, 0.0) -- cycle;
\node [anchor=north,fill=white] at (2.72, 0.0) {\large $$};
\node () at (2.72, 0.8) [rectangle,draw,fill=none,thick,minimum width=1.6cm, minimum height=1.6cm] {};
\draw [color=black] (1.92, 0.0) -- (3.52, 1.6);
\draw [color=black] (1.92, 1.6) -- (3.52, 0.0);
\node [color=black] at (2.272, 0.8) {$2$};
\node [color=black] at (3.168, 0.8) {$4$};
\node [color=black] at (2.72, 0.288) {$5$};
\node [color=black] at (2.72, 1.312) {$1$};
\fill[fill=color3] (3.84, 0.0) -- (4.64, 0.8) -- (3.84, 1.6) -- cycle;
\fill[fill=color1] (5.44, 0.0) -- (4.64, 0.8) -- (5.44, 1.6) -- cycle;
\fill[fill=color6] (3.84, 1.6) -- (4.64, 0.8) -- (5.44, 1.6) -- cycle;
\fill[fill=color2] (3.84, 0.0) -- (4.64, 0.8) -- (5.44, 0.0) -- cycle;
\node [anchor=north,fill=white] at (4.64, 0.0) {\large $$};
\node () at (4.64, 0.8) [rectangle,draw,fill=none,thick,minimum width=1.6cm, minimum height=1.6cm] {};
\draw [color=black] (3.84, 0.0) -- (5.44, 1.6);
\draw [color=black] (3.84, 1.6) -- (5.44, 0.0);
\node [color=black] at (4.192, 0.8) {$3$};
\node [color=black] at (5.088, 0.8) {$1$};
\node [color=black] at (4.64, 0.288) {$2$};
\node [color=black] at (4.64, 1.312) {$6$};
\fill[fill=color4] (5.76, 0.0) -- (6.56, 0.8) -- (5.76, 1.6) -- cycle;
\fill[fill=color5] (7.36, 0.0) -- (6.56, 0.8) -- (7.36, 1.6) -- cycle;
\fill[fill=color1] (5.76, 1.6) -- (6.56, 0.8) -- (7.36, 1.6) -- cycle;
\fill[fill=color1] (5.76, 0.0) -- (6.56, 0.8) -- (7.36, 0.0) -- cycle;
\node [anchor=north,fill=white] at (6.56, 0.0) {\large $$};
\node () at (6.56, 0.8) [rectangle,draw,fill=none,thick,minimum width=1.6cm, minimum height=1.6cm] {};
\draw [color=black] (5.76, 0.0) -- (7.36, 1.6);
\draw [color=black] (5.76, 1.6) -- (7.36, 0.0);
\node [color=black] at (6.112, 0.8) {$4$};
\node [color=black] at (7.008, 0.8) {$5$};
\node [color=black] at (6.56, 0.288) {$1$};
\node [color=black] at (6.56, 1.312) {$1$};
\fill[fill=color6] (7.68, 0.0) -- (8.48, 0.8) -- (7.68, 1.6) -- cycle;
\fill[fill=color3] (9.28, 0.0) -- (8.48, 0.8) -- (9.28, 1.6) -- cycle;
\fill[fill=color2] (7.68, 1.6) -- (8.48, 0.8) -- (9.28, 1.6) -- cycle;
\fill[fill=color5] (7.68, 0.0) -- (8.48, 0.8) -- (9.28, 0.0) -- cycle;
\node [anchor=north,fill=white] at (8.48, 0.0) {\large $$};
\node () at (8.48, 0.8) [rectangle,draw,fill=none,thick,minimum width=1.6cm, minimum height=1.6cm] {};
\draw [color=black] (7.68, 0.0) -- (9.28, 1.6);
\draw [color=black] (7.68, 1.6) -- (9.28, 0.0);
\node [color=black] at (8.032, 0.8) {$6$};
\node [color=black] at (8.928, 0.8) {$3$};
\node [color=black] at (8.48, 0.288) {$5$};
\node [color=black] at (8.48, 1.312) {$2$};
\fill[fill=color5] (9.6, 0.0) -- (10.4, 0.8) -- (9.6, 1.6) -- cycle;
\fill[fill=color5] (11.2, 0.0) -- (10.4, 0.8) -- (11.2, 1.6) -- cycle;
\fill[fill=color6] (9.6, 1.6) -- (10.4, 0.8) -- (11.2, 1.6) -- cycle;
\fill[fill=color2] (9.6, 0.0) -- (10.4, 0.8) -- (11.2, 0.0) -- cycle;
\node [anchor=north,fill=white] at (10.4, 0.0) {\large $$};
\node () at (10.4, 0.8) [rectangle,draw,fill=none,thick,minimum width=1.6cm, minimum height=1.6cm] {};
\draw [color=black] (9.6, 0.0) -- (11.2, 1.6);
\draw [color=black] (9.6, 1.6) -- (11.2, 0.0);
\node [color=black] at (9.952, 0.8) {$5$};
\node [color=black] at (10.848, 0.8) {$5$};
\node [color=black] at (10.4, 0.288) {$2$};
\node [color=black] at (10.4, 1.312) {$6$};
\fill[fill=color2] (12.8, 0.8) -- (13.6, 1.6) -- (12.8, 2.4) -- cycle;
\fill[fill=color4] (14.4, 0.8) -- (13.6, 1.6) -- (14.4, 2.4) -- cycle;
\fill[fill=color1] (12.8, 2.4) -- (13.6, 1.6) -- (14.4, 2.4) -- cycle;
\fill[fill=color5] (12.8, 0.8) -- (13.6, 1.6) -- (14.4, 0.8) -- cycle;
\node [anchor=north,fill=white] at (13.6, 0.8) {\large $$};
\node () at (13.6, 1.6) [rectangle,draw,fill=none,thick,minimum width=1.6cm, minimum height=1.6cm] {};
\draw [color=black] (12.8, 0.8) -- (14.4, 2.4);
\draw [color=black] (12.8, 2.4) -- (14.4, 0.8);
\node [color=black] at (13.152, 1.6) {$2$};
\node [color=black] at (14.048, 1.6) {$4$};
\node [color=black] at (13.6, 1.088) {$5$};
\node [color=black] at (13.6, 2.112) {$1$};
\fill[fill=color4] (14.4, 0.8) -- (15.2, 1.6) -- (14.4, 2.4) -- cycle;
\fill[fill=color2] (16.0, 0.8) -- (15.2, 1.6) -- (16.0, 2.4) -- cycle;
\fill[fill=color1] (14.4, 2.4) -- (15.2, 1.6) -- (16.0, 2.4) -- cycle;
\fill[fill=color3] (14.4, 0.8) -- (15.2, 1.6) -- (16.0, 0.8) -- cycle;
\node [anchor=north,fill=white] at (15.2, 0.8) {\large $$};
\node () at (15.2, 1.6) [rectangle,draw,fill=none,thick,minimum width=1.6cm, minimum height=1.6cm] {};
\draw [color=black] (14.4, 0.8) -- (16.0, 2.4);
\draw [color=black] (14.4, 2.4) -- (16.0, 0.8);
\node [color=black] at (14.752, 1.6) {$4$};
\node [color=black] at (15.648, 1.6) {$2$};
\node [color=black] at (15.2, 1.088) {$3$};
\node [color=black] at (15.2, 2.112) {$1$};
\fill[fill=color2] (16.0, 0.8) -- (16.8, 1.6) -- (16.0, 2.4) -- cycle;
\fill[fill=color6] (17.6, 0.8) -- (16.8, 1.6) -- (17.6, 2.4) -- cycle;
\fill[fill=color3] (16.0, 2.4) -- (16.8, 1.6) -- (17.6, 2.4) -- cycle;
\fill[fill=color1] (16.0, 0.8) -- (16.8, 1.6) -- (17.6, 0.8) -- cycle;
\node [anchor=north,fill=white] at (16.8, 0.8) {\large $$};
\node () at (16.8, 1.6) [rectangle,draw,fill=none,thick,minimum width=1.6cm, minimum height=1.6cm] {};
\draw [color=black] (16.0, 0.8) -- (17.6, 2.4);
\draw [color=black] (16.0, 2.4) -- (17.6, 0.8);
\node [color=black] at (16.352, 1.6) {$2$};
\node [color=black] at (17.248, 1.6) {$6$};
\node [color=black] at (16.8, 1.088) {$1$};
\node [color=black] at (16.8, 2.112) {$3$};
\fill[fill=color6] (12.8, -0.8) -- (13.6, 0.0) -- (12.8, 0.8) -- cycle;
\fill[fill=color2] (14.4, -0.8) -- (13.6, 0.0) -- (14.4, 0.8) -- cycle;
\fill[fill=color5] (12.8, 0.8) -- (13.6, 0.0) -- (14.4, 0.8) -- cycle;
\fill[fill=color5] (12.8, -0.8) -- (13.6, 0.0) -- (14.4, -0.8) -- cycle;
\node [anchor=north,fill=white] at (13.6, -0.8) {\large $$};
\node () at (13.6, 0.0) [rectangle,draw,fill=none,thick,minimum width=1.6cm, minimum height=1.6cm] {};
\draw [color=black] (12.8, -0.8) -- (14.4, 0.8);
\draw [color=black] (12.8, 0.8) -- (14.4, -0.8);
\node [color=black] at (13.152, 0.0) {$6$};
\node [color=black] at (14.048, 0.0) {$2$};
\node [color=black] at (13.6, -0.512) {$5$};
\node [color=black] at (13.6, 0.512) {$5$};
\fill[fill=color2] (14.4, -0.8) -- (15.2, 0.0) -- (14.4, 0.8) -- cycle;
\fill[fill=color5] (16.0, -0.8) -- (15.2, 0.0) -- (16.0, 0.8) -- cycle;
\fill[fill=color3] (14.4, 0.8) -- (15.2, 0.0) -- (16.0, 0.8) -- cycle;
\fill[fill=color6] (14.4, -0.8) -- (15.2, 0.0) -- (16.0, -0.8) -- cycle;
\node [anchor=north,fill=white] at (15.2, -0.8) {\large $$};
\node () at (15.2, 0.0) [rectangle,draw,fill=none,thick,minimum width=1.6cm, minimum height=1.6cm] {};
\draw [color=black] (14.4, -0.8) -- (16.0, 0.8);
\draw [color=black] (14.4, 0.8) -- (16.0, -0.8);
\node [color=black] at (14.752, 0.0) {$2$};
\node [color=black] at (15.648, 0.0) {$5$};
\node [color=black] at (15.2, -0.512) {$6$};
\node [color=black] at (15.2, 0.512) {$3$};
\fill[fill=color5] (16.0, -0.8) -- (16.8, 0.0) -- (16.0, 0.8) -- cycle;
\fill[fill=color4] (17.6, -0.8) -- (16.8, 0.0) -- (17.6, 0.8) -- cycle;
\fill[fill=color1] (16.0, 0.8) -- (16.8, 0.0) -- (17.6, 0.8) -- cycle;
\fill[fill=color1] (16.0, -0.8) -- (16.8, 0.0) -- (17.6, -0.8) -- cycle;
\node [anchor=north,fill=white] at (16.8, -0.8) {\large $$};
\node () at (16.8, 0.0) [rectangle,draw,fill=none,thick,minimum width=1.6cm, minimum height=1.6cm] {};
\draw [color=black] (16.0, -0.8) -- (17.6, 0.8);
\draw [color=black] (16.0, 0.8) -- (17.6, -0.8);
\node [color=black] at (16.352, 0.0) {$5$};
\node [color=black] at (17.248, 0.0) {$4$};
\node [color=black] at (16.8, -0.512) {$1$};
\node [color=black] at (16.8, 0.512) {$1$};
\end{tikzpicture}}
	\end{subfigure}
	\caption{An example of a \textsc{$2 \times 3$ Edge-Matching Puzzle} instance consisting of $6$ tiles (left), where all tiles can be assembled into a $2 \times 3$ rectangular grid, with matching colors on the edges of adjacent tiles (right). Colors are specified redundantly as numbers.}
	\label{fig:2x3-feasible}
\end{figure}
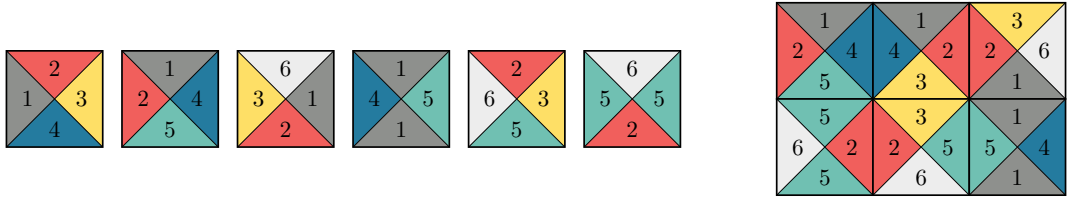

Figure~\ref{fig:2x3-feasible} shows an example of a \textsc{$2 \times 3$ Edge-Matching Puzzle}.
We write $\mathcal{T}(c_1, c_2, c_3, c_4)$ to denote a tile with colors $c_1, c_2, c_3, c_4$ on its left, upper, right, and lower edges respectively.
For example, the first tile in Figure~\ref{fig:2x3-feasible} is $\mathcal{T}(1,2,3,4)$.

In an assembly, we allow tiles to be rotated but not reflected.
It is easy to see from the proofs that all of our results hold
if we allow reflections in addition to rotations.
On the other hand, if rotations are prohibited, then \textsc{$1 \times n$ Edge-Matching Puzzle} is in P~\cite{EFW11}: with this restriction, the problem becomes equivalent to finding an Eulerian path (a path that visits every edge exactly once) in the directed graph whose vertices represent the colors and whose edges represent the tiles, which can be solved in linear time.

Next, we define two optimization versions of the edge-matching problem.
In the first version, the goal is to put as many tiles in the grid without any shared edge being labeled with different colors on two tiles.
In the second version (previously considered in \cite{Antoniadis-Lingas-2010}),
we must place all $n$ tiles into the rectangular grid, and the goal is to maximize the total number of matching edges between adjacent tiles.

\begin{definition}
	\textsc{$h \times w$ Max-Placement Edge-Matching Puzzle} is the following problem:

	\textsc{Input:} $n = h w$ unit-square tiles where each of the four sides has a color.

	\textsc{Output:} the maximum number of the tiles that can be assembled\footnote{To avoid confusion, we clarify that each tile must be placed parallel to the board's borders with integer-coordinated corners, i.e., the board has $h w$ unit-square slots and each tile is either put in a slot or left out of the board.} into the $h \times w$ rectangular grid such that any two adjacent pieces have the same color on the shared edge.
\end{definition}

\begin{definition} \label{def-mme-variation}
	\textsc{$h \times w$ Max-Matched Edge-Matching Puzzle} is the following problem:

	\textsc{Input:} $n = h w$ unit-square tiles where each of the four sides has a color and a sign on it.

	\textsc{Output:} the maximum number of same-color edges between adjacent tiles obtained by assembling all $n$ tiles into $h \times w$ rectangular grid.
\end{definition}

\begin{figure}
	\centering
	\begin{subfigure}[c]{\textwidth}
		\centering
		\scalebox{0.8}{\input{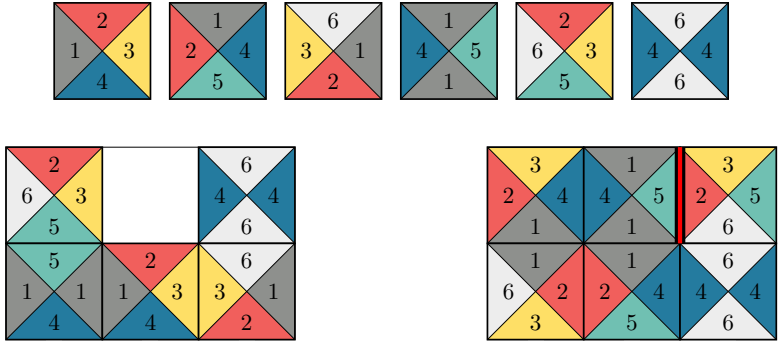}}
	\end{subfigure}
	\caption{An instance (top) corresponding to both a \textsc{$2 \times 3$ Max-Placement Edge-Matching Puzzle} whose optimal tiling contains $5$ tiles (bottom left), and a \textsc{$2 \times 3$ Max-Matched Edge-Matching Puzzle} whose optimal tiling yields $6$ matched edges (bottom right); the mismatched edge is indicated with a thick line.}
	\label{fig:2x3-approx}
\end{figure}

Figure~\ref{fig:2x3-approx} shows an example comparing both optimization problems.
In this paper, we focus on proving inapproximability of $1 \times n$ Max-Placement Edge-Matching Puzzle, then later show how to modify the proof to establish hardness under the Max-Matched variation.

Next we define the \emph{signed} variation of the puzzle, where each edge has a sign and the signs on the shared edge of two pieces must be opposite.

\begin{definition}
	\textsc{$h \times w$ Signed Edge-Matching Puzzle} is the following problem:

	\textsc{Input:} $n = h w$ unit-square tiles where each of the four sides has a color and a sign ($+$ or $-$).

	\textsc{Output:} whether the tiles can be assembled into the $h \times w$ rectangular grid so that any two adjacent pieces have the same color but opposite signs on the shared edge.
\end{definition}

Likewise, we use $\mathcal{T}(s_1c_1, s_2c_2, s_3c_3, s_4c_4)$ to denote a tile with colors $c_1, c_2, c_3, c_4$ and signs $s_1, s_2, s_3, s_4$ on its left, upper, right, and lower edges respectively.

Signed edge-matching puzzles can be viewed as jigsaw puzzles: $+$ and $-$ signs correspond to tabs and pockets respectively. It is worth noting, however, that the definition above is slightly different from typical jigsaw puzzles: signed edge-matching puzzles allow tabs and pocket on the borders, whereas jigsaw puzzles normally forbid this. If the borders of the puzzle must be flat (have no color or sign), then $1 \times n$ jigsaw puzzles become solvable in polynomial time in a similar fashion to edge-matching puzzles without rotations.

For signed edge-matching puzzles, we can define an optimization version similarly:
\begin{definition}
	\textsc{$h \times w$ Max-Placement Signed Edge-Matching Puzzle} is the following problem:

	\textsc{Input:} $n = h w$ unit-square tiles where each of the four sides has a color and a sign on it.

	\textsc{Output:} the maximum number of the tiles that can be assembled into the $h \times w$ rectangular grid such that any two adjacent pieces have the same color and different signs on the shared edge.
\end{definition}

It is possible to define the Max-Matched variation for signed
edge-matching puzzles, and our inapproximability result still holds.
Note, however, that this variation is unnatural when interpreted as a
jigsaw puzzle, because it would be physically difficult to place mismatched
jigsaw pieces adjacent to each other.


\subsection{Hamiltonian Path and Maximum Vertex-Disjoint Path Cover}

The Hamiltonian cycle problem is to decide whether a given (directed) graph has a \emph{Hamiltonian cycle}---a cycle that visits each vertex in the graph exactly once. This problem has long been known to be NP-complete; it was on Karp's initial list of 21 NP-complete problems~\cite{Karp72}. In this paper, we are more interested in its close relative, the Hamiltonian path problem:

\begin{definition}
  \textsc{Hamiltonian Path} is the following problem:

  \textsc{Input:} a directed graph $G = (V, E)$, a source $s$, and a sink $t$.\footnote{A \emph{source} of a directed graph is a vertex with in-degree zero; a \emph{sink} is a vertex with out-degree zero.}

  \textsc{Output:} whether there exists a Hamiltonian path (a path visiting each vertex exactly once) starting at $s$ and ending at~$t$.
\end{definition}

Like the Hamiltonian cycle problem, the Hamiltonian path problem has long been known to be NP-complete~\cite{GJ79}.%
\footnote{The conventional definition of \textsc{Hamiltonian Path} (e.g., in \cite{GJ79}) does not require $s$ and $t$ to be a source and a sink, respectively. We may easily reduce the conventional problem to our variant by adding a new starting vertex $s'$ with a single incident edge from $s'$ to $s$, and a new target vertex $t'$ with a single incident edge from $t$ to $t'$. Any Hamiltonian path in one instance can be converted into a corresponding Hamiltonian path in the other instance by adding or removing these new vertices and edges, establishing NP-hardness result for our variant.}
Garey, Johnson, and Tarjan \cite{GJT76} showed that the problem remains NP-hard even on undirected graphs where each vertex has degree at most three.
Plesn{\'i}k \cite{plesnik} extended the result to planar directed graphs with bounded in- and out-degrees.

While we do not use Plesn{\'i}k's result directly, we use his reduction from 3SAT to prove gap hardness of \textsc{$1 \times n$ Max-Placement Edge-Matching Puzzle}. To do so, we need an optimization version of the Hamiltonian path problem:

\begin{definition}
	\textsc{Max Vertex-Disjoint Path Cover($d$)} is the following problem:

	\textsc{Input:} a directed graph $G = (V, E)$ with a unique source $s$ and a unique sink $t$ such that the in-degree and out-degree of each vertex are at most $d$.

	\textsc{Output:} the maximum number of edges in any vertex-disjoint path cover---a collection of paths such that each vertex appears in exactly one path. Here we allow paths of length zero (i.e., a single vertex).
\end{definition}

\textsc{Max Vertex-Disjoint Path Cover($d$)} can be viewed as an optimization version of the Hamiltonian path problem because a graph has a vertex-disjoint path cover with $|V| - 1$ edges if and only if there exists a Hamiltonian path. The uniquenesses of source and sink and the degree requirement are imposed for technical reasons, which will be clear once our reduction is presented. 

Engebretsen \cite{Enge03} proved that Asymmetric TSP with weights in $\{1,2\}$
is NP-hard to approximate within a factor of $\frac{2805}{2804}-\varepsilon$,
which implies
\cite{Vishwanathan-1992} that \textsc{Max Vertex-Disjoint Path Cover($d$)}
is NP-hard to approximate within some constant factor (better than the
constant we will obtain).  Their proof implies a gap hardness result,
but not one where the high end of the gap has perfect solutions
(optimal value $|V|-1$, i.e., Hamiltonian), at least not without modification.
To fix this problem, and for self-containedness, we analyze our own
construction here.

\subsection{\textsc{Max-3SAT}}

To establish gap hardness for \textsc{Max Vertex-Disjoint Path Cover($d$)}, we resort to a classic problem: \textsc{Max-3SAT}.
Recall that a Boolean formula is in \emph{conjunctive normal form (CNF)} if it is a conjunction (\textsc{and}) of zero or more clauses, where each clause is a disjunction (\textsc{or}) of at most three literals, each of which is either a variable or its negation. For example, $(x_1 \vee x_2) \wedge (x_1 \vee \overline{x_3} \vee x_4) \wedge (\overline{x_2} \vee x_4)$ is in CNF with three clauses.

The celebrated Cook's Theorem (also known as Cook-Levin Theorem) states that it is NP-hard to decide whether a CNF formula is satisfiable~\cite{Cook71}. Shortly after, Karp \cite{Karp72} proved NP-completeness of the more restricted version \textsc{3SAT}, where each clause has at most three literals. Because we are aiming for an inapproximability result, we will use an optimization version of \textsc{3SAT}:

\begin{definition}
	\textsc{Max-3SAT($k$)} is the following problem:

	\textsc{Input:} a CNF formula such that each clause contains at most three literals and each variable appears in at most $k$ clauses.

	\textsc{Output:} the maximum number of clauses satisfied by any assignment.
\end{definition}

The PCP Theorem, considered a landmark in modern complexity theory, essentially states that \textsc{Max-3SAT} (without the bound $k$ on variable occurrences) is NP-hard to approximate to within some constant factor~\cite{ALMSS98,AS98}. \textsc{Max-3SAT($k$)} has also been researched intensively; even before the PCP Theorem was proven, Papadimitriou and Yannakakis \cite{PY88} showed that, for some constant $k$, \textsc{Max-3SAT($k$)} is a complete problem for a complexity class MaxSNP. Later, Feige \cite{Feige98} proved that \textsc{Max-3SAT(5)} is NP-hard to approximate up to some constant factor. For the purpose of this paper, we will use the following gap hardness of \textsc{Max-3SAT(29)} from \cite[pp.~314]{Vazirani01}, which has a more concrete ratio than that from~\cite{Feige98}.

\begin{lemma} \label{lem-3sat-approx}
For any constant $\alpha_{\SAT} < \frac{1}{344}$, \Gap{\textsc{Max-3SAT(29)}}$[m, (1-\alpha_{\SAT})m]$ is NP-hard.
\end{lemma}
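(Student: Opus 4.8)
The plan is to derive Lemma~\ref{lem-3sat-approx} from the PCP Theorem by the classical expander-based degree-reduction of Papadimitriou--Yannakakis \cite{PY88}, tracking the constants carefully enough to land at occurrence bound $29$ and gap just below $\tfrac{1}{344}$. First I would invoke the PCP Theorem \cite{ALMSS98,AS98} in its gap form: there is an absolute constant $\epsilon_0$ (which, in the version used by \cite{Vazirani01}, may be taken arbitrarily close to but below $\tfrac18$) together with a polynomial-time reduction from $\SAT$ producing $3$-CNF formulas that are either fully satisfiable or have at most a $(1-\epsilon_0)$ fraction of clauses simultaneously satisfiable; a routine normalization makes every clause contain exactly three distinct literals. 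These instances have unbounded variable occurrences, so all the work is in bounding them while losing only a constant factor in the gap.

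Next comes the reduction itself. Fix $d = 14$ and an explicit family of $d$-regular expanders $G_k$ on $k$ vertices with second-largest eigenvalue $\lambda(G_k) \le 12 < d$ (Ramanujan graphs give $\lambda \le 2\sqrt{13} < 12$; the finitely many small $k$ are absorbed by padding occurrence counts up to some fixed $k_0$). For a variable $x$ occurring $k_x$ times, introduce copies $x^1,\dots,x^{k_x}$, route the $j$-th occurrence of $x$ to $x^j$, and for every edge $\{i,j\}$ of $G_{k_x}$ add the two implication clauses $(\overline{x^i}\vee x^j)$ and $(x^i\vee\overline{x^j})$. Each new variable then occurs once in an original clause plus $2d$ times in consistency clauses, i.e.\ at most $2d+1 = 29$ times, and the total number of clauses is $m := m_0 + d\sum_x k_x = (3d+1)m_0 = 43\,m_0$. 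For completeness, a satisfying assignment of the original formula, copied onto every $x^j$, satisfies all original and all consistency clauses, so in the \textsc{yes} case the optimum is $m$.

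The crux is soundness. Given any assignment $\tau$ of the new formula, let $\tau'$ round each original variable to the majority value among its copies, and write $p_x, q_x$ for the numbers of true and false copies of $x$. A bichromatic edge of $G_{k_x}$ forces exactly one of its two implication clauses to be false, so by the edge-expansion bound (expander mixing lemma) applied to the true/false cut, $\tau$ violates at least $\tfrac{d-\lambda}{2}\min(p_x,q_x)\ge \min(p_x,q_x)$ consistency clauses of $x$; summing over $x$, $\tau$ violates at least $\sum_x \min(p_x,q_x)$ consistency clauses in total. Separately, any original clause violated by $\tau'$ is either violated by $\tau$ already or contains a copy on which $\tau$ and $\tau'$ disagree, and the number of disagreeing copies is exactly $\sum_x \min(p_x,q_x)$. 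Hence the number of clauses of the new formula violated by $\tau$ is at least (original clauses violated by $\tau$) $+\ \sum_x \min(p_x,q_x)\ \ge$ (original clauses violated by $\tau'$), which in the \textsc{no} case is at least $\epsilon_0 m_0$. Therefore the optimum of the new formula is at most $m - \epsilon_0 m_0 = (1 - \tfrac{\epsilon_0}{43})m$; since $\epsilon_0$ may be taken just below $\tfrac18$ and $344 = 8\cdot 43$, every $\alpha_{\SAT} < \tfrac{1}{344}$ is attained, and the reduction runs in polynomial time, giving NP-hardness of \Gap{\textsc{Max-3SAT(29)}}$[m,(1-\alpha_{\SAT})m]$.

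The main obstacle is the constant bookkeeping rather than any conceptual difficulty: one must cite a PCP-based Gap-$\SAT$ hardness whose gap $\epsilon_0$ is large enough that $\epsilon_0/43 > \tfrac{1}{344}$ (hence $\epsilon_0$ essentially $\tfrac18$), confirm that degree-$14$ explicit expanders with $\lambda \le 12$ are available for all relevant sizes, and keep the mixing-lemma inequality clean for unbalanced true/false splits of the copy sets. A secondary nuisance is variables with too few occurrences to host a good expander, which I would handle by a uniform padding of occurrence counts with dummy clauses; since the lemma only demands ``at most three literals'' and ``at most $29$ occurrences,'' this slack is harmless.
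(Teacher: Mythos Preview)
The paper does not prove this lemma at all: it simply quotes the result from Vazirani's textbook \cite[pp.~314]{Vazirani01}, so there is no ``paper's own proof'' to compare against. Your sketch is essentially a reconstruction of the standard argument that appears in that reference---H{\aa}stad's $7/8$ hardness for \textsc{Max-3SAT} followed by the Papadimitriou--Yannakakis expander-replacement to bound occurrences---and the constant bookkeeping ($d=14$, $2d+1=29$, blowup factor $3d+1=43$, hence $344=8\cdot 43$) is correct.

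One small wording issue: you attribute the gap $\epsilon_0$ arbitrarily close to $\tfrac18$ to ``the PCP Theorem \cite{ALMSS98,AS98},'' but that theorem only yields \emph{some} positive constant; the $\tfrac18$ comes from H{\aa}stad's later optimal inapproximability result, which is what Vazirani actually invokes. Also, your expander requirement $\lambda\le 12$ is far weaker than Ramanujan (you only need $d-\lambda\ge 2$), so there is no need to worry about the sparse set of sizes for which LPS graphs exist---any explicit family with constant spectral gap, together with the padding you mention, suffices. With those cosmetic fixes the argument is sound.
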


\section{Our Contributions}

With formal problem statements in hand, we now formally state our results, starting with our main result for \textsc{$1 \times n$ Max-Placement Edge-Matching Puzzle}.

\begin{theorem} \label{thm:main}
For any nonnegative constant $\alpha_{\EMP} < \frac{1}{67038720}$, \Gap{\textsc{$1 \times n$ Max-Placement Edge-Matching Puzzle}}$[n, (1-\alpha_{\EMP})n]$ is NP-hard. In particular, it is NP-hard to approximate \textsc{$1 \times n$ Max-Placement Edge-Matching Puzzle} to within a factor of $\frac{67038719}{67038720}  + \delta$ of optimal for any sufficiently small constant $\delta > 0$.
\end{theorem}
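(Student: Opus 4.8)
The plan is to chain three gap-preserving reductions and to track the multiplicative loss in the gap at each step, so that the final constant lands below $\tfrac{1}{67038720}$. We begin from the gap hardness of \textsc{Max-3SAT(29)} (Lemma~\ref{lem-3sat-approx}), pass through gap hardness of \textsc{Max Vertex-Disjoint Path Cover($d$)} with $d=2$, and end at the claimed gap hardness for \textsc{$1 \times n$ Max-Placement Edge-Matching Puzzle}. The ``in particular'' sentence is then immediate from the definition of gap problems: for small $\delta>0$, a $\bigl(\tfrac{67038719}{67038720}+\delta\bigr)$-approximation algorithm would decide \Gap{\textsc{$1 \times n$ Max-Placement Edge-Matching Puzzle}}$[n,(1-\alpha_{\EMP})n]$ for any $\alpha_{\EMP}$ with $\tfrac{1}{67038720}-\delta<\alpha_{\EMP}<\tfrac{1}{67038720}$, which we will have shown to be NP-hard.

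\textbf{Step 1: \textsc{Max-3SAT} to path cover.} Given a \textsc{Max-3SAT(29)} instance $\varphi$ with $m$ clauses, I would run a quantitatively analyzed version of Plesn{\'i}k's reduction to build a directed graph $G=(V,E)$ with a unique source $s$, a unique sink $t$, and all in- and out-degrees at most $2$, with two properties: (i) if $\varphi$ is satisfiable then $G$ has a Hamiltonian $s$--$t$ path, hence a vertex-disjoint path cover with $|V|-1$ edges; and (ii) the robust converse---any vertex-disjoint path cover of $G$ decodes into a truth assignment for $\varphi$ in which each clause left unsatisfied can be charged to at least one extra path (equivalently, one missing edge) in the cover, with a bounded charging ratio. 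Since Plesn{\'i}k's gadgets have constant size, $|V|=\Theta(m)$, so (i)--(ii) give: if at most $(1-\alpha_{\SAT})m$ clauses of $\varphi$ are simultaneously satisfiable, then every vertex-disjoint path cover of $G$ has at most $(1-\alpha_{\text{PC}})|V|$ edges, with $\alpha_{\text{PC}}=\Theta(\alpha_{\SAT})$ and the explicit constant arranged so that $\alpha_{\SAT}<\tfrac{1}{344}$ yields $\alpha_{\text{PC}}<\tfrac{1}{1396640}$. Thus it is NP-hard to distinguish graphs with a Hamiltonian path ($|V|-1$ edges) from graphs whose every vertex-disjoint path cover has at most $(1-\alpha_{\text{PC}})|V|$ edges---crucially, the upper end of this gap is \emph{perfect}, i.e., Hamiltonian.

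\textbf{Step 2: path cover to edge-matching.} Next I would reduce $G$ to a family of $n=\Theta(|E|)=\Theta(|V|)$ square tiles. The design uses a rigid \emph{vertex gadget} for each $v\in V$ (a short chain of tiles held together by colors private to $v$, with two exposed end colors acting as ``ports'' that select which incident edge is used on the left and on the right), a rigid \emph{edge gadget} for each $e\in E$ (another short chain whose end colors match the corresponding ports of its two endpoints), and filler tiles, arranged so that when $G$ has a Hamiltonian $s$--$t$ path \emph{every} tile fits into one valid $1\times n$ strip: the path's vertex and edge gadgets placed in order, with the unused edge gadgets and filler absorbed at the ends. For the converse I would show that any placement of at least $(1-\alpha_{\EMP})n$ tiles can be decoded---by reading off which vertex and edge gadgets appear as contiguous, correctly oriented blocks---into a vertex-disjoint path cover of $G$ in which each unplaced or misplaced tile is charged to $O(1)$ edges; together with $n=\Theta(|V|)$, this forces a path cover with at least $(1-\alpha_{\text{PC}})|V|$ edges provided $\alpha_{\EMP}<\tfrac{1}{48}\cdot\tfrac{1}{1396640}=\tfrac{1}{67038720}$. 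Composing with Step~1 proves the theorem.

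\textbf{Expected main obstacle.} The real work is the gadget \emph{rigidity} in Step~2 (and, to a lesser degree, the quantitative charging analysis of Plesn{\'i}k's construction in Step~1). On a $1\times n$ board the top and bottom edges of every slot are unconstrained and tiles may be rotated, so a single tile effectively offers either of two unordered color pairs in either direction; ruling out unintended orientations and unintended splicings of gadgets---while keeping every gadget to a constant number of tiles so that the multiplicative blow-ups stay small enough to clear the $\tfrac{1}{67038720}$ threshold---will require a careful allocation of private colors and a deliberately asymmetric layout, and the gap analysis must pin down exactly how much an adversarial near-complete placement can gain by cutting gadgets in the wrong places.
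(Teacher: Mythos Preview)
Your proposal is correct and takes essentially the same route as the paper: the same two-step chain (\textsc{Max-3SAT(29)} $\to$ \textsc{Max Vertex-Disjoint Path Cover(2)} via Plesn{\'i}k $\to$ \textsc{$1\times n$ Max-Placement Edge-Matching Puzzle}), with the same loss factors $1/4060$ and $1/48$ yielding $1/(344\cdot 4060\cdot 48)=1/67038720$.

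The one place where your plan diverges from the paper is the technical mechanism in Step~2. You anticipate needing multi-tile rigid gadgets with private colors to kill unintended rotations and splicings. The paper instead uses a \emph{single} tile per vertex, a \emph{single} tile per edge, and one bridge tile; it does not try to forbid the bad configuration (two edge tiles meeting on a vertex color rather than the garbage color) but simply \emph{bounds} how often it can happen. The key observation is that if two edge tiles share a vertex color $I_v$ or $O_v$, then---because in- and out-degrees are at most~$2$---the vertex tile $\mathcal{T}_v$ must itself be absent or adjacent to a blank/border, so the number of such bad pairs is $O(\text{empty slots})$. This is precisely why the degree bound $d=2$ on the path-cover problem is imposed, and it is what makes the factor $48$ work with such simple gadgets. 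Your rigidity-based plan would presumably also succeed, but the paper's counting argument is both shorter and explains the otherwise mysterious degree constraint.
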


Although NP-hardness of \textsc{$h \times w$ Edge-Matching Puzzle} is known for every $h \geq 1$~\cite{EFW11}, we are not aware of any existing inapproximability result for edge-matching puzzles. Our proof for the above theorem also yields an analogous result for \textsc{$1 \times n$ Max-Placement Signed Edge-Matching Puzzle} as stated below.

\begin{corollary} \label{cor:signed}
For any nonnegative constant $\alpha_{\EMP} < \frac{1}{67038720}$, \Gap{\textsc{$1 \times n$ Max-Placement Signed Edge-Matching Puzzle}}$[n, (1-\alpha_{\EMP})n]$ is NP-hard. In particular, it is NP-hard to approximate \textsc{$1 \times n$ Max-Placement Signed Edge-Matching Puzzle} to within a factor of $\frac{67038719}{67038720}  + \delta$ of optimal for any sufficiently small constant $\delta > 0$.
\end{corollary}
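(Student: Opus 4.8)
The plan is to obtain Corollary~\ref{cor:signed} by a direct adaptation of the proof of Theorem~\ref{thm:main}, rather than by a separate reduction. The key observation is that the entire chain of reductions leading to Theorem~\ref{thm:main} --- from \textsc{Max-3SAT(29)} to \textsc{Max Vertex-Disjoint Path Cover($d$)} and then to \textsc{$1 \times n$ Max-Placement Edge-Matching Puzzle} --- produces edge-matching instances of a very special form, and such instances can be mechanically converted into signed instances with the same (or essentially the same) gap behavior. So first I would recall the standard gadget that turns an unsigned color $c$ on a tile edge into a signed pair: replace each color $c$ by two signed colors $c^+$ and $c^-$, and for each tile, give each edge one of the two signs in a \emph{globally consistent} way so that whenever two tiles are meant to be adjacent in the $1\times n$ board, the shared edge sees opposite signs. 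Concretely, since the board is $1\times n$, each tile in a candidate solution occupies a slot whose left neighbor and right neighbor are fixed; if the original construction already assigns each tile a ``left-facing'' edge color and a ``right-facing'' edge color (which it does, because the reduction encodes a path and the tiles are laid out left-to-right), then we set the right-facing edge to sign $+$ and the left-facing edge to sign $-$ on every tile, leaving the top and bottom edges with a fresh dummy color that never matches anything (or a flat/neutral label). Then two tiles match in the signed sense exactly when they matched in the unsigned sense, because the colors agree and the signs are automatically opposite.

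The second step is to check that this transformation is gap-preserving with the \emph{same} constants. In the ``yes'' (Hamiltonian) case, the perfect unsigned tiling of all $n$ tiles becomes a perfect signed tiling of all $n$ tiles, since every adjacency that matched colors now also has opposite signs; so the optimum is still $n$. In the ``no'' case, any signed placement of $k$ tiles with all adjacent pairs compatible is in particular an unsigned placement of $k$ compatible tiles (compatibility is strictly stronger in the signed setting), hence $k \le (1-\alpha_{\EMP})n$ by Theorem~\ref{thm:main}. Therefore \Gap{\textsc{$1 \times n$ Max-Placement Signed Edge-Matching Puzzle}}$[n,(1-\alpha_{\EMP})n]$ is NP-hard for every nonnegative $\alpha_{\EMP} < \frac{1}{67038720}$, and the inapproximability statement with factor $\frac{67038719}{67038720} + \delta$ follows immediately from the general principle that NP-hardness of \Gap{$P$}$[\beta,\gamma]$ implies NP-hardness of approximating $P$ within $\gamma/\beta$.

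The one point that needs care --- and which I expect to be the main obstacle --- is making sure the sign assignment is actually \emph{well-defined on tiles}, not on placements: a tile is an unordered/unrotated object whose edges are fixed before we know where it goes, so we cannot define the sign of an edge by reference to the slot it ends up in. Here I would exploit the structure of the specific reduction: in the construction behind Theorem~\ref{thm:main}, the tiles come in types whose ``role'' (which color is the incoming/left color and which is the outgoing/right color) is determined by the tile itself, and rotations are handled explicitly in that proof. One then assigns $-$ to the designated incoming edge and $+$ to the designated outgoing edge of each tile, with the top/bottom edges made inert. Under a rotation of a tile, the color/sign pattern changes consistently, and one verifies (exactly as rotations were handled in the unsigned proof) that the only rotation that can participate in a compatible placement is the intended one, so the sign bookkeeping never causes a spurious match or blocks an intended one. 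Thus the argument reduces to quoting the structure of the proof of Theorem~\ref{thm:main} plus this bounded amount of gadget verification; no constants change, which is why the corollary carries over verbatim.
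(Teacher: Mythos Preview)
Your overall strategy matches the paper's: restore the signs from the Section~\ref{sec:red-semp} construction and rerun the argument. Your observation that the ``no'' case is automatic---any signed-compatible placement is in particular an unsigned-compatible placement of the same tiles with signs forgotten, so the bound $(1-\alpha_{\EMP})n$ from Theorem~\ref{thm:main} applies directly---is correct and is in fact a cleaner justification than the paper's terse ``the rest of the proof remains unchanged.''

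There is, however, a concrete flaw in your proposed sign assignment. You suggest making the top/bottom edges of each tile ``inert'' (a fresh dummy color that never matches). But in the intended perfect tiling for the Hamiltonian case, the \emph{unused} edge tiles are rotated so that their garbage color $X$ faces left and right, and consecutive unused edge tiles match via these $X$--$X$ edges (and the bridge tile's right side matches the first of them via $X$ as well). If you replace those $X$'s with unmatchable dummies, the unused edge tiles can no longer be packed and the ``yes'' direction fails. Relatedly, your claim that ``the only rotation that can participate in a compatible placement is the intended one'' is false for exactly these tiles: they are deliberately placed in a different rotation than the path edge tiles. The fix is to use the sign pattern already given in Section~\ref{sec:red-semp}: vertex tiles all $+$, edge tiles $\mathcal{T}(-O_{v_i},+X,-I_{v_j},-X)$, bridge tile $\mathcal{T}(-O_t,+U_B,-X,+U_B)$. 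With that signing, the ``yes'' case is exactly the forward direction of Lemma~\ref{lem:exactsempred}, and your ``forget signs'' argument handles the ``no'' case with no further work.
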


In addition, we prove a similar inapproximability result for the variation considered in \cite{Antoniadis-Lingas-2010} as stated below. Note that this result also holds for its corresponding signed variation by simply applying the proofs of both corollaries.

\begin{corollary} \label{cor:mme}
For any nonnegative constant $\alpha_{\EMP} < \frac{1}{67038720}$, \Gap{\textsc{$1 \times n$ Max-Matched Edge-Matching Puzzle}}$[n, (1-\alpha_{\EMP})n]$ is NP-hard. In particular, it is NP-hard to approximate \textsc{$1 \times n$ Max-Matched Edge-Matching Puzzle} to within a factor of $\frac{67038719}{67038720}  + \delta$ of optimal for any sufficiently small constant $\delta > 0$.
\end{corollary}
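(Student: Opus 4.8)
The plan is to derive Corollary~\ref{cor:mme} from Theorem~\ref{thm:main} by an identity‑style, gap‑preserving reduction, with no new gadgetry. Since the \textsc{Max-Matched} objective as defined counts monochromatic shared edges, an instance of \textsc{$1 \times n$ Max-Placement Edge-Matching Puzzle}---after attaching an arbitrary sign (say $+$) to every edge---is simultaneously an instance of \textsc{$1 \times n$ Max-Matched Edge-Matching Puzzle}. So it suffices to compare the two optima on one fixed set of $n$ tiles.

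For \emph{completeness}: if the \textsc{Max-Placement} optimum equals $n$, then all $n$ tiles can be laid out in the board with every adjacent pair compatible; reading this layout as a \textsc{Max-Matched} solution (which must place all $n$ tiles) matches all $n-1$ interfaces, so the \textsc{Max-Matched} optimum is $n-1$, the maximum possible. For \emph{soundness}, I would prove the general inequality $\mathrm{opt}_{\mathrm{MM}} \le \mathrm{opt}_{\mathrm{MP}} - 1$: given any assembly of all $n$ tiles with $M$ matched interfaces (hence $n-1-M$ mismatched ones), delete, for each mismatched interface, the tile in the slot immediately to its right. The deleted slots are pairwise distinct, so at least $M+1$ tiles survive; and whenever two survivors remain in adjacent slots, the interface between them was not mismatched (else its right-hand tile would have been deleted), so those two tiles are compatible. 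Thus the $M+1$ survivors, left in their original slots with the rest blanked out, form a feasible \textsc{Max-Placement} solution, giving $\mathrm{opt}_{\mathrm{MP}} \ge M+1$.

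Combining the two parts: on a \textsc{yes} instance of \Gap{\textsc{$1 \times n$ Max-Placement Edge-Matching Puzzle}}$[n,(1-\alpha_{\EMP})n]$ the \textsc{Max-Matched} optimum is $n-1$, while on a \textsc{no} instance it is at most $(1-\alpha_{\EMP})n-1 \le (1-\alpha_{\EMP})(n-1)$ (using $\alpha_{\EMP}\ge 0$). By Theorem~\ref{thm:main} this gap problem is NP-hard for every $\alpha_{\EMP} < \tfrac{1}{67038720}$, so distinguishing \textsc{Max-Matched} optimum $n-1$ from optimum $\le (1-\alpha_{\EMP})(n-1)$ is NP-hard; this is the gap hardness asserted by the corollary, the high end of the gap ($n-1$ matched interfaces) being the maximum attainable objective value, with the cosmetic renormalization from $n-1$ to $n$ and the stray $\pm 1$ absorbed into $\delta$ / the slack in Theorem~\ref{thm:main}'s non-tight constant, and it yields the claimed inapproximability factor $\tfrac{67038719}{67038720}+\delta$.

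I do not anticipate a serious obstacle: the only points needing care are confirming that $n-1$ is genuinely the maximum number of matchable interfaces (so that the ``perfect'' end of the gap coincides with the underlying decision problem's maximum) and the trivial $\pm 1$/normalization bookkeeping. If one prefers to avoid that bit of slack, an alternative is to re-run the reduction behind Theorem~\ref{thm:main} and redo soundness directly, using that each maximal run of matched tiles is a contiguous partial placement and that the runs collectively induce a vertex-disjoint path cover of the underlying digraph with $O(\#\text{runs})$ paths, so that the gap hardness of \textsc{Max Vertex-Disjoint Path Cover($d$)} forces $\Omega(n)$ runs and hence $\Omega(n)$ mismatches. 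Finally, the signed variant mentioned after the corollary follows by first converting the tile set to an equivalent signed instance via the gadget of Corollary~\ref{cor:signed} and then applying the delete-one-tile-per-mismatch argument verbatim, since it nowhere uses that the colors are unsigned.
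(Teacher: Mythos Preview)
Your proposal is correct and takes essentially the same approach as the paper: both prove completeness by observing that a perfect placement yields $n-1$ matched interfaces, and both prove soundness by deleting, for each mismatched interface, the tile immediately to its right, thereby converting any \textsc{Max-Matched} layout with $M$ matches into a feasible \textsc{Max-Placement} layout with at least $M+1$ tiles. The paper states this as a contrapositive (``remove all tiles whose right edges are mismatched''), while you phrase it as the general inequality $\mathrm{opt}_{\mathrm{MM}}\le\mathrm{opt}_{\mathrm{MP}}-1$, but the underlying argument is identical.
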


As an intermediate step to showing the results above, we prove gap hardness for \textsc{Max Vertex-Disjoint Path Cover(2)}:

\begin{theorem} \label{thm:gapmvdpc}
For any nonnegative constant $\alpha_{\MPC} < \frac{1}{1396640}$, \Gap{\textsc{Max Vertex-Disjoint Path Cover(2)}}$[|V|-1, (1-\alpha_{\MPC})(|V|-1)]$ is NP-hard. In particular, it is NP-hard to approximate \textsc{Max Vertex-Disjoint Path Cover(2)} to within a factor of $\frac{1396639}{1396640}  + \delta$ of optimal for any sufficiently small constant $\delta > 0$.
\end{theorem}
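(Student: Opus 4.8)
The plan is to reduce from \Gap{\textsc{Max-3SAT(29)}} (Lemma~\ref{lem-3sat-approx}) via a gap-preserving adaptation of Plesn{\'i}k's reduction from 3SAT to Hamiltonian path on directed graphs of bounded degree. The key structural idea is that Plesn{\'i}k's reduction is already ``local'': it builds variable gadgets and clause gadgets of bounded size, linked by a bounded number of edges, so that a satisfying assignment corresponds to a Hamiltonian path and, conversely, any long vertex-disjoint path cover can be ``decoded'' into an assignment satisfying a correspondingly large number of clauses. So first I would present Plesn{\'i}k's construction in detail, tracking exactly (i) the number $|V|$ of vertices produced as a function of the number $m$ of clauses and the number $\nu$ of variables (which is $O(m)$ since each variable occurs in at most $29$ clauses), and (ii) the fact that every vertex has in- and out-degree at most $2$, which is why we target \textsc{Max Vertex-Disjoint Path Cover(2)} rather than $(d)$ for larger $d$ — this may require a small amount of massaging of the raw Plesn{\'i}k gadgets (splitting high-degree vertices into chains) while preserving both the Hamiltonicity correspondence and the degree bound, and ensuring the unique source/unique sink conditions in the definition of the problem.

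\medskip

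\noindent\textbf{Completeness} is the easy direction: if the formula is satisfiable, the standard argument shows the constructed graph has a Hamiltonian path from $s$ to $t$, hence a vertex-disjoint path cover with exactly $|V|-1$ edges. \textbf{Soundness} is where the real work lies. Here I would prove the contrapositive in quantitative form: given a vertex-disjoint path cover with at least $(1-\alpha_{\MPC})(|V|-1)$ edges, I want to extract a truth assignment satisfying at least $(1-\alpha_{\SAT})m$ clauses. The idea is that a path cover with $k$ edges uses $|V|-k$ paths, so it ``breaks'' the Hamiltonian route in only $|V|-1-k \le \alpha_{\MPC}(|V|-1)$ places; each break can spoil at most a constant number of clause gadgets (the constant being governed by how many clauses a single variable gadget touches and how the gadgets are chained). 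Away from the breaks, the path cover must traverse each variable gadget in one of the two ``canonical'' ways (true or false), since those are the only consistent local traversals; reading off these choices gives a partial assignment, extended arbitrarily on the $O(\alpha_{\MPC}|V|)$ variables whose gadgets were disrupted. One then argues that every clause whose gadget and all of whose incident variable gadgets avoid the break points is satisfied, so the number of unsatisfied clauses is $O(\alpha_{\MPC}|V|) = O(\alpha_{\MPC} m)$. Choosing $\alpha_{\MPC}$ small enough relative to $\alpha_{\SAT}$ and the hidden constants (this is where the explicit $\tfrac{1}{1396640}$ comes from: $|V|/m$ is some explicit constant times the Plesn{\'i}k blow-up, and $344$ enters from Lemma~\ref{lem-3sat-approx}) makes the number of unsatisfied clauses less than $\alpha_{\SAT} m$, contradicting the soundness side of \Gap{\textsc{Max-3SAT(29)}}.

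\medskip

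\noindent\textbf{The main obstacle} I anticipate is the soundness argument's bookkeeping: formalizing the claim that ``a path cover missing only $\epsilon|V|$ edges behaves canonically except near $\epsilon|V|$ localized faults, each fault spoiling $O(1)$ clauses.'' In Plesn{\'i}k's construction the gadgets are threaded together in a particular sequence and the clause gadgets are visited via ``exclusive-or'' subgadgets; I need to verify that a locally non-canonical traversal of a variable gadget either forces an endpoint of one of the cover's paths inside that gadget (hence counts against the $\alpha_{\MPC}(|V|-1)$ deficit budget) or fails to cover some vertex at all (impossible, since it is a path \emph{cover}). Making this rigorous requires a careful case analysis of how a union of vertex-disjoint paths can intersect a single gadget, which is the technical heart of the proof. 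Once that local lemma is in place, the counting and the final choice of constants are routine; the stated bound $\alpha_{\MPC} < \tfrac{1}{1396640}$ is simply the outcome of propagating $\alpha_{\SAT} < \tfrac{1}{344}$ through the explicit vertex-count ratio of the reduction.
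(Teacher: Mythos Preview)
Your proposal is correct and follows essentially the same route as the paper: a gap-preserving reduction from \Gap{\textsc{Max-3SAT(29)}} via Plesn{\'i}k's construction, with completeness by the standard Hamiltonian-path argument and soundness by counting path endpoints, charging each endpoint to at most $29$ clause territories, and bounding $|V|=O(m)$ to convert $\alpha_{\MPC}$ into $\alpha_{\SAT}$. The paper makes the constants explicit ($|V|\le 70m$, hence $\alpha_{\MPC}=\alpha_{\SAT}/4060$ and $1396640=344\cdot 4060$), and Plesn{\'i}k's gadgets already have in/out-degree at most~$2$, so no additional vertex-splitting is needed.
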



The outline for the rest of the paper is as follows. In Section~\ref{sec:exactemp}, we prove NP-hardness of \textsc{$1\times n$ Signed Edge-Matching Puzzle} by a reduction from \textsc{Hamiltonian Path}; this is not one of the main results described above, yet its simple proof provides good intuition for the subsequent inapproximability proofs. Next, in Section~\ref{sec:emp}, we prove the inapproximability results for $1 \times n$ edge-matching puzzles, Theorem~\ref{thm:main} and Corollaries~\ref{cor:signed}--\ref{cor:mme}, using a similar reduction from \textsc{Max Vertex-Disjoint Path Cover(2)}, an intermediate problem in achieving our results. We then prove the inapproximability result for \textsc{Max Vertex-Disjoint Path Cover(2)} in Section~\ref{sec:mvdpc}. Finally, we discuss further potential directions for research for the edge-matching problem in Section~\ref{sec:open}.

\section{NP-hardness of $1 \times n$ Signed Edge-Matching Puzzle} \label{sec:exactemp}

In this section we give a simple proof of NP-hardness of \textsc{$1 \times n$ Signed Edge-Matching Puzzle}. We provide a polynomial-time reduction from \textsc{Hamiltonian Path} to \textsc{$1 \times n$ Signed Edge-Matching Puzzle} in Section~\ref{sec:red-semp}. We then prove in Section~\ref{sec:red-semp-proof} that our reduction satisfies the following lemma:

\begin{lemma} \label{lem:exactsempred}
The polynomial-time reduction from \textsc{Hamiltonian Path} to \textsc{$1 \times n$ Signed Edge-Matching Puzzle} described in Section~\ref{sec:red-semp} is such that the \textsc{Hamiltonian Path} instance contains a Hamiltonian path if and only if the tiles of the constructed \textsc{$1 \times n$ Signed Edge-Matching Puzzle} instance can be assembled into an $1\times n$ grid without any mismatched edges.
\end{lemma}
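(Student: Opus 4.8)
The plan is to first fix a concrete reduction and then check both directions of the biconditional. Given a \textsc{Hamiltonian Path} instance $(G,s,t)$ with $G=(V,E)$, I would build a signed puzzle on $|V|+|E|+2$ tiles: one \emph{vertex tile} $T_v$ per vertex, one \emph{edge tile} $T_e$ per edge, and two \emph{terminator} tiles. Each vertex $v$ gets private colors $a_v$ (its ``in-port'') and $b_v$ (its ``out-port''); $T_v$ carries $(a_v,-)$ and $(b_v,+)$ on two opposite sides and fresh colors, appearing on no other side of any tile, on the other two. An edge tile $T_{(u,v)}$ carries $(b_u,-)$ and $(a_v,+)$ on one pair of opposite sides and a single shared ``filler'' color $F$, with opposite signs, on the other pair. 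For the sink $t$, whose out-port is never needed, I use $F$ in place of $b_t$. One terminator exposes $(a_s,+)$ on one side (the unique partner of $T_s$'s $(a_s,-)$, since $s$ has no in-edge), the other exposes $(F,+)$, and all their remaining sides are fresh. The intent is that, among non-filler sides, the only legal adjacencies are $T_u \mid T_{(u,v)} \mid T_v$ for a genuine edge $(u,v)\in E$; that the fresh colors force each vertex tile (and the active/parked dichotomy of each edge tile) into a single horizontal orientation and force the terminators into the two end slots; and that leftover edge tiles together with a terminator can only assemble as a ``parked'' $F$-chain at one end.

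For the ``only if'' direction, suppose $G$ has a Hamiltonian path $s=u_0,u_1,\dots,u_{|V|-1}=t$ using edges $e_i=(u_{i-1},u_i)$. I would exhibit the assembly
\[
  [\text{left term}],\; T_{u_0},\, T_{e_1},\, T_{u_1},\, T_{e_2},\, \dots,\, T_{e_{|V|-1}},\, T_{u_{|V|-1}},\; [\text{parked }T_e:\ e\notin\{e_1,\dots,e_{|V|-1}\}],\; [\text{right term}],
\]
with each edge tile in the first block in its $(b_{u_{i-1}},-)$--$(a_{u_i},+)$ orientation and each parked edge tile in its $F$-orientation, and then verify that all $|V|+|E|+1$ interior grid edges are same-color/opposite-sign pairs. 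This is a routine per-edge check.

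For the ``if'' direction, suppose the puzzle assembles with no mismatch. The fresh colors force the two terminators into the two end slots and fix the orientation of every vertex tile. Starting from (say) the left end, the terminator's $(a_s,+)$ side forces $T_s$ into slot $2$, and from there the port colors force a unique alternation $T_s, T_{e_1}, T_{w_1}, T_{e_2}, T_{w_2},\dots$ in which each $T_{e_i}$ is an actual edge and consecutive vertex tiles are its endpoints; since each vertex tile is used at most once, this traces a simple directed path in $G$ starting at $s$, and the chain can only leave ``vertex/edge world'' at $T_t$, the unique vertex tile with no out-port. No vertex tile can sit inside the parking section (it has no side colored $(F,-)$), so all $|V|$ vertex tiles lie on this traced chain, making it a Hamiltonian path ending at $t$.

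The step I expect to be the main obstacle is the interaction of the two ``global'' nuisances of a $1\times n$ board: the requirement that \emph{all} tiles be placed (not just those used by the path) and the fact that forcing arguments of the form ``the tile to my right must show color $c$'' break at the single rightmost slot. This is precisely what the $F$-filler mechanism and the two terminator tiles are meant to absorb, and the real content of the proof is a careful case analysis establishing that (i) no rotation or reflection ever exposes a fresh or port color in a way that creates a spurious legal adjacency, and (ii) the terminators and any unused edge tiles have nowhere to go except one contiguous parked block at an end, so that the forced alternation genuinely exhausts the vertex tiles rather than stalling partway.
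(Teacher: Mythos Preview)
Your overall strategy---vertex tiles carrying in/out port colors, edge tiles with matching ports plus a shared filler color for ``parking'' unused edges, and special tiles to anchor the ends---is exactly the paper's approach, and your forward/backward arguments follow the same outline. However, the lemma is about the specific reduction of Section~\ref{sec:red-semp}, and yours is not that reduction: the paper uses a \emph{single} bridge tile $\mathcal{T}_B=\mathcal{T}(-O_t,+U_B,-X,+U_B)$ and leaves every vertex tile, including $\mathcal{T}_t$, unmodified. It exploits the hypothesis that $s$ is a source, so the color $+I_s$ has no partner anywhere and $\mathcal{T}_s$ is automatically forced against a board border---no left terminator is needed. The one bridge tile then simultaneously closes the path section (via $-O_t$) and opens the parked block (via $-X$), and its own $-X$ side can sit against the right border when there are no leftover edges.

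Your two-terminator variant would also work, but as written it has a sign error. Your $T_t$ exposes $(F,+)$ on its out-port side, and any chain of parked edge tiles in their $F$-orientation must read $(F,-)(F,+)\mid(F,-)(F,+)\mid\cdots$ (opposite $F$-signs are on opposite sides of each edge tile, so all parked tiles are forced into the same orientation), hence the parked block always presents $(F,+)$ on its right. Your right terminator must therefore expose $(F,-)$, not $(F,+)$; otherwise the final interior edge is a same-sign mismatch, and in the corner case $|E|=|V|-1$ (no edges to park) the forward direction fails outright with $T_t$'s $(F,+)$ abutting the terminator's $(F,+)$.
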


This lemma and NP-hardness of \textsc{Hamiltonian Path} imply our desired hardness result:

\begin{corollary} \label{lem:exactsemp}
\textsc{$1 \times n$ Signed Edge-Matching Puzzle} is NP-hard.
\end{corollary}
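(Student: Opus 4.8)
The corollary is immediate from Lemma~\ref{lem:exactsempred} together with the NP-hardness of \textsc{Hamiltonian Path}, so the real content is the design of the polynomial-time reduction promised there: from $(G,s,t)$ with $G=(V,E)$ we must output signed tiles that admit a mismatch-free $1\times n'$ assembly exactly when $G$ has a Hamiltonian $s$--$t$ path. The plan is to make any mismatch-free assembly be forced to spell out such a path, read left to right as an alternating sequence of ``vertex'' tiles and ``edge'' tiles.

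One clean way to realize this is with three families of tiles. For each vertex $v$ a \emph{vertex tile} $V_v=\mathcal{T}(+a_v,\ \bot_1,\ -b_v,\ \bot_2)$, encoding ``entered $v$'' on its left edge and ``leaving $v$'' on its right edge; for each edge $(u,v)\in E$ an \emph{edge tile} $E_{(u,v)}=\mathcal{T}(+b_u,\ \cdot,\ -a_v,\ \cdot)$, whose left edge $+b_u$ matches only $V_u$'s right edge and whose right edge $-a_v$ matches only $V_v$'s left edge; a supply of interchangeable \emph{filler tiles} $P=\mathcal{T}(+\pi,\ \bot_1,\ -\pi,\ \bot_2)$, which chain freely among themselves; and finally two distinguished tiles $S^{\star},T^{\star}$ carrying fresh colors $\sigma,\tau$ on one side (so each is forced to an end of the row) and $a_s$, respectively $\pi$, on the other side, with the twist $b_t:=\pi$, so that $S^{\star}$ must be followed by $V_s$ and $V_t$ must be followed by filler and eventually $T^{\star}$. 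Here $\bot_1,\bot_2$ are ``neutral'' colors used on no left/right edge, and I impose the global convention that every left edge carries sign $+$ and every right edge sign $-$. The edge tiles are the one family I give $\pi$ on their top and bottom edges, so that a $90^{\circ}$-rotated edge tile presents exactly $+\pi,-\pi$ on its horizontal edges and is indistinguishable from a filler tile; this is the device that lets the $|E|-(|V|-1)$ \emph{unused} edge tiles of a Hamiltonian solution be parked, rotated, inside the filler block.

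For completeness, given a Hamiltonian path $s=x_1,\dots,x_{|V|}=t$ one lays out $S^{\star},V_{x_1},E_{(x_1,x_2)},V_{x_2},\dots,E_{(x_{|V|-1},x_{|V|})},V_{x_{|V|}}$, then all filler tiles and all unused edge tiles (rotated $90^{\circ}$), then $T^{\star}$; every horizontal seam has equal colors and opposite signs, and on a $1\times n'$ board all top/bottom edges are exposed so the neutral colors constrain nothing. For soundness I would argue in three steps: (i) the colors $\bot_1,\bot_2$, occurring on no left/right edge, rule out every $90^{\circ}$/$270^{\circ}$ rotation except of edge tiles, and a rotated edge tile can only behave as filler; (ii) the uniform $+$-left/$-$-right convention forces all unrotated tiles into one orientation class --- all ``forward'', or all rotated $180^{\circ}$, the latter being the left--right mirror of a forward assembly --- since otherwise two neighbors would show equal signs at their shared edge; (iii) in the forward case the whole row is a single forced chain: $S^{\star}$ is leftmost and forces $V_s$; a vertex tile $V_v$ must be followed by an edge tile out of $v$ unless $b_v=\pi$, i.e.\ unless $v=t$; an edge tile $E_{(u,v)}$ is preceded by $V_u$ and followed by $V_v$; and the chain can escape into the filler region only immediately after $V_t$ and must terminate at $T^{\star}$. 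Since the row is connected, every vertex tile lies on this chain, so the vertices it visits, in order, form a Hamiltonian $s$--$t$ path of $G$. This is exactly the biconditional asserted in Lemma~\ref{lem:exactsempred}, and the corollary follows.

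The part I expect to be the main obstacle is precisely the soundness analysis combined with the ``place \emph{all} $n'$ tiles'' requirement of the exact decision problem: because a Hamiltonian path uses only $|V|-1$ of the $|E|$ edges, the leftover edge tiles must have a completely harmless home, and essentially the entire color-and-sign scheme (the filler color $\pi$, the trick $b_t=\pi$ pinning $V_t$ to the filler block, the fresh end markers $\sigma,\tau$, and the rotation behavior of edge tiles) exists to make those unused tiles inert without opening any loophole that would let a non-Hamiltonian instance be assembled. Each individual adjacency or rotation check is elementary; the care is in arranging the colors so that all of them hold at once.
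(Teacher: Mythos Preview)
Your reduction is correct and follows the same plan as the paper's: one tile per vertex, one per edge, a ``garbage'' color (your $\pi$, the paper's $X$) that lets unused edge tiles be rotated $90^\circ$ and parked harmlessly, and a soundness argument that any mismatch-free row spells out a Hamiltonian path followed by that garbage block. The paper's construction is a bit more economical than yours: in place of filler tiles plus two end markers $S^\star,T^\star$, it uses a single \emph{bridge} tile $\mathcal{T}_B=\mathcal{T}(-O_t,+U_B,-X,+U_B)$ whose unmatchable top/bottom color fixes its orientation and whose two horizontal sides separate the path block (ending at $\mathcal{T}_{v_t}$) from the garbage block, and it pins the far end by exploiting that $s$ is a source (so $+I_s$ is unmatched) rather than introducing an explicit $S^\star$. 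The paper also gives each vertex tile its \emph{own} unmatchable top/bottom color $U_{v_i}$ instead of your shared $\bot_1,\bot_2$; this makes the ``no $90^\circ$/$270^\circ$ rotation'' step immediate and sidesteps the one loose end in your write-up --- you never assign signs to $\bot_1,\bot_2$, and step~(i) of your soundness argument needs them chosen (e.g.\ all with sign~$+$) so that two rotated vertex/filler tiles cannot meet on a $\bot$-seam.
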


\subsection{Construction of \textsc{$1 \times n$ Signed Edge-Matching Puzzle} Instance from \textsc{Hamiltonian Path} Instance} \label{sec:red-semp}

The overall strategy of the reduction from \textsc{Hamiltonian Path} to \textsc{$1 \times n$ Signed Edge-Matching Puzzle} is to represent each vertex and each edge by tiles, encoding the path in the ordering of the tiles. The leftover edge tiles will be packed at the end of the tiling. The path and the leftover edges in our tiling are then joined by a bridge tile.

\begin{figure}
	\centering
	\begin{subfigure}[c]{\textwidth}
		\centering
		\begin{tikzpicture}

\matrix[nodes={draw}, row sep=5mm, column sep=5mm] {
 \node[circle] (V1) {$v_1$}; &  & \node[circle] (V2) {$v_2$}; & \\
 & \node[circle] (V3) {$v_3$}; \\
 \node[circle] (V4) {$v_4$}; &  & \node[circle] (V5) {$v_5$}; & \\
};

\path (V1) edge[-latex] (V2);
\path (V1) edge[-latex,ultra thick] (V4);
\path (V2) edge[-latex,ultra thick] (V5);
\path (V3) edge[-latex,ultra thick] (V2);
\path (V3) edge[-latex] (V5);
\path (V4) edge[-latex,ultra thick] (V3);

\end{tikzpicture}
		\caption{An example of a \textsc{Hamiltonian Path} instance, a directed graph $G$ with source $s=v_1$ and sink $t=v_5$. There exists a Hamiltonian path from $s$ to $t$, shown with thick edges.}
		\label{fig:graph2}
		\vspace{12pt}
	\end{subfigure}
	\begin{subfigure}[c]{\textwidth}
		\centering
		\scalebox{0.8}{\begin{tikzpicture}
\node [anchor=north,fill=white] at (0.8, 8.32) {\large $\mathcal{T}_{v_1}$};
\node (V1) at (0.8, 9.12) [rectangle,draw,fill=gray!30,thick,minimum width=1.6cm, minimum height=1.6cm] {};
\draw [color=black] (0.0, 8.32) -- (1.6, 9.92);
\draw [color=black] (0.0, 9.92) -- (1.6, 8.32);
\node [color=black] at (0.352, 9.12) {$\scriptscriptstyle +I_{v_1}$};
\node [color=black] at (1.248, 9.12) {$\scriptscriptstyle +O_{v_1}$};
\node [color=black] at (0.8, 8.608) {$\scriptscriptstyle +U_{v_1}$};
\node [color=black] at (0.8, 9.632) {$\scriptscriptstyle +U_{v_1}$};
\node [anchor=north,fill=white] at (0.8, 4.16) {\large $\mathcal{T}_{e_{14}}$};
\node (E14) at (0.8, 4.96) [rectangle,draw,fill=white,thick,minimum width=1.6cm, minimum height=1.6cm] {};
\draw [color=black] (0.0, 4.16) -- (1.6, 5.76);
\draw [color=black] (0.0, 5.76) -- (1.6, 4.16);
\node [color=black] at (0.352, 4.96) {$\scriptscriptstyle -O_{v_1}$};
\node [color=black] at (1.248, 4.96) {$\scriptscriptstyle -I_{v_4}$};
\node [color=black] at (0.8, 4.448) {$\scriptscriptstyle -X$};
\node [color=black] at (0.8, 5.472) {$\scriptscriptstyle +X$};
\node [anchor=north,fill=white] at (0.8, 0.0) {\large $\mathcal{T}_{v_4}$};
\node (V4) at (0.8, 0.8) [rectangle,draw,fill=gray!30,thick,minimum width=1.6cm, minimum height=1.6cm] {};
\draw [color=black] (0.0, 0.0) -- (1.6, 1.6);
\draw [color=black] (0.0, 1.6) -- (1.6, 0.0);
\node [color=black] at (0.352, 0.8) {$\scriptscriptstyle +I_{v_4}$};
\node [color=black] at (1.248, 0.8) {$\scriptscriptstyle +O_{v_4}$};
\node [color=black] at (0.8, 0.288) {$\scriptscriptstyle +U_{v_4}$};
\node [color=black] at (0.8, 1.312) {$\scriptscriptstyle +U_{v_4}$};
\node [anchor=north,fill=white] at (2.88, 2.08) {\large $\mathcal{T}_{e_{43}}$};
\node (E43) at (2.88, 2.88) [rectangle,draw,fill=white,thick,minimum width=1.6cm, minimum height=1.6cm] {};
\draw [color=black] (2.08, 2.08) -- (3.68, 3.68);
\draw [color=black] (2.08, 3.68) -- (3.68, 2.08);
\node [color=black] at (2.432, 2.88) {$\scriptscriptstyle -O_{v_4}$};
\node [color=black] at (3.328, 2.88) {$\scriptscriptstyle -I_{v_3}$};
\node [color=black] at (2.88, 2.368) {$\scriptscriptstyle -X$};
\node [color=black] at (2.88, 3.392) {$\scriptscriptstyle +X$};
\node [anchor=north,fill=white] at (4.96, 4.16) {\large $\mathcal{T}_{v_3}$};
\node (V3) at (4.96, 4.96) [rectangle,draw,fill=gray!30,thick,minimum width=1.6cm, minimum height=1.6cm] {};
\draw [color=black] (4.16, 4.16) -- (5.76, 5.76);
\draw [color=black] (4.16, 5.76) -- (5.76, 4.16);
\node [color=black] at (4.512, 4.96) {$\scriptscriptstyle +I_{v_3}$};
\node [color=black] at (5.408, 4.96) {$\scriptscriptstyle +O_{v_3}$};
\node [color=black] at (4.96, 4.448) {$\scriptscriptstyle +U_{v_3}$};
\node [color=black] at (4.96, 5.472) {$\scriptscriptstyle +U_{v_3}$};
\node [anchor=north,fill=white] at (4.96, 8.32) {\large $\mathcal{T}_{e_{12}}$};
\node (E12) at (4.96, 9.12) [rectangle,draw,fill=white,thick,minimum width=1.6cm, minimum height=1.6cm] {};
\draw [color=black] (4.16, 8.32) -- (5.76, 9.92);
\draw [color=black] (4.16, 9.92) -- (5.76, 8.32);
\node [color=black] at (4.512, 9.12) {$\scriptscriptstyle -O_{v_1}$};
\node [color=black] at (5.408, 9.12) {$\scriptscriptstyle -I_{v_2}$};
\node [color=black] at (4.96, 8.608) {$\scriptscriptstyle -X$};
\node [color=black] at (4.96, 9.632) {$\scriptscriptstyle +X$};
\node [anchor=north,fill=white] at (7.04, 6.24) {\large $\mathcal{T}_{e_{32}}$};
\node (E32) at (7.04, 7.04) [rectangle,draw,fill=white,thick,minimum width=1.6cm, minimum height=1.6cm] {};
\draw [color=black] (6.24, 6.24) -- (7.84, 7.84);
\draw [color=black] (6.24, 7.84) -- (7.84, 6.24);
\node [color=black] at (6.592, 7.04) {$\scriptscriptstyle -O_{v_3}$};
\node [color=black] at (7.488, 7.04) {$\scriptscriptstyle -I_{v_2}$};
\node [color=black] at (7.04, 6.528) {$\scriptscriptstyle -X$};
\node [color=black] at (7.04, 7.552) {$\scriptscriptstyle +X$};
\node [anchor=north,fill=white] at (7.04, 2.08) {\large $\mathcal{T}_{e_{35}}$};
\node (E35) at (7.04, 2.88) [rectangle,draw,fill=white,thick,minimum width=1.6cm, minimum height=1.6cm] {};
\draw [color=black] (6.24, 2.08) -- (7.84, 3.68);
\draw [color=black] (6.24, 3.68) -- (7.84, 2.08);
\node [color=black] at (6.592, 2.88) {$\scriptscriptstyle -O_{v_3}$};
\node [color=black] at (7.488, 2.88) {$\scriptscriptstyle -I_{v_5}$};
\node [color=black] at (7.04, 2.368) {$\scriptscriptstyle -X$};
\node [color=black] at (7.04, 3.392) {$\scriptscriptstyle +X$};
\node [anchor=north,fill=white] at (9.12, 8.32) {\large $\mathcal{T}_{v_2}$};
\node (V2) at (9.12, 9.12) [rectangle,draw,fill=gray!30,thick,minimum width=1.6cm, minimum height=1.6cm] {};
\draw [color=black] (8.32, 8.32) -- (9.92, 9.92);
\draw [color=black] (8.32, 9.92) -- (9.92, 8.32);
\node [color=black] at (8.672, 9.12) {$\scriptscriptstyle +I_{v_2}$};
\node [color=black] at (9.568, 9.12) {$\scriptscriptstyle +O_{v_2}$};
\node [color=black] at (9.12, 8.608) {$\scriptscriptstyle +U_{v_2}$};
\node [color=black] at (9.12, 9.632) {$\scriptscriptstyle +U_{v_2}$};
\node [anchor=north,fill=white] at (9.12, 4.16) {\large $\mathcal{T}_{e_{25}}$};
\node (E25) at (9.12, 4.96) [rectangle,draw,fill=white,thick,minimum width=1.6cm, minimum height=1.6cm] {};
\draw [color=black] (8.32, 4.16) -- (9.92, 5.76);
\draw [color=black] (8.32, 5.76) -- (9.92, 4.16);
\node [color=black] at (8.672, 4.96) {$\scriptscriptstyle -O_{v_2}$};
\node [color=black] at (9.568, 4.96) {$\scriptscriptstyle -I_{v_5}$};
\node [color=black] at (9.12, 4.448) {$\scriptscriptstyle -X$};
\node [color=black] at (9.12, 5.472) {$\scriptscriptstyle +X$};
\node [anchor=north,fill=white] at (9.12, 0.0) {\large $\mathcal{T}_{v_5}$};
\node (V5) at (9.12, 0.8) [rectangle,draw,fill=gray!30,thick,minimum width=1.6cm, minimum height=1.6cm] {};
\draw [color=black] (8.32, 0.0) -- (9.92, 1.6);
\draw [color=black] (8.32, 1.6) -- (9.92, 0.0);
\node [color=black] at (8.672, 0.8) {$\scriptscriptstyle +I_{v_5}$};
\node [color=black] at (9.568, 0.8) {$\scriptscriptstyle +O_{v_5}$};
\node [color=black] at (9.12, 0.288) {$\scriptscriptstyle +U_{v_5}$};
\node [color=black] at (9.12, 1.312) {$\scriptscriptstyle +U_{v_5}$};
\node [anchor=north,fill=white] at (12.24, 4.16) {\large $\mathcal{T}_B$};
\node (TB) at (12.24, 4.96) [rectangle,draw,fill=gray!60,thick,minimum width=1.6cm, minimum height=1.6cm] {};
\draw [color=black] (11.44, 4.16) -- (13.04, 5.76);
\draw [color=black] (11.44, 5.76) -- (13.04, 4.16);
\node [color=black] at (11.792, 4.96) {$\scriptscriptstyle -O_{v_5}$};
\node [color=black] at (12.688, 4.96) {$\scriptscriptstyle -X$};
\node [color=black] at (12.24, 4.448) {$\scriptscriptstyle +U_B$};
\node [color=black] at (12.24, 5.472) {$\scriptscriptstyle +U_B$};
\begin{scope}[on background layer]
\path (V1) edge[-latex,thick,shorten <=5pt,shorten >=5pt] (V2);
\path (V1) edge[-latex,line width=2pt,shorten <=5pt,shorten >=5pt] (V4);
\path (V2) edge[-latex,line width=2pt,shorten <=5pt,shorten >=5pt] (V5);
\path (V3) edge[-latex,line width=2pt,shorten <=5pt,shorten >=5pt] (V2);
\path (V3) edge[-latex,thick,shorten <=5pt,shorten >=5pt] (V5);
\path (V4) edge[-latex,line width=2pt,shorten <=5pt,shorten >=5pt] (V3);
\end{scope}
\end{tikzpicture}}
		\caption{The constructed \textsc{$1 \times n$ Signed Edge-Matching Puzzle} instance (with $n=12$). On the graph structure are the vertex tiles (light gray) and the edge tiles (white). The only bridge tile (dark gray) is given on the right.}
		\label{fig:graph-overlaid2}
		\vspace{12pt}
	\end{subfigure}
	\begin{subfigure}[c]{\textwidth}
		\centering
		\scalebox{0.8}{\begin{tikzpicture}
\draw (0.0,0.0) rectangle (16.0,1.6);
\node [anchor=north,fill=white] at (0.8, 0.0) {\large $\mathcal{T}_{v_1}$};
\node (V1) at (0.8, 0.8) [rectangle,draw,fill=gray!30,thick,minimum width=1.6cm, minimum height=1.6cm] {};
\draw [color=black] (0.0, 0.0) -- (1.6, 1.6);
\draw [color=black] (0.0, 1.6) -- (1.6, 0.0);
\node [color=black] at (0.352, 0.8) {$\scriptscriptstyle +I_{v_1}$};
\node [color=black] at (1.248, 0.8) {$\scriptscriptstyle +O_{v_1}$};
\node [color=black] at (0.8, 0.288) {$\scriptscriptstyle +U_{v_1}$};
\node [color=black] at (0.8, 1.312) {$\scriptscriptstyle +U_{v_1}$};
\node [anchor=north,fill=white] at (2.4, 0.0) {\large $\mathcal{T}_{e_{14}}$};
\node (E14) at (2.4, 0.8) [rectangle,draw,fill=white,thick,minimum width=1.6cm, minimum height=1.6cm] {};
\draw [color=black] (1.6, 0.0) -- (3.2, 1.6);
\draw [color=black] (1.6, 1.6) -- (3.2, 0.0);
\node [color=black] at (1.952, 0.8) {$\scriptscriptstyle -O_{v_1}$};
\node [color=black] at (2.848, 0.8) {$\scriptscriptstyle -I_{v_4}$};
\node [color=black] at (2.4, 0.288) {$\scriptscriptstyle -X$};
\node [color=black] at (2.4, 1.312) {$\scriptscriptstyle +X$};
\node [anchor=north,fill=white] at (4.0, 0.0) {\large $\mathcal{T}_{v_4}$};
\node (V4) at (4.0, 0.8) [rectangle,draw,fill=gray!30,thick,minimum width=1.6cm, minimum height=1.6cm] {};
\draw [color=black] (3.2, 0.0) -- (4.8, 1.6);
\draw [color=black] (3.2, 1.6) -- (4.8, 0.0);
\node [color=black] at (3.552, 0.8) {$\scriptscriptstyle +I_{v_4}$};
\node [color=black] at (4.448, 0.8) {$\scriptscriptstyle +O_{v_4}$};
\node [color=black] at (4.0, 0.288) {$\scriptscriptstyle +U_{v_4}$};
\node [color=black] at (4.0, 1.312) {$\scriptscriptstyle +U_{v_4}$};
\node [anchor=north,fill=white] at (5.6, 0.0) {\large $\mathcal{T}_{e_{43}}$};
\node (E43) at (5.6, 0.8) [rectangle,draw,fill=white,thick,minimum width=1.6cm, minimum height=1.6cm] {};
\draw [color=black] (4.8, 0.0) -- (6.4, 1.6);
\draw [color=black] (4.8, 1.6) -- (6.4, 0.0);
\node [color=black] at (5.152, 0.8) {$\scriptscriptstyle -O_{v_4}$};
\node [color=black] at (6.048, 0.8) {$\scriptscriptstyle -I_{v_3}$};
\node [color=black] at (5.6, 0.288) {$\scriptscriptstyle -X$};
\node [color=black] at (5.6, 1.312) {$\scriptscriptstyle +X$};
\node [anchor=north,fill=white] at (7.2, 0.0) {\large $\mathcal{T}_{v_3}$};
\node (V3) at (7.2, 0.8) [rectangle,draw,fill=gray!30,thick,minimum width=1.6cm, minimum height=1.6cm] {};
\draw [color=black] (6.4, 0.0) -- (8.0, 1.6);
\draw [color=black] (6.4, 1.6) -- (8.0, 0.0);
\node [color=black] at (6.752, 0.8) {$\scriptscriptstyle +I_{v_3}$};
\node [color=black] at (7.648, 0.8) {$\scriptscriptstyle +O_{v_3}$};
\node [color=black] at (7.2, 0.288) {$\scriptscriptstyle +U_{v_3}$};
\node [color=black] at (7.2, 1.312) {$\scriptscriptstyle +U_{v_3}$};
\node [anchor=north,fill=white] at (8.8, 0.0) {\large $\mathcal{T}_{e_{32}}$};
\node (E32) at (8.8, 0.8) [rectangle,draw,fill=white,thick,minimum width=1.6cm, minimum height=1.6cm] {};
\draw [color=black] (8.0, 0.0) -- (9.6, 1.6);
\draw [color=black] (8.0, 1.6) -- (9.6, 0.0);
\node [color=black] at (8.352, 0.8) {$\scriptscriptstyle -O_{v_3}$};
\node [color=black] at (9.248, 0.8) {$\scriptscriptstyle -I_{v_2}$};
\node [color=black] at (8.8, 0.288) {$\scriptscriptstyle -X$};
\node [color=black] at (8.8, 1.312) {$\scriptscriptstyle +X$};
\node [anchor=north,fill=white] at (10.4, 0.0) {\large $\mathcal{T}_{v_2}$};
\node (V2) at (10.4, 0.8) [rectangle,draw,fill=gray!30,thick,minimum width=1.6cm, minimum height=1.6cm] {};
\draw [color=black] (9.6, 0.0) -- (11.2, 1.6);
\draw [color=black] (9.6, 1.6) -- (11.2, 0.0);
\node [color=black] at (9.952, 0.8) {$\scriptscriptstyle +I_{v_2}$};
\node [color=black] at (10.848, 0.8) {$\scriptscriptstyle +O_{v_2}$};
\node [color=black] at (10.4, 0.288) {$\scriptscriptstyle +U_{v_2}$};
\node [color=black] at (10.4, 1.312) {$\scriptscriptstyle +U_{v_2}$};
\node [anchor=north,fill=white] at (12.0, 0.0) {\large $\mathcal{T}_{e_{25}}$};
\node (E25) at (12.0, 0.8) [rectangle,draw,fill=white,thick,minimum width=1.6cm, minimum height=1.6cm] {};
\draw [color=black] (11.2, 0.0) -- (12.8, 1.6);
\draw [color=black] (11.2, 1.6) -- (12.8, 0.0);
\node [color=black] at (11.552, 0.8) {$\scriptscriptstyle -O_{v_2}$};
\node [color=black] at (12.448, 0.8) {$\scriptscriptstyle -I_{v_5}$};
\node [color=black] at (12.0, 0.288) {$\scriptscriptstyle -X$};
\node [color=black] at (12.0, 1.312) {$\scriptscriptstyle +X$};
\node [anchor=north,fill=white] at (13.6, 0.0) {\large $\mathcal{T}_{v_5}$};
\node (V5) at (13.6, 0.8) [rectangle,draw,fill=gray!30,thick,minimum width=1.6cm, minimum height=1.6cm] {};
\draw [color=black] (12.8, 0.0) -- (14.4, 1.6);
\draw [color=black] (12.8, 1.6) -- (14.4, 0.0);
\node [color=black] at (13.152, 0.8) {$\scriptscriptstyle +I_{v_5}$};
\node [color=black] at (14.048, 0.8) {$\scriptscriptstyle +O_{v_5}$};
\node [color=black] at (13.6, 0.288) {$\scriptscriptstyle +U_{v_5}$};
\node [color=black] at (13.6, 1.312) {$\scriptscriptstyle +U_{v_5}$};
\node [anchor=north,fill=white] at (15.2, 0.0) {\large $\mathcal{T}_B$};
\node (TB) at (15.2, 0.8) [rectangle,draw,fill=gray!60,thick,minimum width=1.6cm, minimum height=1.6cm] {};
\draw [color=black] (14.4, 0.0) -- (16.0, 1.6);
\draw [color=black] (14.4, 1.6) -- (16.0, 0.0);
\node [color=black] at (14.752, 0.8) {$\scriptscriptstyle -O_{v_5}$};
\node [color=black] at (15.648, 0.8) {$\scriptscriptstyle -X$};
\node [color=black] at (15.2, 0.288) {$\scriptscriptstyle +U_B$};
\node [color=black] at (15.2, 1.312) {$\scriptscriptstyle +U_B$};
\node [anchor=north,fill=white] at (16.8, 0.0) {\large $\mathcal{T}_{e_{12}}$};
\node (E12) at (16.8, 0.8) [rectangle,draw,fill=white,thick,minimum width=1.6cm, minimum height=1.6cm] {};
\draw [color=black] (16.0, 0.0) -- (17.6, 1.6);
\draw [color=black] (16.0, 1.6) -- (17.6, 0.0);
\node [color=black] at (16.352, 0.8) {$\scriptscriptstyle +X$};
\node [color=black] at (17.248, 0.8) {$\scriptscriptstyle -X$};
\node [color=black] at (16.8, 0.288) {$\scriptscriptstyle -O_{v_1}$};
\node [color=black] at (16.8, 1.312) {$\scriptscriptstyle -I_{v_2}$};
\node [anchor=north,fill=white] at (18.4, 0.0) {\large $\mathcal{T}_{e_{35}}$};
\node (E35) at (18.4, 0.8) [rectangle,draw,fill=white,thick,minimum width=1.6cm, minimum height=1.6cm] {};
\draw [color=black] (17.6, 0.0) -- (19.2, 1.6);
\draw [color=black] (17.6, 1.6) -- (19.2, 0.0);
\node [color=black] at (17.952, 0.8) {$\scriptscriptstyle +X$};
\node [color=black] at (18.848, 0.8) {$\scriptscriptstyle -X$};
\node [color=black] at (18.4, 0.288) {$\scriptscriptstyle -O_{v_3}$};
\node [color=black] at (18.4, 1.312) {$\scriptscriptstyle -I_{v_5}$};
\draw[decoration={brace,raise=5pt},decorate] (14.4,1.6) -- node[above=6pt] {\large bridge tile} (16.0,1.6);
\draw[decoration={brace,mirror,raise=18pt},decorate] (0.0,0.0) -- node[below=20pt] {\large Hamiltonian path} (14.4,0.0);
\draw[decoration={brace,mirror,raise=18pt},decorate] (16.0,0.0) -- node[below=20pt] {\large unused edges} (19.2,0.0);
\end{tikzpicture}}
		\caption{The tiling constructed from the Hamiltonian path above. The path is represented in the first nine slots; observe that each slot contains a tile along the path in the above figure. The tenth slot contains the bridge tile, whereas the last two are filled with unused edge tiles sharing garbage-colored edges.}
		\label{fig:tiling2}
	\end{subfigure}

	\caption{An example of the reduction described in Section~\ref{sec:red-semp}, together with a Hamiltonian path of the \textsc{Hamiltonian Path} instance, and the corresponding tiling without any mismatched edges in the constructed \textsc{$1\times n$ Signed Edge-Matching Puzzle} instance.}
	\label{fig:reduction}
\end{figure}
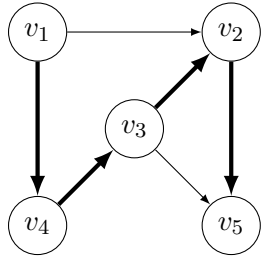
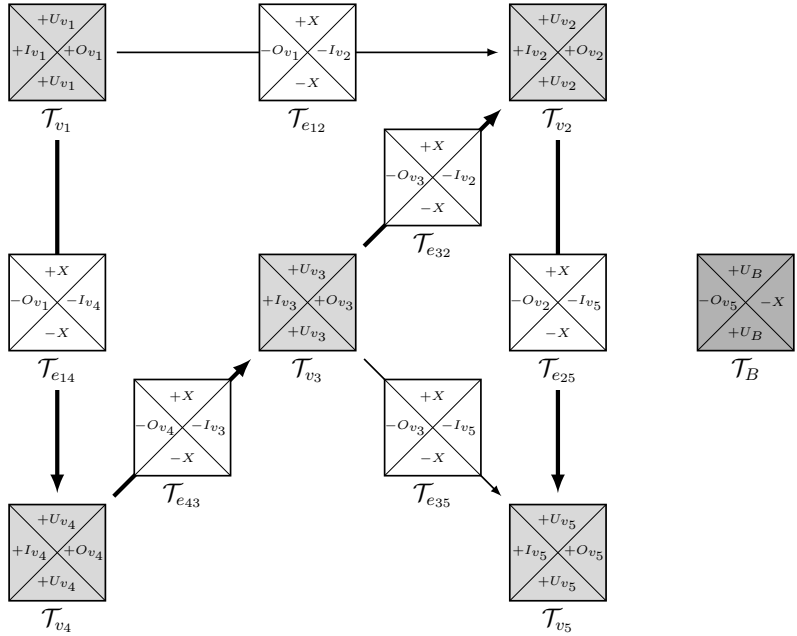
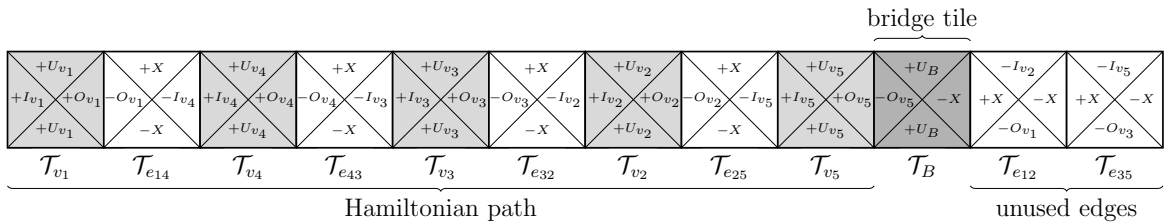

The reduction, illustrated in Figure~\ref{fig:graph2}--\ref{fig:graph-overlaid2}, takes an instance of \textsc{Hamiltonian Path}, which is a directed graph $G = (V, E)$ with a source $s$ and a sink $t$, then produces an instance of \textsc{$1 \times n$ Signed Edge-Matching Puzzle} as follows:
\begin{itemize}
	\item For each vertex $v_i \in V$, we add a \emph{vertex tile} $\mathcal{T}_{v_i}$ with positive \emph{vertex colors} $+I_{v_i}$ and $+O_{v_i}$ on opposite edges (each represents entering and leaving $v_i$, respectively), and the unique unmatchable color $+U_{v_i}$ on the remaining edges; i.e., $\mathcal{T}_{v_i} = \mathcal{T}(+I_{v_i}, +U_{v_i}, +O_{v_i}, +U_{v_i})$.
	\item For each edge $e_{ij}\in E$ from $v_i$ to $v_j$, we add an \emph{edge tile} $\mathcal{T}_{e_{ij}}$ with negative vertex colors $-O_{v_i}$ and $-I_{v_j}$ on opposite edges (each represents leaving $v_i$ and entering $v_j$, respectively), and the matching \emph{garbage colors} pair $+X$ and $-X$ on the two remaining edges. In other words, $\mathcal{T}_{e_{ij}} = \mathcal{T}(-O_{v_i}, +X, -I_{v_j}, -X)$.
	\item Finally, we have one \emph{bridge tile} $\mathcal{T}_B$ with the negative vertex color $-O_t$ on one edge, one garbage color $-X$ on the opposite edge, and a unique unmatchable color $+U_B$ on the remaining edges. That is, $\mathcal{T}_B=\mathcal{T}(-O_t,+U_B,-X,+U_B)$.
\end{itemize}

\subsection{Proof of Lemma~\ref{lem:exactsempred}} \label{sec:red-semp-proof}

Our reduction is clearly a polynomial-time reduction. We must now show that the existence of a Hamiltonian path in the \textsc{Hamiltonian Path} instance implies the existence of a tiling scheme without any mismatched edges in the constructed \textsc{$1 \times n$ Signed Edge-Matching Puzzle} instance, and vice versa.

\subsubsection*{\textsc{Hamiltonian Path} $\implies $\textsc{$1 \times n$ Signed Edge-Matching Puzzle}}

Turning a Hamiltonian path into a tiling scheme is easy; an example of this process can be found in Figure~\ref{fig:tiling2}. We encode our path into the leftmost $2|V| - 1$ slots in the grid by placing the tiles corresponding to the vertices and edges in the path alternately in the grid. More specifically, let the Hamiltonian path be $v_{\pi(1)}, \dots, v_{\pi(|V|)}$ where $v_{\pi(1)}=s$ and $v_{\pi(|V|)}=t$. For $j = 1, \dots, |V|$, we place the vertex tile $\mathcal{T}_{v_{j}}$ oriented as $\mathcal{T}(+I_{v_{\pi(j)}}, X, +O_{v_{\pi(j)}}, X)$ in the $(2j - 1)^\textrm{th}$ slot. For $j=1, \ldots, |V|-1$, we place the edge tile $\mathcal{T}_{e_{\pi(j)\pi(j+1)}}$ oriented as $\mathcal{T}(-O_{v_{\pi(j)}}, +X, -I_{v_{\pi(j+1)}}, -X)$ in the $(2j)^\textrm{th}$ slot.

We then place the bridge tile oriented as $\mathcal{T}_B=\mathcal{T}(-O_t,+U_B,-X,+U_B)$ in the $(2|V|)^\textrm{th}$ slot; since $t$ is the last vertex in the Hamiltonian path, this tile is compatible with the last tile in the $(2|V|-1)^\textrm{th}$ slot. For the remaining edge tiles, we rotate them as $\mathcal{T}(+X, -I_{v_j}, -X, -O_{v_i})$ so that the matching garbage colors are on the left and right, then put them into the rest of the slots. Clearly, we have placed every tile on the board without any conflict and, hence, we have completed the first half of the proof.

\subsubsection*{\textsc{$1 \times n$ Signed Edge-Matching Puzzle} $\implies$ \textsc{Hamiltonian Path}}

Assume that there exists a tiling scheme on a $1 \times n$ grid without any mismatched edges. Observe that the bridge tile must be oriented so that its unique color $+U_B$ is on its upper and lower edges because this color cannot be matched by any tile. Without loss of generality, suppose that the bridge tile is oriented as $\mathcal{T}_B=\mathcal{T}(-O_t,+U_B,-X,+U_B)$. (Otherwise, rotate the whole grid $180^{\circ}$.)

Consider first the tiles on the right side of the bridge tile. By our construction, we may only place edge tiles tile oriented as $\mathcal{T}(+X, -I_{v_j}, -X,-O_{v_i})$ in order to match their garbage-colored edges. Consequently, none of the vertex tiles may be placed; all of them must appear on the left side of the bridge tile.

Now consider the tiles on the left side of the bridge tile. Observe that, to be compatible with the color $-O_{v_j}$ to the right, we may only place the vertex tile $\mathcal{T}_{v_j}$ oriented as $\mathcal{T}(+I_{v_j}, +U_{v_j}, +O_{v_j}, +U_{v_j})$. Similarly, to be compatible with the color $+I_{v_j}$ to the right, we may only place some edge tile $\mathcal{T}_{e_{ij}}$ oriented as $\mathcal{T}(-O_{v_i}, +X, -I_{v_j}, -X)$. Therefore, these tiles may only be placed if they correspond to a path on $G$ ending at $t$. Since all vertex tiles appear to the left of the bridge tile, we have a path visiting all vertices ending at $t$. Moreover, as our starting vertex $s$ is a source, the color $+I_{s}$ cannot be matched by any tile. Thus $\mathcal{T}_s$ must be the left-most tile on the grid. That is, the tiles to the left of the bridge tile encode a Hamiltonian path of $G$, as desired.

\section{Inapproximability of $1 \times n$ Edge-Matching Puzzles} \label{sec:emp}

In this section, we generalize the approach from Section~\ref{sec:exactemp} in order to prove Theorem~\ref{thm:main}, NP-hardness of approximation of \textsc{$1 \times n$ Max-Placement Edge-Matching Puzzle}. To this end, we use the optimization variant of \textsc{Hamiltonian Path} called \textsc{Max Vertex-Disjoint Path Cover} defined earlier, and provide a gap-preserving reduction from \textsc{Max Vertex-Disjoint Path Cover} to \textsc{$1 \times n$ Max-Placement Edge-Matching Puzzle}. We largely focus on the unsigned variant because it is the more complicated case. Specifically, by removing the signs in the reduction of Section~\ref{sec:red-semp}, it becomes possible to place two edge tiles next to each other without using the garbage color since they share an additional color if they have the same starting vertex or ending vertex. We circumvent this problem by restricting to graphs of in-degrees and out-degrees at most two. This requries us to later show the hardness of approximation for \textsc{Max Vertex-Disjoint Path Cover(2)} in this more restricted family of graphs.

We describe our reduction from \textsc{Max Vertex-Disjoint Path Cover(2)} to \textsc{$1\times n$ Max-Placement Edge-Matching Puzzle} in Section~\ref{sec:red-memp}. We then prove in Section~\ref{sec:red-memp-proof} that the reduction satisfies the properties stated in the lemma below:

\begin{lemma} \label{lem-emp-approx}
For any nonnegative constants $\alpha_{\MPC}$ and $\alpha_{\EMP} < \alpha_{\MPC} / 48$, the following properties hold for the reduction described in Section~\ref{sec:red-memp} when $|V|$ is sufficiently large:
\begin{itemize}
	\item if the optimum of the \textsc{Max Vertex-Disjoint Path Cover(2)} instance is $|V| - 1$ (the graph contains a Hamiltonian path), then the optimum of the resulting \textsc{$1 \times n$ Max-Placement Edge-Matching Puzzle} instance is at least $n$, i.e., every tile can be placed compatibly on the board, and,
	\item if the optimum of the \textsc{Max Vertex-Disjoint Path Cover(2)} instance is at most $(1-\alpha_{\MPC})(|V| - 1)$, then the optimum of the resulting \textsc{$1 \times n$ Max-Placement Edge-Matching Puzzle} is at most $(1-\alpha_{\EMP})n$.
\end{itemize}
\end{lemma}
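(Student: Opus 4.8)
I would prove the two bullets separately. The first (``completeness'') is essentially the forward direction of Lemma~\ref{lem:exactsempred}, while the second (``soundness'') carries the real content and is proved by contraposition. For completeness, assume the \textsc{Max Vertex-Disjoint Path Cover(2)} instance $G$ has a Hamiltonian path $s = v_{\pi(1)}, \dots, v_{\pi(|V|)} = t$. I lay out the tiles exactly as in the proof of Lemma~\ref{lem:exactsempred}, simply ignoring the signs (which were only needed there for the converse): alternate the vertex and edge tiles of the path in the leftmost slots, place the bridge tile so that it matches $\mathcal{T}_t$, and rotate every remaining edge tile so its two garbage-colored edges are horizontal, packing them into the rest of the board. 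Every adjacency is then monochromatic, so all $n$ tiles are placed compatibly and the optimum of the \textsc{$1\times n$ Max-Placement Edge-Matching Puzzle} instance is $n$; one only needs to check that whatever auxiliary copies or filler tiles the reduction of Section~\ref{sec:red-memp} introduces can likewise be packed conflict-free, which is immediate from how they are built.

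For soundness I argue the contrapositive: given a conflict-free partial tiling that places more than $(1-\alpha_{\EMP})n$ tiles, I exhibit a vertex-disjoint path cover of $G$ with more than $(1-\alpha_{\MPC})(|V|-1)$ edges. The key is a structural extraction. The unmatchable colors $U_{v_i}$ and $U_B$ force every placed vertex tile, and the bridge tile, to be oriented with their unmatchable edges vertical; hence each placed vertex tile $\mathcal{T}_{v_i}$ presents exactly one horizontal edge colored $I_{v_i}$ and one colored $O_{v_i}$. Define a digraph $H$ on $V$ with an arc $i\to j$ whenever some edge tile $\mathcal{T}_{e_{ij}}$ is placed with its $O_{v_i}$-edge against $\mathcal{T}_{v_i}$ and its $I_{v_j}$-edge against $\mathcal{T}_{v_j}$. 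Since $\mathcal{T}_{v_i}$ has a single $O$-edge and $\mathcal{T}_{v_j}$ a single $I$-edge, $H$ has in- and out-degree at most one, so it is a disjoint union of directed paths and cycles; and because any directed path $i\to j\to k\to\cdots$ of $H$ must occupy a consecutive block $\mathcal{T}_{v_i}\mathcal{T}_{e_{ij}}\mathcal{T}_{v_j}\mathcal{T}_{e_{jk}}\cdots$ of the one-dimensional tiling, no cycle is possible (a tile cannot repeat). Thus $H$ is itself a legal vertex-disjoint directed path cover of $G$ with $|E(H)| = |V| - (\text{number of components of }H)$ edges, and it remains to bound the number of components.

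The heart of the argument is then a charging scheme: I would show that every ``break'' in $H$---each vertex tile $\mathcal{T}_{v_i}$ one of whose sides is not abutted by the matching edge tile that continues on to another vertex tile---can be attributed to a nearby ``defect'': a blank slot, the bridge tile, or one of the few ``wrong'' adjacencies that dropping the signs makes possible (an edge tile meeting another edge tile on a shared $I$- or $O$-color, a vertex tile with its $U$-edges horizontal and hence isolated, an edge tile with garbage edges horizontal sitting where a path-step should be, and so on). Here the restriction to in- and out-degree at most two is essential: each color $I_{v_i}$ or $O_{v_i}$ occurs on only $O(1)$ tiles, so only $O(1)$ such ``interference'' adjacencies can touch any one vertex tile, and---using how the gadgets of Section~\ref{sec:red-memp} are laid out---their total number, together with the bridge, is $O(k)$ where $k < \alpha_{\EMP} n$ is the number of blanks. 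This yields $|E(H)| \ge |V| - 1 - c\,\alpha_{\EMP} n$ for an absolute constant $c$; the factor $48$ in the hypothesis is exactly the slack needed to absorb $c$ together with the size blow-up $n/(|V|-1) = \Theta(1)$ of the reduction, once $|V|$ is large enough to swallow the additive~$1$.

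I expect the main obstacle to be precisely this last charging step: one must enumerate, in the unsigned setting, all the ways two tiles can legally sit next to each other, verify that every adjacency which is not an honest path-step either touches a blank or belongs to a per-vertex $O(1)$ list of interference patterns (this is where degree $\le 2$ is used), and organize the bookkeeping so that the number of path-components lost is genuinely proportional to the number of blanks rather than to $|E|$. The completeness direction and the digraph extraction are comparatively routine.
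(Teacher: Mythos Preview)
Your plan is essentially the paper's own argument, organized slightly differently: the paper first \emph{removes} the offending edge tiles (those with vertex-colored horizontal edges that are adjacent to a blank, a border, or another edge tile) and then reads paths off the surviving connected components, whereas you directly extract the digraph $H$ and then charge away the breaks. These are equivalent.

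Two small points to tighten. First, your sentence ``the unmatchable colors $U_{v_i}$ and $U_B$ force every placed vertex tile\dots\ to be oriented with their unmatchable edges vertical'' is not literally true: a vertex tile may sit with $U_{v_i}$ horizontal provided both neighbors are blank (or borders). You do list this later as a defect, so just don't assert it as forced.

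Second, and more importantly, your charging step asserts that the total number of edge--edge ``interference'' adjacencies on a vertex color is $O(k)$ but does not say why; the observation ``each color occurs on only $O(1)$ tiles'' by itself only gives an $O(|V|)$ bound. The missing sentence is this: if two edge tiles meet on the color $O_{v_i}$ (respectively $I_{v_i}$), then because the out-degree (in-degree) of $v_i$ is at most two, those two tiles are \emph{all} the edge tiles carrying $O_{v_i}$ ($I_{v_i}$); hence the vertex tile $\mathcal{T}_{v_i}$, if placed, has no legal neighbor on its $O_{v_i}$ ($I_{v_i}$) side and must abut a blank or a border, or else be unplaced. This charges each interfering vertex color to one of at most $3\alpha_{\EMP}n + O(1)$ ``bad'' vertex tiles, each of which accounts for at most four edge tiles, which is exactly the paper's Lemma~\ref{lem:cleared-edge-tile}. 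Once you make this explicit, the constant $48$ falls out of $n \le 3|V|+1$ (since $|E|\le 2|V|$) together with the bound on removed tiles.
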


We will later show NP-hardness of approximation of \textsc{Max Vertex-Disjoint Path Cover(2)}, namely Theorem~\ref{thm:gapmvdpc}, in Section~\ref{sec:mvdpc}. The above lemma and Theorem~\ref{thm:gapmvdpc} immediately imply Theorem~\ref{thm:main}.

Then, we describe how to modify the reduction to arrive at a similar inapproximability result for \textsc{$1 \times n$ Max-Placement Signed Edge-Matching Puzzle} (Corollary~\ref{cor:signed}) in Section~\ref{sec:msemp}.
Lastly, we provide another reduction to obtain the inapproximability result for \textsc{$1 \times n$ Max-Matched Edge-Matching Puzzle} (Corollary~\ref{cor:mme}) in Section~\ref{sec:maxmatched}.

\subsection{Construction of \textsc{$1 \times n$ Max-Placement Edge-Matching Puzzle} Instance from \textsc{Max Vertex-Disjoint Path Cover(2)} Instance} \label{sec:red-memp}

The overall strategy of the reduction from \textsc{Max Vertex-Disjoint Path Cover(2)} to \textsc{$1 \times n$ Max-Placement Edge-Matching Puzzle} remains unchanged from Section~\ref{sec:red-semp}, except that the signs on the tiles are now removed, as shown in Figure~\ref{fig:tiles}. For completeness, we include a concise specification below:
\begin{itemize}
	\item For each vertex $v_i \in V$, add a vertex tile $\mathcal{T}_{v_i} = \mathcal{T}(I_{v_i}, U_{v_i}, O_{v_i}, U_{v_i})$.
	\item For each vertex $e_{ij} \in E$, add an edge tile $\mathcal{T}_{e_{ij}} = \mathcal{T}(O_{v_i}, X, I_{v_j}, X)$.
	\item Add a bridge tile $\mathcal{T}_B = \mathcal{T}(O_t, U_B, X, U_B)$.
\end{itemize}

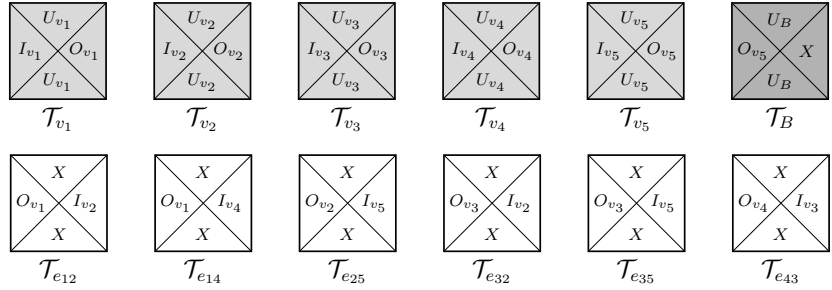
\begin{figure}[h!]
	\centering

	\begin{subfigure}[c]{\textwidth}
		\centering
		\scalebox{0.8}{\begin{tikzpicture}
\node [anchor=north,fill=white] at (0.8, 0.0) {\large $\mathcal{T}_{v_1}$};
\node (V1) at (0.8, 0.8) [rectangle,draw,fill=gray!30,thick,minimum width=1.6cm, minimum height=1.6cm] {};
\draw [color=black] (0.0, 0.0) -- (1.6, 1.6);
\draw [color=black] (0.0, 1.6) -- (1.6, 0.0);
\node [color=black] at (0.352, 0.8) {$\scriptstyle I_{v_1}$};
\node [color=black] at (1.248, 0.8) {$\scriptstyle O_{v_1}$};
\node [color=black] at (0.8, 0.288) {$\scriptstyle U_{v_1}$};
\node [color=black] at (0.8, 1.312) {$\scriptstyle U_{v_1}$};
\node [anchor=north,fill=white] at (3.2, 0.0) {\large $\mathcal{T}_{v_2}$};
\node (V2) at (3.2, 0.8) [rectangle,draw,fill=gray!30,thick,minimum width=1.6cm, minimum height=1.6cm] {};
\draw [color=black] (2.4, 0.0) -- (4.0, 1.6);
\draw [color=black] (2.4, 1.6) -- (4.0, 0.0);
\node [color=black] at (2.752, 0.8) {$\scriptstyle I_{v_2}$};
\node [color=black] at (3.648, 0.8) {$\scriptstyle O_{v_2}$};
\node [color=black] at (3.2, 0.288) {$\scriptstyle U_{v_2}$};
\node [color=black] at (3.2, 1.312) {$\scriptstyle U_{v_2}$};
\node [anchor=north,fill=white] at (5.6, 0.0) {\large $\mathcal{T}_{v_3}$};
\node (V3) at (5.6, 0.8) [rectangle,draw,fill=gray!30,thick,minimum width=1.6cm, minimum height=1.6cm] {};
\draw [color=black] (4.8, 0.0) -- (6.4, 1.6);
\draw [color=black] (4.8, 1.6) -- (6.4, 0.0);
\node [color=black] at (5.152, 0.8) {$\scriptstyle I_{v_3}$};
\node [color=black] at (6.048, 0.8) {$\scriptstyle O_{v_3}$};
\node [color=black] at (5.6, 0.288) {$\scriptstyle U_{v_3}$};
\node [color=black] at (5.6, 1.312) {$\scriptstyle U_{v_3}$};
\node [anchor=north,fill=white] at (8.0, 0.0) {\large $\mathcal{T}_{v_4}$};
\node (V4) at (8.0, 0.8) [rectangle,draw,fill=gray!30,thick,minimum width=1.6cm, minimum height=1.6cm] {};
\draw [color=black] (7.2, 0.0) -- (8.8, 1.6);
\draw [color=black] (7.2, 1.6) -- (8.8, 0.0);
\node [color=black] at (7.552, 0.8) {$\scriptstyle I_{v_4}$};
\node [color=black] at (8.448, 0.8) {$\scriptstyle O_{v_4}$};
\node [color=black] at (8.0, 0.288) {$\scriptstyle U_{v_4}$};
\node [color=black] at (8.0, 1.312) {$\scriptstyle U_{v_4}$};
\node [anchor=north,fill=white] at (10.4, 0.0) {\large $\mathcal{T}_{v_5}$};
\node (V5) at (10.4, 0.8) [rectangle,draw,fill=gray!30,thick,minimum width=1.6cm, minimum height=1.6cm] {};
\draw [color=black] (9.6, 0.0) -- (11.2, 1.6);
\draw [color=black] (9.6, 1.6) -- (11.2, 0.0);
\node [color=black] at (9.952, 0.8) {$\scriptstyle I_{v_5}$};
\node [color=black] at (10.848, 0.8) {$\scriptstyle O_{v_5}$};
\node [color=black] at (10.4, 0.288) {$\scriptstyle U_{v_5}$};
\node [color=black] at (10.4, 1.312) {$\scriptstyle U_{v_5}$};
\node [anchor=north,fill=white] at (12.8, 0.0) {\large $\mathcal{T}_B$};
\node (TB) at (12.8, 0.8) [rectangle,draw,fill=gray!60,thick,minimum width=1.6cm, minimum height=1.6cm] {};
\draw [color=black] (12.0, 0.0) -- (13.6, 1.6);
\draw [color=black] (12.0, 1.6) -- (13.6, 0.0);
\node [color=black] at (12.352, 0.8) {$\scriptstyle O_{v_5}$};
\node [color=black] at (13.248, 0.8) {$\scriptstyle X$};
\node [color=black] at (12.8, 0.288) {$\scriptstyle U_B$};
\node [color=black] at (12.8, 1.312) {$\scriptstyle U_B$};
\end{tikzpicture}}
	\end{subfigure}
	\vspace{4pt}

	\begin{subfigure}[c]{\textwidth}
		\centering
		\scalebox{0.8}{\begin{tikzpicture}
\node [anchor=north,fill=white] at (0.8, 0.0) {\large $\mathcal{T}_{e_{12}}$};
\node (E12) at (0.8, 0.8) [rectangle,draw,fill=white,thick,minimum width=1.6cm, minimum height=1.6cm] {};
\draw [color=black] (0.0, 0.0) -- (1.6, 1.6);
\draw [color=black] (0.0, 1.6) -- (1.6, 0.0);
\node [color=black] at (0.352, 0.8) {$\scriptstyle O_{v_1}$};
\node [color=black] at (1.248, 0.8) {$\scriptstyle I_{v_2}$};
\node [color=black] at (0.8, 0.288) {$\scriptstyle X$};
\node [color=black] at (0.8, 1.312) {$\scriptstyle X$};
\node [anchor=north,fill=white] at (3.2, 0.0) {\large $\mathcal{T}_{e_{14}}$};
\node (E14) at (3.2, 0.8) [rectangle,draw,fill=white,thick,minimum width=1.6cm, minimum height=1.6cm] {};
\draw [color=black] (2.4, 0.0) -- (4.0, 1.6);
\draw [color=black] (2.4, 1.6) -- (4.0, 0.0);
\node [color=black] at (2.752, 0.8) {$\scriptstyle O_{v_1}$};
\node [color=black] at (3.648, 0.8) {$\scriptstyle I_{v_4}$};
\node [color=black] at (3.2, 0.288) {$\scriptstyle X$};
\node [color=black] at (3.2, 1.312) {$\scriptstyle X$};
\node [anchor=north,fill=white] at (5.6, 0.0) {\large $\mathcal{T}_{e_{25}}$};
\node (E25) at (5.6, 0.8) [rectangle,draw,fill=white,thick,minimum width=1.6cm, minimum height=1.6cm] {};
\draw [color=black] (4.8, 0.0) -- (6.4, 1.6);
\draw [color=black] (4.8, 1.6) -- (6.4, 0.0);
\node [color=black] at (5.152, 0.8) {$\scriptstyle O_{v_2}$};
\node [color=black] at (6.048, 0.8) {$\scriptstyle I_{v_5}$};
\node [color=black] at (5.6, 0.288) {$\scriptstyle X$};
\node [color=black] at (5.6, 1.312) {$\scriptstyle X$};
\node [anchor=north,fill=white] at (8.0, 0.0) {\large $\mathcal{T}_{e_{32}}$};
\node (E32) at (8.0, 0.8) [rectangle,draw,fill=white,thick,minimum width=1.6cm, minimum height=1.6cm] {};
\draw [color=black] (7.2, 0.0) -- (8.8, 1.6);
\draw [color=black] (7.2, 1.6) -- (8.8, 0.0);
\node [color=black] at (7.552, 0.8) {$\scriptstyle O_{v_3}$};
\node [color=black] at (8.448, 0.8) {$\scriptstyle I_{v_2}$};
\node [color=black] at (8.0, 0.288) {$\scriptstyle X$};
\node [color=black] at (8.0, 1.312) {$\scriptstyle X$};
\node [anchor=north,fill=white] at (10.4, 0.0) {\large $\mathcal{T}_{e_{35}}$};
\node (E35) at (10.4, 0.8) [rectangle,draw,fill=white,thick,minimum width=1.6cm, minimum height=1.6cm] {};
\draw [color=black] (9.6, 0.0) -- (11.2, 1.6);
\draw [color=black] (9.6, 1.6) -- (11.2, 0.0);
\node [color=black] at (9.952, 0.8) {$\scriptstyle O_{v_3}$};
\node [color=black] at (10.848, 0.8) {$\scriptstyle I_{v_5}$};
\node [color=black] at (10.4, 0.288) {$\scriptstyle X$};
\node [color=black] at (10.4, 1.312) {$\scriptstyle X$};
\node [anchor=north,fill=white] at (12.8, 0.0) {\large $\mathcal{T}_{e_{43}}$};
\node (E43) at (12.8, 0.8) [rectangle,draw,fill=white,thick,minimum width=1.6cm, minimum height=1.6cm] {};
\draw [color=black] (12.0, 0.0) -- (13.6, 1.6);
\draw [color=black] (12.0, 1.6) -- (13.6, 0.0);
\node [color=black] at (12.352, 0.8) {$\scriptstyle O_{v_4}$};
\node [color=black] at (13.248, 0.8) {$\scriptstyle I_{v_3}$};
\node [color=black] at (12.8, 0.288) {$\scriptstyle X$};
\node [color=black] at (12.8, 1.312) {$\scriptstyle X$};
\end{tikzpicture}}
	\end{subfigure}

	\caption{The tiles of the \textsc{$1 \times n$ Max-Placement Edge-Matching Puzzle} instance created by the reduction in Section~\ref{sec:red-memp}, where the \textsc{Max Vertex-Disjoint Path Cover(2)} instance is the same graph from Figure~\ref{fig:graph2}.}
	\label{fig:tiles}
\end{figure}

Turning a vertex-disjoint path cover into a tiling scheme is still straightforward. We can simply encode each path by placing the tiles corresponding to the vertices and edges in the path alternately in the grid, leaving spaces between paths as necessary. Finally, we arrange the remaining edge tiles together by sharing garbage-colored edges. Intuitively, if there are only few paths in the path cover, then the number of blank spaces on the board is also small. In particular, if there is a Hamiltonian path, then all tiles can still be placed without any mismatched edges. 

On the other hand, converting a tiling configuration to a path cover is not as easy. Ideally, we want every edge tile $\mathcal{T}_{e_{ij}}$ that is packed in the grid to fall into one of the following categories.
\begin{itemize}
\item It is oriented so that the vertex-colored edges are on the left and the right ($\mathcal{T}(O_{v_i}, X, I_{v_j}, X)$ or $\mathcal{T}(I_{v_j}, X, O_{v_i}, X)$) and is placed between two vertex tiles. In this case, the corresponding edge should be included in the path cover.
\item It is oriented so that the garbage colors are on the left and right (i.e., $\mathcal{T}(X, I_{v_j}, X, O_{v_i})$ or $\mathcal{T}(X, O_{v_i}, X, I_{v_j})$). These edges are discarded from the path cover.
\end{itemize}

Unfortunately, it is possible for two edge tiles to be placed next to each other without the shared border being labeled with the garbage color; for instance, if there are edges $e_{12}$ and $e_{13}$, then we can rotate their corresponding tiles to be $\mathcal{T}(O_{v_2}, X, I_{v_1}, X)$ and $\mathcal{T}(I_{v_1}, X, O_{v_3}, X)$, and placed them next to each other. This is where the bounded degree constraint comes in. Suppose that each vertex in the graph has in-degree and out-degree at most two. If two edge tiles are placed consecutively with vertex color on the shared edge, there must be an empty slot, blank space or the board's border next to the vertex tile corresponding to that color (if the vertex tile is placed on the board). As a result, if there are not too many empty slots, we can reason that there must also be a small number of such pairs of edge tiles. We can then remove these problematic edge tiles and arrive at a path cover as desired. These arguments are formalized below.

\subsection{Proof of Lemma~\ref{lem-emp-approx}} \label{sec:red-memp-proof}

We will prove each half of Lemma~\ref{lem-emp-approx}.

\subsubsection*{\textsc{Max Vertex-Disjoint Path Cover(2)} $\implies $\textsc{$1 \times n$ Max-Placement Edge-Matching Puzzle}}



This proof remains unchanged from that case of \textsc{Hamiltonian Path} $\implies $\textsc{$1 \times n$ Signed Edge-Matching Puzzle} in Section~\ref{sec:red-semp-proof}, as removing the signs only makes the puzzle easier to solve.

%

\subsubsection*{\textsc{$1 \times n$ Max-Placement Edge-Matching Puzzle} $\implies$ \textsc{Max Vertex-Disjoint Path Cover(2)}}

We will prove this by contrapositive. Suppose that the resulting \textsc{$1 \times n$ Max-Placement Edge-Matching Puzzle} instance has optimum more than $(1 - \alpha_{\EMP})n$, i.e., there is a tiling with less than $\alpha_{\EMP} n$ empty slots. We translate the tiling to a path cover as follows:
\begin{enumerate}
  \item Remove the bridge tile from the board. \label{step:bri}
	\item Consider all edge tiles that are rotated in such a way that left and right edges are vertex-colored. Among these tiles, remove the ones that are adjacent to a blank space, a (left or right) border of the board or another edge tile. \label{step:eli}
	\item After the previous step, two consecutive edge tiles can only share a garbage-colored edge. Moreover, from the colors we choose for vertex tiles, two vertex tiles cannot be placed next to each other. As a result, each connected component (a maximal set of contiguous nonempty slots) either (1) alternates between vertex and edge tiles, or (2) consists solely of edge tiles that share only garbage-colored edges.

	Because we removed all edge tiles with left and right vertex-colored edges that are next to a blank space or a board's border, each sequence of the first type must start and end with vertex tiles, which represents a path. We create a path cover consisting of all paths corresponding to such sequences, and length-zero paths, one for each vertex tile not in the board.
\end{enumerate}

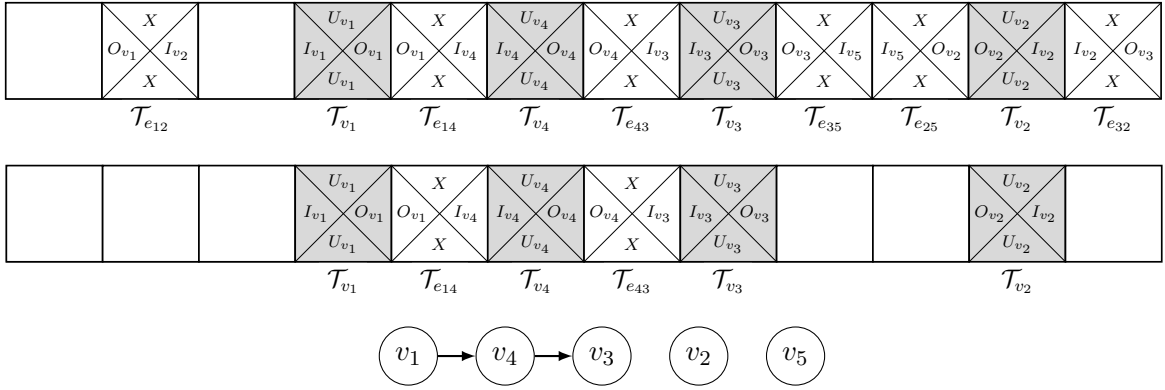
\begin{figure}
	\centering
	\begin{subfigure}[c]{\textwidth}
		\centering
		\scalebox{0.8}{\begin{tikzpicture}
\draw (0.0,0.0) rectangle (16.0,1.6);
\node (blank) at (0.8, 0.8) [draw,fill=white,thick,minimum width=1.6cm, minimum height=1.6cm] {};
\node [anchor=north,fill=white] at (2.4, 0.0) {\large $\mathcal{T}_{e_{12}}$};
\node (E12) at (2.4, 0.8) [rectangle,draw,fill=white,thick,minimum width=1.6cm, minimum height=1.6cm] {};
\draw [color=black] (1.6, 0.0) -- (3.2, 1.6);
\draw [color=black] (1.6, 1.6) -- (3.2, 0.0);
\node [color=black] at (1.952, 0.8) {$\scriptstyle O_{v_1}$};
\node [color=black] at (2.848, 0.8) {$\scriptstyle I_{v_2}$};
\node [color=black] at (2.4, 0.288) {$\scriptstyle X$};
\node [color=black] at (2.4, 1.312) {$\scriptstyle X$};
\node (blank) at (4.0, 0.8) [draw,fill=white,thick,minimum width=1.6cm, minimum height=1.6cm] {};
\node [anchor=north,fill=white] at (5.6, 0.0) {\large $\mathcal{T}_{v_1}$};
\node (V1) at (5.6, 0.8) [rectangle,draw,fill=gray!30,thick,minimum width=1.6cm, minimum height=1.6cm] {};
\draw [color=black] (4.8, 0.0) -- (6.4, 1.6);
\draw [color=black] (4.8, 1.6) -- (6.4, 0.0);
\node [color=black] at (5.152, 0.8) {$\scriptstyle I_{v_1}$};
\node [color=black] at (6.048, 0.8) {$\scriptstyle O_{v_1}$};
\node [color=black] at (5.6, 0.288) {$\scriptstyle U_{v_1}$};
\node [color=black] at (5.6, 1.312) {$\scriptstyle U_{v_1}$};
\node [anchor=north,fill=white] at (7.2, 0.0) {\large $\mathcal{T}_{e_{14}}$};
\node (E14) at (7.2, 0.8) [rectangle,draw,fill=white,thick,minimum width=1.6cm, minimum height=1.6cm] {};
\draw [color=black] (6.4, 0.0) -- (8.0, 1.6);
\draw [color=black] (6.4, 1.6) -- (8.0, 0.0);
\node [color=black] at (6.752, 0.8) {$\scriptstyle O_{v_1}$};
\node [color=black] at (7.648, 0.8) {$\scriptstyle I_{v_4}$};
\node [color=black] at (7.2, 0.288) {$\scriptstyle X$};
\node [color=black] at (7.2, 1.312) {$\scriptstyle X$};
\node [anchor=north,fill=white] at (8.8, 0.0) {\large $\mathcal{T}_{v_4}$};
\node (V4) at (8.8, 0.8) [rectangle,draw,fill=gray!30,thick,minimum width=1.6cm, minimum height=1.6cm] {};
\draw [color=black] (8.0, 0.0) -- (9.6, 1.6);
\draw [color=black] (8.0, 1.6) -- (9.6, 0.0);
\node [color=black] at (8.352, 0.8) {$\scriptstyle I_{v_4}$};
\node [color=black] at (9.248, 0.8) {$\scriptstyle O_{v_4}$};
\node [color=black] at (8.8, 0.288) {$\scriptstyle U_{v_4}$};
\node [color=black] at (8.8, 1.312) {$\scriptstyle U_{v_4}$};
\node [anchor=north,fill=white] at (10.4, 0.0) {\large $\mathcal{T}_{e_{43}}$};
\node (E43) at (10.4, 0.8) [rectangle,draw,fill=white,thick,minimum width=1.6cm, minimum height=1.6cm] {};
\draw [color=black] (9.6, 0.0) -- (11.2, 1.6);
\draw [color=black] (9.6, 1.6) -- (11.2, 0.0);
\node [color=black] at (9.952, 0.8) {$\scriptstyle O_{v_4}$};
\node [color=black] at (10.848, 0.8) {$\scriptstyle I_{v_3}$};
\node [color=black] at (10.4, 0.288) {$\scriptstyle X$};
\node [color=black] at (10.4, 1.312) {$\scriptstyle X$};
\node [anchor=north,fill=white] at (12.0, 0.0) {\large $\mathcal{T}_{v_3}$};
\node (V3) at (12.0, 0.8) [rectangle,draw,fill=gray!30,thick,minimum width=1.6cm, minimum height=1.6cm] {};
\draw [color=black] (11.2, 0.0) -- (12.8, 1.6);
\draw [color=black] (11.2, 1.6) -- (12.8, 0.0);
\node [color=black] at (11.552, 0.8) {$\scriptstyle I_{v_3}$};
\node [color=black] at (12.448, 0.8) {$\scriptstyle O_{v_3}$};
\node [color=black] at (12.0, 0.288) {$\scriptstyle U_{v_3}$};
\node [color=black] at (12.0, 1.312) {$\scriptstyle U_{v_3}$};
\node [anchor=north,fill=white] at (13.6, 0.0) {\large $\mathcal{T}_{e_{35}}$};
\node (E35) at (13.6, 0.8) [rectangle,draw,fill=white,thick,minimum width=1.6cm, minimum height=1.6cm] {};
\draw [color=black] (12.8, 0.0) -- (14.4, 1.6);
\draw [color=black] (12.8, 1.6) -- (14.4, 0.0);
\node [color=black] at (13.152, 0.8) {$\scriptstyle O_{v_3}$};
\node [color=black] at (14.048, 0.8) {$\scriptstyle I_{v_5}$};
\node [color=black] at (13.6, 0.288) {$\scriptstyle X$};
\node [color=black] at (13.6, 1.312) {$\scriptstyle X$};
\node [anchor=north,fill=white] at (15.2, 0.0) {\large $\mathcal{T}_{e_{25}}$};
\node (E52) at (15.2, 0.8) [rectangle,draw,fill=white,thick,minimum width=1.6cm, minimum height=1.6cm] {};
\draw [color=black] (14.4, 0.0) -- (16.0, 1.6);
\draw [color=black] (14.4, 1.6) -- (16.0, 0.0);
\node [color=black] at (14.752, 0.8) {$\scriptstyle I_{v_5}$};
\node [color=black] at (15.648, 0.8) {$\scriptstyle O_{v_2}$};
\node [color=black] at (15.2, 0.288) {$\scriptstyle X$};
\node [color=black] at (15.2, 1.312) {$\scriptstyle X$};
\node [anchor=north,fill=white] at (16.8, 0.0) {\large $\mathcal{T}_{v_2}$};
\node (V2) at (16.8, 0.8) [rectangle,draw,fill=gray!30,thick,minimum width=1.6cm, minimum height=1.6cm] {};
\draw [color=black] (16.0, 0.0) -- (17.6, 1.6);
\draw [color=black] (16.0, 1.6) -- (17.6, 0.0);
\node [color=black] at (16.352, 0.8) {$\scriptstyle O_{v_2}$};
\node [color=black] at (17.248, 0.8) {$\scriptstyle I_{v_2}$};
\node [color=black] at (16.8, 0.288) {$\scriptstyle U_{v_2}$};
\node [color=black] at (16.8, 1.312) {$\scriptstyle U_{v_2}$};
\node [anchor=north,fill=white] at (18.4, 0.0) {\large $\mathcal{T}_{e_{32}}$};
\node (E23) at (18.4, 0.8) [rectangle,draw,fill=white,thick,minimum width=1.6cm, minimum height=1.6cm] {};
\draw [color=black] (17.6, 0.0) -- (19.2, 1.6);
\draw [color=black] (17.6, 1.6) -- (19.2, 0.0);
\node [color=black] at (17.952, 0.8) {$\scriptstyle I_{v_2}$};
\node [color=black] at (18.848, 0.8) {$\scriptstyle O_{v_3}$};
\node [color=black] at (18.4, 0.288) {$\scriptstyle X$};
\node [color=black] at (18.4, 1.312) {$\scriptstyle X$};
\end{tikzpicture}}
	\end{subfigure}
	\vspace{8pt}

	\centering
	\begin{subfigure}[c]{\textwidth}
		\centering
		\scalebox{0.8}{\begin{tikzpicture}
\draw (0.0,0.0) rectangle (16.0,1.6);
\node (blank) at (0.8, 0.8) [draw,fill=white,thick,minimum width=1.6cm, minimum height=1.6cm] {};
\node (blank) at (2.4, 0.8) [draw,fill=white,thick,minimum width=1.6cm, minimum height=1.6cm] {};
\node (blank) at (4.0, 0.8) [draw,fill=white,thick,minimum width=1.6cm, minimum height=1.6cm] {};
\node [anchor=north,fill=white] at (5.6, 0.0) {\large $\mathcal{T}_{v_1}$};
\node (V1) at (5.6, 0.8) [rectangle,draw,fill=gray!30,thick,minimum width=1.6cm, minimum height=1.6cm] {};
\draw [color=black] (4.8, 0.0) -- (6.4, 1.6);
\draw [color=black] (4.8, 1.6) -- (6.4, 0.0);
\node [color=black] at (5.152, 0.8) {$\scriptstyle I_{v_1}$};
\node [color=black] at (6.048, 0.8) {$\scriptstyle O_{v_1}$};
\node [color=black] at (5.6, 0.288) {$\scriptstyle U_{v_1}$};
\node [color=black] at (5.6, 1.312) {$\scriptstyle U_{v_1}$};
\node [anchor=north,fill=white] at (7.2, 0.0) {\large $\mathcal{T}_{e_{14}}$};
\node (E14) at (7.2, 0.8) [rectangle,draw,fill=white,thick,minimum width=1.6cm, minimum height=1.6cm] {};
\draw [color=black] (6.4, 0.0) -- (8.0, 1.6);
\draw [color=black] (6.4, 1.6) -- (8.0, 0.0);
\node [color=black] at (6.752, 0.8) {$\scriptstyle O_{v_1}$};
\node [color=black] at (7.648, 0.8) {$\scriptstyle I_{v_4}$};
\node [color=black] at (7.2, 0.288) {$\scriptstyle X$};
\node [color=black] at (7.2, 1.312) {$\scriptstyle X$};
\node [anchor=north,fill=white] at (8.8, 0.0) {\large $\mathcal{T}_{v_4}$};
\node (V4) at (8.8, 0.8) [rectangle,draw,fill=gray!30,thick,minimum width=1.6cm, minimum height=1.6cm] {};
\draw [color=black] (8.0, 0.0) -- (9.6, 1.6);
\draw [color=black] (8.0, 1.6) -- (9.6, 0.0);
\node [color=black] at (8.352, 0.8) {$\scriptstyle I_{v_4}$};
\node [color=black] at (9.248, 0.8) {$\scriptstyle O_{v_4}$};
\node [color=black] at (8.8, 0.288) {$\scriptstyle U_{v_4}$};
\node [color=black] at (8.8, 1.312) {$\scriptstyle U_{v_4}$};
\node [anchor=north,fill=white] at (10.4, 0.0) {\large $\mathcal{T}_{e_{43}}$};
\node (E43) at (10.4, 0.8) [rectangle,draw,fill=white,thick,minimum width=1.6cm, minimum height=1.6cm] {};
\draw [color=black] (9.6, 0.0) -- (11.2, 1.6);
\draw [color=black] (9.6, 1.6) -- (11.2, 0.0);
\node [color=black] at (9.952, 0.8) {$\scriptstyle O_{v_4}$};
\node [color=black] at (10.848, 0.8) {$\scriptstyle I_{v_3}$};
\node [color=black] at (10.4, 0.288) {$\scriptstyle X$};
\node [color=black] at (10.4, 1.312) {$\scriptstyle X$};
\node [anchor=north,fill=white] at (12.0, 0.0) {\large $\mathcal{T}_{v_3}$};
\node (V3) at (12.0, 0.8) [rectangle,draw,fill=gray!30,thick,minimum width=1.6cm, minimum height=1.6cm] {};
\draw [color=black] (11.2, 0.0) -- (12.8, 1.6);
\draw [color=black] (11.2, 1.6) -- (12.8, 0.0);
\node [color=black] at (11.552, 0.8) {$\scriptstyle I_{v_3}$};
\node [color=black] at (12.448, 0.8) {$\scriptstyle O_{v_3}$};
\node [color=black] at (12.0, 0.288) {$\scriptstyle U_{v_3}$};
\node [color=black] at (12.0, 1.312) {$\scriptstyle U_{v_3}$};
\node (blank) at (13.6, 0.8) [draw,fill=white,thick,minimum width=1.6cm, minimum height=1.6cm] {};
\node (blank) at (15.2, 0.8) [draw,fill=white,thick,minimum width=1.6cm, minimum height=1.6cm] {};
\node [anchor=north,fill=white] at (16.8, 0.0) {\large $\mathcal{T}_{v_2}$};
\node (V2) at (16.8, 0.8) [rectangle,draw,fill=gray!30,thick,minimum width=1.6cm, minimum height=1.6cm] {};
\draw [color=black] (16.0, 0.0) -- (17.6, 1.6);
\draw [color=black] (16.0, 1.6) -- (17.6, 0.0);
\node [color=black] at (16.352, 0.8) {$\scriptstyle O_{v_2}$};
\node [color=black] at (17.248, 0.8) {$\scriptstyle I_{v_2}$};
\node [color=black] at (16.8, 0.288) {$\scriptstyle U_{v_2}$};
\node [color=black] at (16.8, 1.312) {$\scriptstyle U_{v_2}$};
\node (blank) at (18.4, 0.8) [draw,fill=white,thick,minimum width=1.6cm, minimum height=1.6cm] {};
\end{tikzpicture}}
	\end{subfigure}
	\vspace{4pt}

	\centering
	\begin{subfigure}[c]{\textwidth}
		\centering
		\begin{tikzpicture}

\matrix[nodes={draw}, row sep=5mm, column sep=5mm] {
 & \node[circle] (V1) {$v_1$}; & \node[circle] (V4) {$v_4$}; & \node[circle] (V3) {$v_3$}; & \node[circle] (V2) {$v_2$}; & \node[circle] (V5) {$v_5$};  \\
};

\path (V1) edge[-latex,thick] (V4);
\path (V4) edge[-latex,thick] (V3);
\end{tikzpicture}
	\end{subfigure}

	\caption{An example of how the algorithm works for a tiling scheme for the instance from Figure~\ref{fig:tiles}. The upper tiling is the input. Since the bridge tile is not put on the board, the tiling is not modified in Step~\ref{step:bri}. In Step~\ref{step:eli}, $\mathcal{T}_{e_{12}}$ and $\mathcal{T}_{e_{32}}$ are removed because they are next to a blank space and a board border respectively; $\mathcal{T}_{e_{35}}$ and $\mathcal{T}_{e_{25}}$ are removed because they share a vertex-colored edge. This results in the lower tiling. The third step of the algorithm then picks two paths ($v_1 \rightarrow v_4 \rightarrow v_3$ and $v_2$) from the board and one ($v_5$) from an unused vertex tile.}
	\label{fig:tiling-algo}
\end{figure}

Figure~\ref{fig:tiling-algo} shows the algorithm in action for an example tiling scheme. Clearly, the resulting collection of paths is a vertex-disjoint path cover. We will now argue that this path cover indeed contains more than $(|V| - 1)(1 - \alpha_{\MPC})$ edges. We start by proving the following useful lemma.

\begin{lemma} \label{lem:cleared-edge-tile}
Less than $14\alpha_{\EMP}n + 12$ edge tiles are removed in Step~\ref{step:eli}.
\end{lemma}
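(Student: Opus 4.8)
The plan is to charge every edge tile removed in Step~\ref{step:eli} to a nearby \emph{defect} of the tiling---a blank slot, a board border, or a tile left off the board---of which there are few. Write $B$ for the number of blank slots of the original tiling, so $B < \alpha_{\EMP}n$, and recall that exactly $B$ tiles are left off the board; after Step~\ref{step:bri} removes the bridge tile, the number of blanks grows by at most one. An edge tile is removed in Step~\ref{step:eli} for one of three reasons, and I would bound each separately.

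\emph{Adjacency to a blank slot or a board border.} Each of the (at most $B+1$) blank slots has at most two neighbours, and the board has exactly two borders, so these two reasons together account for fewer than $2(B+1)+2$ removed edge tiles.

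\emph{Adjacency to another edge tile.} This is the crux. When an edge tile with its two vertex-colored edges facing left and right sits next to another edge tile, matching of placed tiles forces the shared edge to carry a common vertex color, say $I_{v_j}$ (the case $O_{v_j}$ is symmetric with ``left'' swapped for ``right''). The key observation is that $I_{v_j}$ appears only on $\mathcal{T}_{v_j}$ and on the edge tiles of edges into $v_j$, of which the in-degree bound guarantees at most two; so such a ``bad adjacency'' forces $v_j$ to have in-degree exactly two and forces both $I_{v_j}$-bearing edge tiles to point their $I_{v_j}$ sides at each other, leaving no placed tile that can present $I_{v_j}$ to the left edge of $\mathcal{T}_{v_j}$. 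Hence $\mathcal{T}_{v_j}$ is either off the board, or placed with a blank slot or the left border immediately to its left---a defect attached to $v_j$. I would then check that each defect receives at most two such charges (an off-board $\mathcal{T}_{v_j}$ from its $I_{v_j}$- and $O_{v_j}$-adjacencies; a blank at position $p$ from an $I_{v_j}$-adjacency with $\mathcal{T}_{v_j}$ at $p{+}1$ and an $O_{v_k}$-adjacency with $\mathcal{T}_{v_k}$ at $p{-}1$; each border once), so the number of bad adjacencies is at most $2B + 2(B+1) + 2 = 4B+4$, using at most $B$ off-board vertex tiles and at most $B+1$ blanks. Finally, the removed edge tiles of this third kind lie in maximal runs of consecutive such edge tiles, and a run of $\ell \ge 2$ of them contains $\ell-1$ bad adjacencies, so at most $2(4B+4)$ edge tiles are removed for this reason.

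Summing the three contributions gives fewer than $(2B+4) + (8B+8) = 10B + 12 < 10\alpha_{\EMP}n + 12$ removed edge tiles, comfortably within the claimed bound $14\alpha_{\EMP}n + 12$. The only delicate step is the ``uses up the color $I_{v_j}$'' argument together with confirming the small constants in the charging; in a full write-up I would be careful that $\mathcal{T}_{v_j}$ is never itself one of the two tiles of a bad adjacency, that an edge tile counted in the third case really shares a vertex color (not a garbage color) with its neighbour---forced by its orientation and by the fact that adjacent placed tiles match---and that the bridge tile's removal in Step~\ref{step:bri} is absorbed by the ``$+1$'' in the blank count.
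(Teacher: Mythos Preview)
Your approach is correct and rests on the same key observation as the paper: when two edge tiles share a vertex color $I_{v}$ (or $O_{v}$), the degree bound forces both $I_v$-bearing edge tiles into that pair, so $\mathcal{T}_v$ must be off the board or adjacent to a blank/border. From this point the paper's accounting is cruder but shorter: it simply counts the number of ``bad'' vertices $v$ (those with $\mathcal{T}_v$ off-board or next to a blank/border) as at most $\alpha_{\EMP}n + 2\alpha_{\EMP}n + 2 = 3\alpha_{\EMP}n+2$, then notes that at most four edge tiles carry the color $I_v$ or $O_v$, giving $\le 12\alpha_{\EMP}n+8$ removals for the edge-tile-adjacency reason and hence $< 14\alpha_{\EMP}n+12$ in total. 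Your direct charging of bad adjacencies to defects (at most two charges per defect) together with the run inequality $\ell \le 2(\ell-1)$ is tighter and yields $10\alpha_{\EMP}n+12$, but needs a little more care than your sketch provides: you should explicitly treat the orientation in which $\mathcal{T}_v$ has $U_v$ on its left and right sides (both neighbours are then blanks/borders, and you can split the $I_v$- and $O_v$-charges between them), and your phrase ``immediately to its left'' should really read ``on the side where the unmatched color faces''. With those small fixes your argument goes through; the paper just trades a worse constant for a two-line proof.
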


\begin{proof}
After the bridge tile is removed, there are less than $\alpha_{\EMP}n + 1$ empty slots. This means that, in Step~\ref{step:eli}, less than $2(\alpha_{\EMP}n + 1) + 2 = 2\alpha_{\EMP}n + 4$ edge tiles are removed because they are adjacent to a blank space or a border of the board. The rest must be removed because they share a vertex color-labeled edge with other edge tiles.

Suppose that the shared edge of two consecutive edge tiles is labeled by $I_{v_i}$ or $O_{v_i}$ for some vertex ${v_i}$. Because $v_i$ has in-degree and out-degree at most two, the tile $\mathcal{T}_{v_i}$ must be either omitted from the board or is next to a blank space or a border of the board. There can be at most $\alpha_{\EMP}n + 2\alpha_{\EMP}n + 2 = 3\alpha_{\EMP}n + 2$ such $\mathcal{T}_{v_i}$'s. For each such $\mathcal{T}_{v_i}$'s, there are only four edge tiles that have $I_{v_i}$ or $O_{v_i}$. Hence, the number of edge tiles removed by such cause is at most $4(3\alpha_{\EMP}n + 2) = 12\alpha_{\EMP}n + 8$.

Thus, the total number of edge tiles removed is less than $14\alpha_{\EMP}n + 12$.
\end{proof}

Next, recall that the number of edges in the path cover is $|V|$ minus the number of paths. There are two type of paths we create: (1) paths from tiling in the board and (2) paths from unused vertex tiles. The number of the latter is bounded by the number of initially empty slots, which is less than $\alpha_{\EMP}n$. The number of the former is at most the number of connected components in the board after Step~\ref{step:eli}, which is bounded by one plus the number of empty slots.

The empty slots on the board are either initially empty or are made empty in the first two steps. Recall that the number of initially empty slots is less than $\alpha_{\EMP}n$. One additional empty slot occurs in Step~\ref{step:bri}. Finally, from Lemma~\ref{lem:cleared-edge-tile}, the number of empty slots created in Step~\ref{step:eli} is less than $14\alpha_{\EMP}n + 12$. Hence, the total number of empty slots is less than $15\alpha_{\EMP}n + 13$. Thus, the number of edges in the path cover is more than
\begin{align*}
|V| - (1 + 15\alpha_{\EMP}n + 13) - \alpha_{\EMP}n
&= |V| - 14 - 16\alpha_{\EMP}(|V| + |E| + 1) \\
&\geq |V| - 14 - 16\alpha_{\EMP}(|V| + 2|V| + 1) \\
&\geq (1 - 48\alpha_{\EMP})|V| - 30.
\end{align*}

The first inequality follows from the fact that each vertex has in-degree at most two.

Finally, since $\alpha_{\EMP} < \alpha_{\MPC}/48$, when $|V|$ is sufficiently large, the last quantity is more than $(1 - \alpha_{\MPC})(|V| - 1)$. We have thus completed the proof of Lemma~\ref{lem-emp-approx}.

\subsection{Proof of Corollary~\ref{cor:signed} (Signed Variation)} \label{sec:msemp}

Because the proof of Corollary~\ref{cor:signed} largely resembles that of Theorem~\ref{thm:main}, we will not repeat the whole proof here. The main difference is in the reduction: we instead need to follow the reduction from Section~\ref{sec:red-semp} by restoring the omitted signs. Apart from this, the rest of the proof remains unchanged.


\subsection{Proof of Corollary~\ref{cor:mme} (Maximum-Matched Variation)} \label{sec:maxmatched}
In this section, we provide a gap reduction to the maximum-matched variation, although this time our instance does not need to be modified. More specifically, we prove the following lemma.

\begin{lemma} \label{lem-maxmatched-approx}
For any constant $\alpha < 1$, the following properties hold:
\begin{itemize}
	\item if the optimum of \textsc{$1 \times n$ Max-Placement Edge-Matching Puzzle} is $n$, then the optimum of  \textsc{$1 \times n$ Max-Matched Edge-Matching Puzzle} on the same instance is $n-1$, and,
	\item if the optimum of \textsc{$1 \times n$ Max-Placement Edge-Matching Puzzle} is at most $(1-\alpha)n$, then the optimum of  \textsc{$1 \times n$ Max-Matched Edge-Matching Puzzle} on the same instance is at most $(1-\alpha)(n-1)$.
\end{itemize}
\end{lemma}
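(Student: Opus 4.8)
The plan is to prove the two bullets of Lemma~\ref{lem-maxmatched-approx} separately. The first is immediate: a $1\times n$ grid has exactly $n-1$ pairs of adjacent slots, so the optimum of \textsc{$1 \times n$ Max-Matched Edge-Matching Puzzle} is always at most $n-1$. If the \textsc{$1 \times n$ Max-Placement Edge-Matching Puzzle} optimum on an instance is $n$, some assignment of all $n$ tiles to the $n$ slots has every adjacent pair agreeing on its shared edge; reading this same assignment as a \textsc{Max-Matched} solution (all tiles placed, with arbitrary signs attached since the objective only counts same-color adjacencies) makes all $n-1$ edges matched, so the \textsc{Max-Matched} optimum is exactly $n-1$.

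For the second bullet I would argue the contrapositive: if the \textsc{Max-Matched} optimum exceeds $(1-\alpha)(n-1)$, then the \textsc{Max-Placement} optimum exceeds $(1-\alpha)n$. (When $\alpha=0$ the hypothesis is vacuous, since the \textsc{Max-Matched} optimum never exceeds $n-1$, so we may assume $\alpha>0$.) Fix a placement of all $n$ tiles with $k > (1-\alpha)(n-1)$ matched edges, and let $m := (n-1)-k$ be the number of mismatched edges, so that $m < \alpha(n-1) \le \alpha n$. Number the slots $1,\dots,n$, and for each mismatched edge---which separates some consecutive slots $i$ and $i+1$---delete the tile in the right slot $i+1$. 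Distinct mismatched edges single out distinct right slots, so exactly $m$ tiles are deleted and $n-m$ tiles remain placed.

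The one step that needs verification is that the resulting configuration is a feasible \textsc{Max-Placement} solution, i.e., every two adjacent slots that are both still occupied carry a matched edge. Indeed, if slots $j$ and $j+1$ are both occupied after the deletions, then the edge between them was not among the mismatched edges (otherwise slot $j+1$ would have been emptied), so its two tiles agree on the shared edge. Hence we obtain a legal placement of $n-m > n-\alpha n = (1-\alpha)n$ tiles, establishing the contrapositive. Combined with Theorem~\ref{thm:main}, this yields Corollary~\ref{cor:mme}, and the signed analogue follows by additionally invoking Corollary~\ref{cor:signed}.

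I do not expect a genuine obstacle here; the only real decision is the direction of the transformation. Converting a \textsc{Max-Placement} solution into a \textsc{Max-Matched} solution by dropping each unplaced tile into an arbitrary empty slot could spoil up to two edges per empty slot, losing a factor of two and destroying the gap; deleting a single tile per mismatched edge in the reverse direction incurs no such loss, which is precisely why the contrapositive is the convenient formulation.
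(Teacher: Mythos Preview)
Your proof is correct and follows essentially the same approach as the paper: the first bullet is immediate from a perfect tiling having $n-1$ matched edges, and the second is the contrapositive where you delete one endpoint tile of each mismatched edge to obtain a valid placement with more than $(1-\alpha)n$ tiles. The only cosmetic difference is that the paper removes the tile whose \emph{right} edge is mismatched (i.e., the left tile of the bad pair) while you remove the right tile, which is of course symmetric.
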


By picking $\alpha = \alpha_{\EMP}$ from Theorem~\ref{thm:main}, Lemma~\ref{lem-maxmatched-approx} together with Theorem~\ref{thm:main} immediately implies NP-hardness of \Gap{\textsc{$1 \times n$ Max-Matched Edge-Matching Puzzle}}$[n-1, (1-\alpha_{\EMP})(n-1)]$, establishing Corollary~\ref{cor:mme}.

\begin{proof}[Proof of Lemma~\ref{lem-maxmatched-approx}]
For clarity, we refer to the \textsc{$1 \times n$ Max-Placement Edge-Matching Puzzle} problem, where the goal is to maximize the number of matched edges, as the original problem.

\subsubsection*{Original problem $\implies$ Maximum-Matched Variation}
This is straightforward since a perfect tiling contains $n - 1$ matched edges.

\subsubsection*{Maximum-Matched Variation $\implies$ Original problem}
We again prove by contrapositive; suppose that the maximum matched edges variation has a solution with more than $(1-\alpha)(n-1)$ matched edges. That is, there are less than $\alpha(n-1) < \alpha n$ mismatched edges. We create a solution to the original problem by removing all the tiles whose right edges are mismatched. Because each mismatched edge contributes to at most one empty slot, there are less than $\alpha n$ blank spaces, i.e., the number of tiles is more than $(1 - \alpha) n$.
\end{proof}

We note that our proof holds for the signed variation, and so does our inapproximability result.

\section{Inapproximability of \textsc{Max Vertex-Disjoint Path Cover(2)}} \label{sec:mvdpc}

We now prove the inapproximability of \textsc{Max Vertex-Disjoint Path Cover(2)} via a reduction from \textsc{Max-3SAT(29)}. Throughout section, let $n$ and $m$ denote the number of variables and clauses in the CNF instance of \textsc{Max-3SAT(29)}, respectively.

Our proof of the inapproximability of \textsc{Max Vertex-Disjoint Path Cover(2)} primarily relies on the reduction used to prove NP-hardness of the Hamiltonian cycle problem on graphs with degree bound two in \cite{plesnik}. We modify Plesn{\'i}k's construction to create a gap-preserving reduction, then apply our reduction on \textsc{Max-3SAT(29)}. Section~\ref{plesnik-construct} addresses the construction in detail. This construction turns a CNF instance of \textsc{Max-3SAT(29)} into a graph instance $G=(V, E)$ of \textsc{Max Vertex-Disjoint Path Cover(2)} satisfying the following lemma.

\begin{lemma} \label{lem-mvdpc-approx}
For any nonnegative constants $\alpha_{\SAT}$ and $\alpha_{MPC} = \alpha_{\SAT}/4060$, the following properties hold for the reduction described in Section~\ref{plesnik-construct}:
\begin{itemize}
	\item if the optimum of the \textsc{Max-3SAT(29)} instance is $m$ (the corresponding CNF formula is satisfiable), then the optimum of the resulting \textsc{Max Vertex-Disjoint Path Cover(2)} instance is $|V|-1$ (the corresponding graph contains a Hamiltonian path), and,
	\item if the optimum of the \textsc{Max-3SAT(29)} instance is at most $(1-\alpha_{\SAT})m$, then the optimum of the resulting \textsc{Max Vertex-Disjoint Path Cover(2)} is at most $(1-\alpha_{\MPC})(|V|-1)$.
\end{itemize}
\end{lemma}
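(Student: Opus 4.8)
The plan is to take Plesn\'ik's reduction from $\SAT$ to Hamiltonicity on directed graphs of in- and out-degree at most two, make it gap-preserving, and then verify the two implications. Recall the shape of such a construction: each variable $x$ gives a \emph{variable gadget} that a path can traverse cleanly in exactly two ways, one encoding $x=\mathrm{true}$ and the other $x=\mathrm{false}$; each clause $c$ gives a \emph{clause gadget} that can be covered only by a short detour hanging off the variable gadget of one of $c$'s literals, and only when that gadget is traversed in the orientation that makes the literal true; and all the gadgets are strung together into a single cycle. To fit the definition of \textsc{Max Vertex-Disjoint Path Cover(2)} I would cut this cycle at a fixed edge and attach a degree-one source $s'$ and a degree-one sink $t'$ (the device in the footnote after the definition of \textsc{Hamiltonian Path}); because every variable of a \textsc{Max-3SAT(29)} instance occurs in at most $29$ clauses, every gadget has bounded size, so $|V| = \Theta(n+m) = \Theta(m)$, all degrees stay at most two, $s'$ and $t'$ are the unique source and sink, and the construction runs in polynomial time.

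\textbf{Completeness.} If the formula has a satisfying assignment, I would thread each variable gadget along the branch dictated by its truth value and, for each clause $c$, select one literal of $c$ that the assignment makes true and take the detour that absorbs $c$'s gadget through that literal's variable gadget; the way the gadgets are chained then forces a unique completion that visits every vertex, i.e. a Hamiltonian $s'$--$t'$ path. So the resulting instance has a path cover with $|V|-1$ edges, which is the maximum possible.

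\textbf{Soundness.} I would prove the contrapositive. A vertex-disjoint path cover with $p$ paths has exactly $|V|-p$ edges, so if some cover $\mathcal{P}$ has more than $(1-\alpha_{\MPC})(|V|-1)$ edges then it has fewer than $1+\alpha_{\MPC}(|V|-1)$ paths; equivalently, $\mathcal{P}$ is at most $\alpha_{\MPC}(|V|-1)$ edges short of a single Hamiltonian path, and I call those missing edges the \emph{breaks}. The core is a robustness claim: each break can disturb the clean traversal of only a bounded number of gadgets, so all but at most $O(\alpha_{\MPC}(|V|-1))$ of the variable gadgets are traversed in one of the two canonical ways. I would read off an assignment $\sigma$ from those cleanly-traversed gadgets (and arbitrarily elsewhere) and bound the clauses $\sigma$ fails: such a clause must either (i) have one of its literals on a disturbed variable gadget --- and, by the occurrence bound, each disturbed gadget accounts for at most $29$ clauses --- or (ii) have its clause gadget covered by $\mathcal{P}$ anyway, which a local case analysis shows can happen only at the expense of a break touching that gadget. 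Hence the number of clauses unsatisfied by $\sigma$ is at most $C\,\alpha_{\MPC}(|V|-1)$ for an absolute constant $C$; arranging the bookkeeping (the factor $29$ from occurrences, a constant from the gadget sizes, and the vertex count $|V|-1 = \Theta(m)$) so that $C\,(|V|-1)/m \le 4060$, and using $\alpha_{\MPC}=\alpha_{\SAT}/4060$, this is at most $\alpha_{\SAT}m$. Therefore every assignment of the formula satisfies more than $(1-\alpha_{\SAT})m$ clauses, which is exactly the contrapositive of the second bullet.

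The robustness claim is where the real work lies: one must enumerate how an honest path (not a Hamiltonian path) can run through Plesn\'ik's variable and clause gadgets, argue that each break localizes to a constant-size region, and then charge every unsatisfied clause either to a disturbed variable gadget or to a nearby break --- the careful version of this accounting, together with the exact vertex count of the construction, is what produces the constant $4060$. A minor additional chore is re-checking that cutting the cycle and adding $s'$ and $t'$ really does preserve the unique-source/unique-sink property and the in-/out-degree bound of two demanded by \textsc{Max Vertex-Disjoint Path Cover(2)}.
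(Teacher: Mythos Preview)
Your proposal is correct and follows essentially the same route as the paper. The paper organizes the soundness argument through the notion of \emph{endpoints} of the path cover (rather than your equivalent \emph{breaks}) and packages your robustness claim as two short lemmas built on explicitly defined \emph{variable} and \emph{clause territories}: an unsatisfied clause forces an endpoint in its clause territory, and each endpoint lies in at most $29$ clause territories; together with the exact vertex count $|V|\le 70m$ this yields $58\cdot 70 = 4060$, which is precisely your bookkeeping.
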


Our proof of Lemma~\ref{lem-mvdpc-approx} is provided in Section~\ref{proof-mvdpc-approx}.
Lemma~\ref{lem-mvdpc-approx} and Lemma~\ref{lem-3sat-approx} immediately imply the NP-hardness for the gap problem of \textsc{Max Vertex-Disjoint Path Cover(2)} described in Theorem~\ref{thm:gapmvdpc}.

\subsection{Construction of \textsc{Max Vertex-Disjoint Path Cover(2)} Instance from \textsc{Max-3SAT(29)} Instance} \label{plesnik-construct}

Assume that we are given a \textsc{Max-3SAT(29)} instance $\phi$; namely, $\phi$ consists of $n$ variables $x_1, \ldots, x_n$, and $m$ clauses $c_1, \ldots, c_m$ where each $c_j$ is a disjunction of (at most) $3$ literals $y_{j1}, y_{j2}, y_{j3} \in \{x_1, \overline{x_1},\ldots, x_n, \overline{x_n}\}$, such that no variable appears as literals more than $29$ times. Using Plesn{\'i}k's construction, we similarly define the clause gadget and the variable gadget as shown in Figure~\ref{fig:gadgets}; each gadget is enclosed in a rounded dashed rectangle.\footnote{We contracted some edges in Plesn{\'i}k's gadgets; this allows us to achieve a slightly better value of $\alpha_{\MPC}$.} Each variable gadget has two \emph{literal edges} corresponding to $x_i$ and $\overline{x_i}$, respectively. Figure \ref{fig:connect} shows how we connect our gadgets together. Unlike Plesn{\'i}k's construction, we create a source node $s$ and a target node $t$ rather than creating an edge between the top vertices of the first clause and variable gadgets.

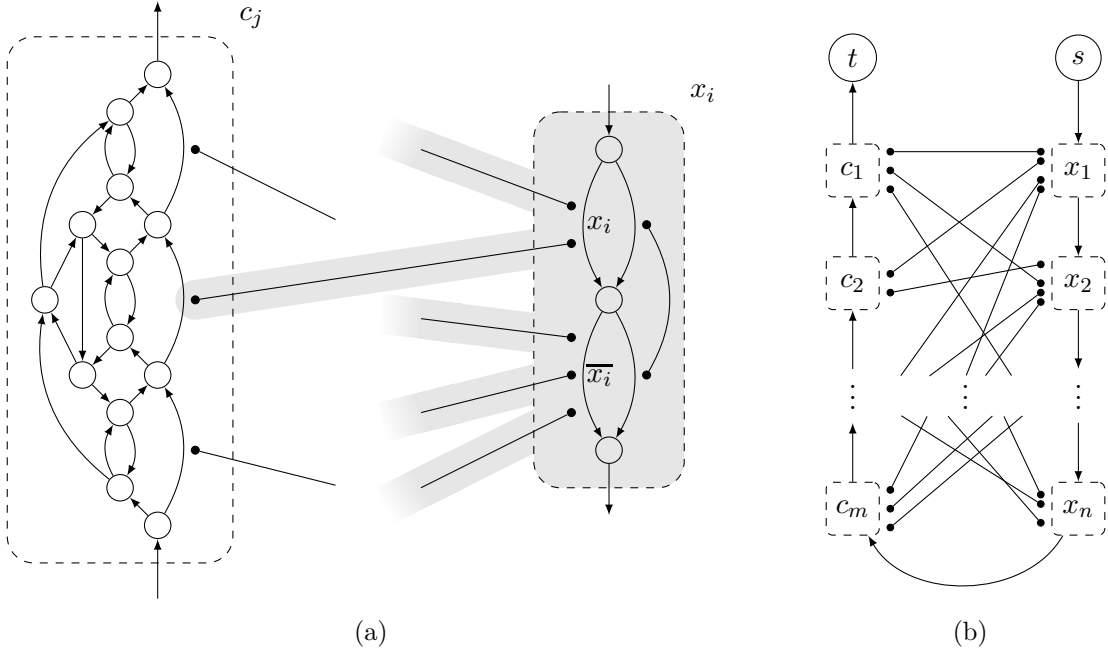
\begin{figure}
    \centering
    \begin{subfigure}[b]{0.65\textwidth}
    	\centering
        \begin{tikzpicture}[scale=0.5]

\node [circle, draw, inner sep=0pt, minimum width=10pt] (v4) at (-6,6) {};
\node [circle, draw, inner sep=0pt, minimum width=10pt] (v3) at (-7,5) {};
\node [circle, draw, inner sep=0pt, minimum width=10pt] (v2) at (-7,3) {};
\node [circle, draw, inner sep=0pt, minimum width=10pt] (v1) at (-6,2) {};
\node [circle, draw, inner sep=0pt, minimum width=10pt] (v8) at (-7,1) {};
\node [circle, draw, inner sep=0pt, minimum width=10pt] (v7) at (-7,-1) {};
\node [circle, draw, inner sep=0pt, minimum width=10pt] (v9) at (-6,-2) {};
\node [circle, draw, inner sep=0pt, minimum width=10pt] (v6) at (-7,-3) {};
\node [circle, draw, inner sep=0pt, minimum width=10pt] (v11) at (-7,-5) {};
\node [circle, draw, inner sep=0pt, minimum width=10pt] (v10) at (-6,-6) {};
\node [circle, draw, inner sep=0pt, minimum width=10pt] (v13) at (-8,-2) {};
\node [circle, draw, inner sep=0pt, minimum width=10pt] (v12) at (-8,2) {};
\node [circle, draw, inner sep=0pt, minimum width=10pt] (v5) at (-9,0) {};
\node [circle, draw=none, fill=gray!20, inner sep=0pt, minimum width=15pt] at (-5,0) {};
\draw [-latex] (v9) edge (v7);
\draw [-latex] (v8) edge (v1);
\draw [-latex] (v2) edge (v12);
\draw [-latex] (v12) edge (v13);
\draw [-latex] (v10) edge (v11);
\draw [-latex] (v6) edge (v9);
\draw [-latex] (v7) edge (v13);
\draw [-latex] (v12) edge (v8);
\draw [-latex] (v5) edge (v12);
\draw [-latex] (v13) edge (v5);
\draw [-latex] (v1) edge (v2);
\draw [-latex] (v3) edge (v4);
\draw [-latex] (v13) edge (v6);
\draw [-latex, bend left] (v11) edge (v6);
\draw [-latex, bend left] (v6) edge (v11);
\draw [-latex, bend left] (v11) edge (v5);
\draw [-latex, bend left] (v5) edge (v3);
\draw [-latex, bend left] (v3) edge (v2);
\draw [-latex, bend left] (v2) edge (v3);
\draw [-latex, bend left] (v7) edge (v8);
\draw [-latex, bend left] (v8) edge (v7);
\draw [-latex, bend right] (v10) edge (v9);
\draw [-latex, bend right] (v9) edge (v1);
\draw [-latex, bend right] (v1) edge (v4);
\node [draw=none, inner sep=0pt] (v17) at (-6,8) {};
\node [draw=none, inner sep=0pt] (v18) at (-6,-8) {};
\draw [-, line width = 15, color=gray!20] (-5,0) edge (5,1.5);

\node [circle, draw, fill=black, inner sep=0pt, minimum width=3pt] (v23) at (-5,4) {};
\node [circle, draw, fill=black, inner sep=0pt, minimum width=3pt] (v21) at (-5,0) {};
\node [circle, draw, fill=black, inner sep=0pt, minimum width=3pt] (v33) at (-5,-4) {};

\draw [rounded corners = 10pt, dashed] (-10,7) -- (-4,7) -- (-4,-7) -- (-10,-7) -- cycle;
\draw [draw=none, rounded corners = 10pt, fill=gray!20] (4,5) -- (8,5) -- (8,-5) -- (4,-5) -- cycle;
\node [circle, draw, inner sep=0pt, minimum width=10pt] (v14) at (6,4) {};
\node [circle, draw, inner sep=0pt, minimum width=10pt] (v15) at (6,0) {};
\node [circle, draw, inner sep=0pt, minimum width=10pt] (v16) at (6,-4) {};
\draw [-latex, bend left] (v14) edge (v15);
\draw [-latex, bend left] (v15) edge (v16);
\draw [-latex, bend right] (v14) edge (v15);
\draw [-latex, bend right] (v15) edge (v16);
\node (v19) at (6,6) {};
\node (v20) at (6,-6) {};
\draw [-latex] (v4) edge (v17);
\draw [-latex] (v18) edge (v10);
\draw [-latex] (v19) edge (v14);
\draw [-latex] (v16) edge (v20);
\node [circle, draw, fill=black, inner sep=0pt, minimum width=3pt] (v22) at (5,1.5) {};
\node [circle, draw, fill=black, inner sep=0pt, minimum width=3pt] (v30) at (5,-2) {};
\node [circle, draw, fill=black, inner sep=0pt, minimum width=3pt] (v35) at (7,2) {};
\node [circle, draw, fill=black, inner sep=0pt, minimum width=3pt] (v36) at (7,-2) {};
\node [circle, draw, fill=black, inner sep=0pt, minimum width=3pt] (v32) at (5,-3) {};
\node [circle, draw, fill=black, inner sep=0pt, minimum width=3pt] (v28) at (5,-1) {};
\node [circle, draw, fill=black, inner sep=0pt, minimum width=3pt] (v26) at (5,2.5) {};

\node [draw=none,fill=none] (v24) at (-1,2) {};
\node (v34) at (-1,-5) {};
\node[rectangle, draw=none, shading = axis, rotate=-20.56, left color=gray!40, right color=white, shading angle=270-20.56,inner sep=0pt, minimum width=30pt, minimum height = 15pt] (v25) at (1, 4) {};
\node[rectangle, draw=none, shading = axis, rotate=-7.20, left color=gray!40, right color=white, shading angle=270,inner sep=0pt, minimum width=30pt, minimum height = 15pt] (v27) at (1, -0.5) {};
\node[rectangle, draw=none, shading = axis, rotate=14.04, left color=gray!40, right color=white, shading angle=270+14.04,inner sep=0pt, minimum width=30pt, minimum height = 15pt] (v29) at (1, -3) {};
\node[rectangle, draw=none, shading = axis, rotate=26.57, left color=gray!40, right color=white, shading angle=270+26.57,inner sep=0pt, minimum width=30pt, minimum height = 15pt] (v31) at (1, -5) {};
\draw [-, line width = 15, color=gray!20]  (1, 4) edge (v26);
\draw [-, line width = 15, color=gray!20]  (1, -0.5) edge (v28);
\draw [-, line width = 15, color=gray!20]  (1, -3) edge (v30);
\draw [-, line width = 15, color=gray!20]  (1, -5) edge (v32);
\draw [-] (v21) edge (v22);
\draw [-] (v23) edge (v24);
\draw [-] (1, 4) edge (v26);
\draw [-] (1, -0.5) edge (v28);
\draw [-] (1, -3) edge (v30);
\draw [-] (1, -5) edge (v32);
\draw [-] (v33) edge (v34);

\draw [bend left] (v35) edge (v36);
\draw [rounded corners = 10pt, dashed] (4,5) -- (8,5) -- (8,-5) -- (4,-5) -- cycle;

\node at (-3.5,7.5) {$c_j$};
\node at (8.5,5.5) {$x_i$};
\node at (5.75,2) {$x_i$};
\node at (5.75,-2) {$\overline{x_i}$};
\end{tikzpicture}
        \caption{}
        \label{fig:gadgets}
    \end{subfigure}
    \begin{subfigure}[b]{0.3\textwidth}
    	\centering
        \begin{tikzpicture} [scale=0.5]

\node[circle, draw, radius=20pt] (v10) at (0,0) {$t$};
\node[circle, draw, radius=20pt] (v1) at (6,0) {$s$};
\node[rounded corners=2.5pt, draw, dashed, minimum height=20pt, minimum width=20pt] (v9) at (0,-3) {$c_1$};
\node[rounded corners=2.5pt, draw, dashed, minimum height=20pt, minimum width=20pt] (v8) at (0,-6) {$c_2$};
\node[rounded corners=2.5pt, dashed, minimum height=20pt, minimum width=20pt,text height=10pt] (v7) at (0,-9) {$\vdots$};
\node[rounded corners=2.5pt, draw, dashed, minimum height=20pt, minimum width=20pt] (v6) at (0,-12) {$c_m$};

\node[rounded corners=2.5pt, draw, dashed, minimum height=20pt, minimum width=20pt] (v2) at (6,-3) {$x_1$};
\node[rounded corners=2.5pt, draw, dashed, minimum height=20pt, minimum width=20pt] (v3) at (6,-6) {$x_2$};
\node[rounded corners=2.5pt, dashed, minimum height=20pt, minimum width=20pt,text height=10pt] (v4) at (6,-9) {$\vdots$};
\node[rounded corners=2.5pt, draw, dashed, minimum height=20pt, minimum width=20pt] (v5) at (6,-12) {$x_n$};

\node[rounded corners=2.5pt, dashed, minimum height=20pt, minimum width=20pt,text height=10pt] at (3,-9) {$\vdots$};

\node[circle, draw, fill=black, inner sep=0pt, minimum width=2.5pt] (v11) at (1,-2.5) {};
\node[circle, draw, fill=black, inner sep=0pt, minimum width=2.5pt] (v13) at (1,-3) {};
\node[circle, draw, fill=black, inner sep=0pt, minimum width=2.5pt] (v15) at (1,-3.5) {};

\node[circle, draw, fill=black, inner sep=0pt, minimum width=2.5pt] (v17) at (1,-5.75) {};
\node[circle, draw, fill=black, inner sep=0pt, minimum width=2.5pt] (v19) at (1,-6.25) {};

\node[circle, draw, fill=black, inner sep=0pt, minimum width=2.5pt] (v29) at (1,-11.5) {};
\node[circle, draw, fill=black, inner sep=0pt, minimum width=2.5pt] (v30) at (1,-12) {};
\node[circle, draw, fill=black, inner sep=0pt, minimum width=2.5pt] (v31) at (1,-12.5) {};

\draw [-latex] (v1) edge (v2);
\draw [-latex] (v2) edge (v3);
\draw [-latex] (v3) edge (v4);
\draw [-latex] (v4) edge (v5);
\draw [-latex] (v6) edge (v7);
\draw [-latex] (v7) edge (v8);
\draw [-latex] (v8) edge (v9);
\draw [-latex] (v9) edge (v10);

\draw [-latex, bend left = 60] (v5) edge (v6);

\node[circle, draw, fill=black, inner sep=0pt, minimum width=2.5pt] (v12) at (5,-2.5) {};
\node[circle, draw, fill=black, inner sep=0pt, minimum width=2.5pt] (v18) at (5,-2.75) {};
\node[circle, draw, fill=black, inner sep=0pt, minimum width=2.5pt] (v21) at (5,-3.25) {};
\node[circle, draw, fill=black, inner sep=0pt, minimum width=2.5pt] (v23) at (5,-3.5) {};

\node[circle, draw, fill=black, inner sep=0pt, minimum width=2.5pt] (v20) at (5,-5.5) {};
\node[circle, draw, fill=black, inner sep=0pt, minimum width=2.5pt] (v14) at (5,-6) {};
\node[circle, draw, fill=black, inner sep=0pt, minimum width=2.5pt] (v27) at (5,-6.25) {};
\node[circle, draw, fill=black, inner sep=0pt, minimum width=2.5pt] (v25) at (5,-6.5) {};

\node[circle, draw, fill=black, inner sep=0pt, minimum width=2.5pt] (v32) at (5,-11.625) {};
\node[circle, draw, fill=black, inner sep=0pt, minimum width=2.5pt] (v33) at (5,-11.875) {};
\node[circle, draw, fill=black, inner sep=0pt, minimum width=2.5pt] (v34) at (5,-12.375) {};

\node [circle, inner sep=0pt] (v22) at (1.25,-9.5) {};
\node [circle, inner sep=0pt] (v24) at (2,-9.5) {};
\node [circle, inner sep=0pt] (v35) at (2.5,-9.5) {};
\node [circle, inner sep=0pt] (v16) at (4,-9.5) {};
\node [circle, inner sep=0pt] (v26) at (4.55,-9.5) {};
\node [circle, inner sep=0pt] (v28) at (3.75,-9.5) {};
\node [circle, inner sep=0pt] (w22) at (1.25,-8.5) {};
\node [circle, inner sep=0pt] (w24) at (2,-8.5) {};
\node [circle, inner sep=0pt] (w16) at (3,-8.5) {};
\node [circle, inner sep=0pt] (w26) at (4.25,-8.5) {};
\node [circle, inner sep=0pt] (w28) at (3.5,-8.5) {};
\draw [-] (v11) edge (v12);
\draw [-] (v13) edge (v14);
\draw [-] (v15) edge (w26);
\draw [-] (v17) edge (v18);
\draw [-] (v19) edge (v20);
\draw [-] (v29) edge (v24);
\draw [-] (v30) edge (v28);
\draw [-] (v31) edge (v26);
\draw [-] (v32) edge (v16);
\draw [-] (v33) edge (v22);
\draw [-] (v34) edge (v35);
\draw [-] (v21) edge (w22);
\draw [-] (v23) edge (w16);
\draw [-] (v27) edge (w24);
\draw [-] (v25) edge (w28);
\end{tikzpicture}
        \caption{}
        \label{fig:connect}
    \end{subfigure}
    \caption{(a) A clause gadget for $c_j$ (left) and a variable gadget for $x_i$ (right), where $c_j$ contains $x_i$ as its second literal; the shaded area indicates $x_i$'s variable territory. (b) Full construction of our \textsc{Max Vertex-Disjoint Path Cover(2)} instance.}
\end{figure}

Each arc with black circular endpoints is a shorthand notation for an \emph{exclusive-or line (XOR line)}. The XOR gadget, which is the realization of an XOR line, is given in Figures \ref{fig:xor-short}-\ref{fig:xor-real}. There is one XOR line within a variable gadget on the right, and an XOR line in the middle connecting a literal edge to each occurrence of that literal in each clause. In Figure~\ref{fig:gadgets}, for example, clause $c_j$ contains a three literals, the middle of which is $y_{j2} = x_i$.

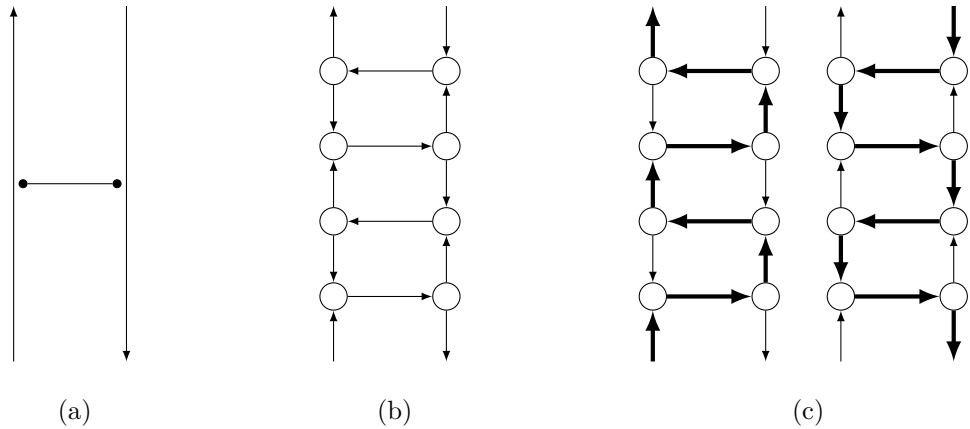
\begin{figure}
    \centering
    \begin{subfigure}[b]{0.25\textwidth}
    	\centering
        \begin{tikzpicture}[scale=0.5]

\node (v1) at (-6,0) {};
\node (v3) at (-3,0) {};

\node (v2) at (-6,10) {};
\node (v4) at (-3,10) {};

\node [circle, draw, fill=black, inner sep=0pt, minimum width=3pt] (v41) at (-5.75,5) {};
\node [circle, draw, fill=black, inner sep=0pt, minimum width=3pt] (v42) at (-3.25,5) {};

\draw [-latex] (v1) edge (v2);
\draw [-latex] (v4) edge (v3);
\draw [-] (v41) edge (v42);
\end{tikzpicture}
        \caption{}
        \label{fig:xor-short}
    \end{subfigure}
    \begin{subfigure}[b]{0.25\textwidth}
    	\centering
        \begin{tikzpicture}[scale=0.5]

\node [draw, circle, radius = 10pt] (v7) at (0,8) {};
\node [draw, circle, radius = 10pt] (v14) at (0,6) {};
\node [draw, circle, radius = 10pt] (v13) at (0,4) {};
\node [draw, circle, radius = 10pt] (v6) at (0,2) {};
\node [draw, circle, radius = 10pt] (v10) at (3,8) {};
\node [draw, circle, radius = 10pt] (v15) at (3,6) {};
\node [draw, circle, radius = 10pt] (v12) at (3,4) {};
\node [draw, circle, radius = 10pt] (v11) at (3,2) {};

\node (v5) at (0,0) {};
\node (v16) at (3,0) {};

\node (v8) at (0,10) {};
\node (v9) at (3,10) {};

\draw [-latex] (v5) edge (v6);
\draw [-latex] (v6) edge (v11);
\draw [-latex] (v11) edge (v12);
\draw [-latex] (v12) edge (v13);
\draw [-latex] (v13) edge (v14);
\draw [-latex] (v14) edge (v15);
\draw [-latex] (v15) edge (v10);
\draw [-latex] (v10) edge (v7);
\draw [-latex] (v7) edge (v8);
\draw [-latex] (v9) edge (v10);
\draw [-latex] (v7) edge (v14);
\draw [-latex] (v15) edge (v12);
\draw [-latex] (v13) edge (v6);
\draw [-latex] (v11) edge (v16);
\end{tikzpicture}
        \caption{}
        \label{fig:xor-real}
    \end{subfigure}
    \begin{subfigure}[b]{0.4\textwidth}
    	\centering
        \begin{tikzpicture}[scale=0.5]

\node [draw, circle, radius = 10pt] (v19) at (6,8) {};
\node [draw, circle, radius = 10pt] (v20) at (6,6) {};
\node [draw, circle, radius = 10pt] (v23) at (6,4) {};
\node [draw, circle, radius = 10pt] (v24) at (6,2) {};
\node [draw, circle, radius = 10pt] (v18) at (9,8) {};
\node [draw, circle, radius = 10pt] (v21) at (9,6) {};
\node [draw, circle, radius = 10pt] (v22) at (9,4) {};
\node [draw, circle, radius = 10pt] (v25) at (9,2) {};
\node [draw, circle, radius = 10pt] (v35) at (11,8) {};
\node [draw, circle, radius = 10pt] (v32) at (11,6) {};
\node [draw, circle, radius = 10pt] (v31) at (11,4) {};
\node [draw, circle, radius = 10pt] (v28) at (11,2) {};
\node [draw, circle, radius = 10pt] (v34) at (14,8) {};
\node [draw, circle, radius = 10pt] (v33) at (14,6) {};
\node [draw, circle, radius = 10pt] (v30) at (14,4) {};
\node [draw, circle, radius = 10pt] (v29) at (14,2) {};

\node (v37) at (6,0) {};
\node (v26) at (9,0) {};
\node (v27) at (11,0) {};
\node (v40) at (14,0) {};

\node (v38) at (6,10) {};
\node (v17) at (9,10) {};
\node (v36) at (11,10) {};
\node (v39) at (14,10) {};

\draw [-latex] (v17) edge (v18);
\draw [-latex] (v19) edge (v20);
\draw [-latex] (v21) edge (v22);
\draw [-latex] (v23) edge (v24);
\draw [-latex] (v25) edge (v26);
\draw [-latex] (v27) edge (v28);
\draw [-latex] (v29) edge (v30);
\draw [-latex] (v31) edge (v32);
\draw [-latex] (v33) edge (v34);
\draw [-latex] (v35) edge (v36);
\draw [-latex, ultra thick] (v37) edge (v24);
\draw [-latex, ultra thick] (v24) edge (v25);
\draw [-latex, ultra thick] (v25) edge (v22);
\draw [-latex, ultra thick] (v22) edge (v23);
\draw [-latex, ultra thick] (v23) edge (v20);
\draw [-latex, ultra thick] (v20) edge (v21);
\draw [-latex, ultra thick] (v21) edge (v18);
\draw [-latex, ultra thick] (v18) edge (v19);
\draw [-latex, ultra thick] (v19) edge (v38);
\draw [-latex, ultra thick] (v39) edge (v34);
\draw [-latex, ultra thick] (v34) edge (v35);
\draw [-latex, ultra thick] (v35) edge (v32);
\draw [-latex, ultra thick] (v32) edge (v33);
\draw [-latex, ultra thick] (v33) edge (v30);
\draw [-latex, ultra thick] (v30) edge (v31);
\draw [-latex, ultra thick] (v31) edge (v28);
\draw [-latex, ultra thick] (v28) edge (v29);
\draw [-latex, ultra thick] (v29) edge (v40);
\end{tikzpicture}
        \caption{}
        \label{fig:xor-noendpts}
    \end{subfigure}
    \caption{(a)~The shorthand notation for an XOR line. (b)~The XOR gadget realizing the XOR line of~(a). (c)~The only two possible configurations of path cover (thick edges) without inducing endpoints within an XOR line.}
\end{figure}

For each variable $x_i$, we define its \emph{variable territory} as the set of vertices in its variable gadget, along with the eight vertices introduced by each of its XOR lines, including those on the clause gadgets where $x_i$ or $\overline{x_i}$ appear. This is illustrated as the shaded region on Figure~\ref{fig:gadgets}. We also define a \emph{clause territory} of each clause $c_j$ as the union between the vertices of $c_j$'s variable gadget, and the variable territories of variables whose literals appear in $c_j$.

\subsection{Proof of Lemma~\ref{lem-mvdpc-approx}} \label{proof-mvdpc-approx}

We will next prove each half of Lemma~\ref{lem-mvdpc-approx}.

\subsubsection*{\textsc{Max-3SAT(29)}$\implies$\textsc{Max Vertex-Disjoint Path Cover(2)}}

Suppose that the input \textsc{Max-3SAT(29)} instance is satisfiable. We need to show that the constructed \textsc{Max Vertex-Disjoint Path Cover(2)} instance contains a path cover of size $|V|-1$. Observe that there is a path cover of size $|V|-1$ if and only if the path cover consists entirely of a single (Hamiltonian) path from $s$ to $t$.

Let us define \emph{endpoints} of a path cover as a set of the sources and the destinations of paths in the cover. Then, it is sufficient to give a path cover that has no other endpoints besides $s$ and $t$. First, we establish the following observation which can be easily checked via a simple case analysis.

\begin{observation} \label{obs-xor}
Consider Figure~\ref{fig:xor-noendpts}. If a path cover has no endpoints on any of the eight vertices introduced by an XOR line, then exactly one edge of the XOR line (in the shorthand notation) is in the path cover. More precisely, in the actual realization, either the pair of edges entering and leaving the XOR gadget on the left, or the pair on the right, is in the path cover.
\end{observation}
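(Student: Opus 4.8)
The plan is a direct finite case analysis on the XOR gadget of Figure~\ref{fig:xor-real}, using the hypothesis to constrain how the path cover passes through each of the eight internal vertices. A vertex is an endpoint of the path cover precisely when it is the source or the destination of some path (a length-zero path counting as both), so the hypothesis that there are no endpoints among the eight internal vertices is equivalent to saying that each of them has exactly one incoming gadget-edge and exactly one outgoing gadget-edge in the cover. The first step is to use the internal vertices whose out-degree inside the gadget is one (and, dually, those whose in-degree is one): each such vertex must take its unique outgoing (resp.\ incoming) edge, which forces the four ``spine'' edges that run around the outer cycle of the gadget between consecutive internal vertices.

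Next I would parametrize the remaining freedom by the three ``chord'' edges, writing $a,b,c$ for the binary indicator of whether each chord is in the cover. Imposing the one-in/one-out condition at every internal vertex gives a small system of linear equations: it forces $b=a$, makes the two external edges on the left side both equal to $1-a$, makes the two external edges on the right side both equal to $c$, and determines every remaining edge in terms of $a$ and $c$. So a priori there are only four configurations, indexed by $(a,c)\in\{0,1\}^2$; in two of them the left external pair lies in the cover and the right does not (or vice versa), while in the other two the two sides behave the same way.

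The crux is then to discard the two bad configurations by using that a vertex-disjoint path cover contains no directed cycle. Tracing the forced edges, I expect each bad $(a,c)$ to make the taken internal edges close up into a directed $4$-cycle through four of the eight internal vertices, contradicting acyclicity. Only the two good configurations remain, and they are exactly ``left external pair in, right out'' and ``right external pair in, left out''; translating back through the shorthand of Figure~\ref{fig:xor-short}, this says precisely that exactly one of the two arcs of the XOR line is in the cover, as claimed.

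The only real obstacle is bookkeeping: the gadget is asymmetric (the XOR line runs upward on the left and downward on the right), so one must keep edge orientations and in-/out-degree counts straight at all eight vertices. Once the one-in/one-out equations are written down, solving the linear system and spotting the two forced cycles is mechanical, in line with the paper's remark that the observation follows from ``a simple case analysis''.
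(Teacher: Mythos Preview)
Your proposal is correct and is exactly the ``simple case analysis'' the paper alludes to but does not spell out: forcing the four spine edges from the degree-one constraints, reducing to the two binary parameters $(a,c)$, and eliminating $(0,1)$ and $(1,0)$ because each creates a directed $4$-cycle among the internal vertices. There is nothing to add; your bookkeeping plan is sound and matches the gadget's asymmetry.
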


Now we are ready to explicitly provide the path cover; this is largely the same as that from in \cite{plesnik}, but included here for completeness. We begin by adding the literal edges corresponding to the optimal (satisfying) assignment of the \textsc{Max-3SAT(29)}; that is, if $x_i = \textsc{True}$ we add the literal edge $x_i$ to the cover; otherwise we add the literal edge $\overline{x_i}$. Due to Observation~\ref{obs-xor} above, the edges in the variable gadgets in our cover are fully determined; an example in case $x_i$ is set to \textsc{True} is given in Figure~\ref{fig:assignment}. Observe that the path starting from $s$ visits all vertices of all variable gadgets, and, at the same time, restricts that the edge on the opposite end of the XOR line is traversed if and only if the corresponding literal is \textsc{False}.

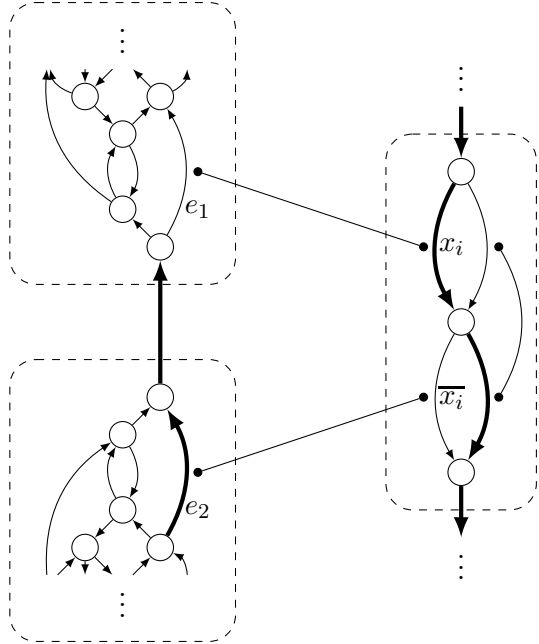
\begin{figure}
    \centering
    \begin{tikzpicture}[scale=0.5]

\node [circle, draw, inner sep=0pt, minimum width=10pt] (v4) at (-6,-2) {};
\node [circle, draw, inner sep=0pt, minimum width=10pt] (v3) at (-7,-3) {};
\node [circle, draw, inner sep=0pt, minimum width=10pt] (v2) at (-7,-5) {};
\node [circle, draw, inner sep=0pt, minimum width=10pt] (v1) at (-6,-6) {};
\node [circle, draw, inner sep=0pt, minimum width=10pt] (v9) at (-6,6) {};
\node [circle, draw, inner sep=0pt, minimum width=10pt] (v6) at (-7,5) {};
\node [circle, draw, inner sep=0pt, minimum width=10pt] (v11) at (-7,3) {};
\node [circle, draw, inner sep=0pt, minimum width=10pt] (v10) at (-6,2) {};
\node [circle, draw, inner sep=0pt, minimum width=10pt] (v13) at (-8,6) {};
\node [circle, draw, inner sep=0pt, minimum width=10pt] (v12) at (-8,-6) {};

\node [circle, draw, fill=black, inner sep=0pt, minimum width=3pt] (v23) at (-5,-4) {};
\node [circle, draw, fill=black, inner sep=0pt, minimum width=3pt] (v37) at (-5,4) {};

\draw [rounded corners = 10pt, dashed] (-10,8.5) -- (-4,8.5) -- (-4,1) -- (-10,1) -- cycle;
\draw [rounded corners = 10pt, dashed] (-10,-1) -- (-4,-1) -- (-4,-8.5) -- (-10,-8.5) -- cycle;
\draw [rounded corners = 10pt, dashed] (0,5) -- (4,5) -- (4,-5) -- (0,-5) -- cycle;
\node [circle, draw, inner sep=0pt, minimum width=10pt] (v14) at (2,4) {};
\node [circle, draw, inner sep=0pt, minimum width=10pt] (v15) at (2,0) {};
\node [circle, draw, inner sep=0pt, minimum width=10pt] (v16) at (2,-4) {};
\node (v19) at (2,6) {};
\node (v20) at (2,-6) {};
\node [circle, draw, fill=black, inner sep=0pt, minimum width=3pt] (v22) at (1,2) {};
\node [circle, draw, fill=black, inner sep=0pt, minimum width=3pt] (v30) at (1,-2) {};
\node [circle, draw, fill=black, inner sep=0pt, minimum width=3pt] (v35) at (3,2) {};
\node [circle, draw, fill=black, inner sep=0pt, minimum width=3pt] (v36) at (3,-2) {};

\draw [-latex, bend right] (v10) edge (v9);
\draw [-latex, bend right] (v15) edge (v16);
\draw [-latex, bend left] (v14) edge (v15);
\draw [-latex, bend left] (v6) edge (v11);
\draw [-latex, bend left] (v11) edge (v6);
\draw [-latex, bend left] (v3) edge (v2);
\draw [-latex, bend left] (v2) edge (v3);
\node (v5) at (-9,-7) {};
\node (v8) at (-7,-7) {};
\node (v7) at (-8,-7) {};
\node (v17) at (-5.25,-7) {};
\draw [-latex, bend left] (v5) edge (v3);
\draw [-latex] (v5) edge (v12);
\draw [-latex] (v12) edge (v7);
\draw [-latex] (v12) edge (v8);
\draw [-latex] (v3) edge (v4);
\draw [-latex] (v2) edge (v12);
\draw [-latex] (v1) edge (v2);
\draw [-latex] (v8) edge (v1);
\draw [-latex] (v10) edge (v11);
\draw [-latex] (v6) edge (v9);
\draw [-latex] (v13) edge (v6);
\draw [-latex,bend right] (v17) edge (v1);
\node (v31) at (-9,7) {};
\node (v32) at (-8,7) {};
\node (v33) at (-7,7) {};
\node (v34) at (-5.25,7) {};
\draw [-latex,bend right] (v9) edge (v34);
\draw [-latex,bend left] (v13) edge (v31);
\draw [-latex, bend left] (v11) edge (v31);
\draw [-latex] (v32) edge (v13);
\draw [-latex] (v33) edge (v13);
\draw [-latex] (v9) edge (v33);
\node [text height=10pt,minimum height=20pt] at (-7,7.5) {$\vdots$};
\node [text height=10pt,minimum height=20pt] at (-7,-7.5) {$\vdots$};
\node [text height=10pt,,minimum height=20pt] at (2,6.5) {$\vdots$};
\node [text height=10pt,minimum height=20pt] at (2,-6.5) {$\vdots$};
\draw  (v22) edge (v37);
\draw  (v30) edge (v23);
\node at (1.75,2) {${x_i}$};
\node at (1.75,-2) {$\overline{x_i}$};

\draw [-latex, ultra thick] (v4) edge (v10);
\draw [-latex, ultra thick, bend right] (v1) edge (v4);
\draw [-latex, ultra thick, bend right] (v14) edge (v15);
\draw [-latex, ultra thick, bend left] (v15) edge (v16);
\draw [-latex, ultra thick] (v19) edge (v14);
\draw [-latex, ultra thick] (v16) edge (v20);
\draw [-, bend left] (v35) edge (v36);
\node at (-5,3) {$e_1$};
\node at (-5,-5) {$e_2$};
\end{tikzpicture}
    \caption{Thick edges indicate edges that must be in the path cover when $x_i$ is set to \textsc{True}. Edge $e_1$ corresponds to a \textsc{True} literal ($x_i$), so it must not be traversed when we visit the clause side of the graph; edge $e_2$ corresponds to a \textsc{False} literal ($\overline{x_i}$) and therefore must be traversed.}
    \label{fig:assignment}
\end{figure}

Now consider our path when it reaches the clause side of the graph. As shown in Figure~\ref{fig:satisfy}, if the clause is satisfied, some edges on the right of the clause gadget does not need to be traversed. By a simple case analysis, we can cover all unvisited vertices on the clause gadget without breaking up our path. Thus, we have constructed a path cover of size $|V|-1$, as needed.

\begin{figure}
    \centering
    \begin{subfigure}[b]{0.6\textwidth}
    	\centering
        \begin{tikzpicture}[scale=0.5]

\node [circle, draw, inner sep=0pt, minimum width=10pt] (a4) at (-7,6) {};
\node [circle, draw, inner sep=0pt, minimum width=10pt] (a3) at (-8,5) {};
\node [circle, draw, inner sep=0pt, minimum width=10pt] (a2) at (-8,3) {};
\node [circle, draw, inner sep=0pt, minimum width=10pt] (a1) at (-7,2) {};
\node [circle, draw, inner sep=0pt, minimum width=10pt] (a8) at (-8,1) {};
\node [circle, draw, inner sep=0pt, minimum width=10pt] (a7) at (-8,-1) {};
\node [circle, draw, inner sep=0pt, minimum width=10pt] (a9) at (-7,-2) {};
\node [circle, draw, inner sep=0pt, minimum width=10pt] (a6) at (-8,-3) {};
\node [circle, draw, inner sep=0pt, minimum width=10pt] (a11) at (-8,-5) {};
\node [circle, draw, inner sep=0pt, minimum width=10pt] (a10) at (-7,-6) {};
\node [circle, draw, inner sep=0pt, minimum width=10pt] (a13) at (-9,-2) {};
\node [circle, draw, inner sep=0pt, minimum width=10pt] (a12) at (-9,2) {};
\node [circle, draw, inner sep=0pt, minimum width=10pt] (a5) at (-10,0) {};
\node [draw=none, inner sep=0pt] (a17) at (-7,8) {};
\node [draw=none, inner sep=0pt] (a18) at (-7,-8) {};

\draw [-latex] (a9) edge (a7);
\draw [-latex] (a8) edge (a1);
\draw [-latex] (a2) edge (a12);
\draw [-latex] (a12) edge (a13);
\draw [-latex] (a10) edge (a11);
\draw [-latex] (a6) edge (a9);
\draw [-latex] (a7) edge (a13);
\draw [-latex] (a12) edge (a8);
\draw [-latex] (a5) edge (a12);
\draw [-latex] (a13) edge (a5);
\draw [-latex] (a1) edge (a2);
\draw [-latex] (a3) edge (a4);
\draw [-latex] (a13) edge (a6);
\draw [-latex, bend left] (a11) edge (a6);
\draw [-latex, bend left] (a6) edge (a11);
\draw [-latex, bend left] (a11) edge (a5);
\draw [-latex, bend left] (a5) edge (a3);
\draw [-latex, bend left] (a3) edge (a2);
\draw [-latex, bend left] (a2) edge (a3);
\draw [-latex, bend left] (a7) edge (a8);
\draw [-latex, bend left] (a8) edge (a7);
\draw [-latex, bend right] (a10) edge (a9);
\draw [-latex, bend right] (a9) edge (a1);
\draw [-latex, bend right] (a1) edge (a4);
\draw [-latex] (a18) edge (a10);
\draw [-latex] (a4) edge (a17);

\node [circle, draw, inner sep=0pt, minimum width=10pt] (c4) at (5,6) {};
\node [circle, draw, inner sep=0pt, minimum width=10pt] (c3) at (4,5) {};
\node [circle, draw, inner sep=0pt, minimum width=10pt] (c2) at (4,3) {};
\node [circle, draw, inner sep=0pt, minimum width=10pt] (c1) at (5,2) {};
\node [circle, draw, inner sep=0pt, minimum width=10pt] (c8) at (4,1) {};
\node [circle, draw, inner sep=0pt, minimum width=10pt] (c7) at (4,-1) {};
\node [circle, draw, inner sep=0pt, minimum width=10pt] (c9) at (5,-2) {};
\node [circle, draw, inner sep=0pt, minimum width=10pt] (c6) at (4,-3) {};
\node [circle, draw, inner sep=0pt, minimum width=10pt] (c11) at (4,-5) {};
\node [circle, draw, inner sep=0pt, minimum width=10pt] (c10) at (5,-6) {};
\node [circle, draw, inner sep=0pt, minimum width=10pt] (c13) at (3,-2) {};
\node [circle, draw, inner sep=0pt, minimum width=10pt] (c12) at (3,2) {};
\node [circle, draw, inner sep=0pt, minimum width=10pt] (c5) at (2,0) {};
\node [draw=none, inner sep=0pt] (c17) at (5,8) {};
\node [draw=none, inner sep=0pt] (c18) at (5,-8) {};

\draw [-latex] (c9) edge (c7);
\draw [-latex] (c8) edge (c1);
\draw [-latex] (c2) edge (c12);
\draw [-latex] (c12) edge (c13);
\draw [-latex] (c10) edge (c11);
\draw [-latex] (c6) edge (c9);
\draw [-latex] (c7) edge (c13);
\draw [-latex] (c12) edge (c8);
\draw [-latex] (c5) edge (c12);
\draw [-latex] (c13) edge (c5);
\draw [-latex] (c1) edge (c2);
\draw [-latex] (c3) edge (c4);
\draw [-latex] (c13) edge (c6);
\draw [-latex, bend left] (c11) edge (c6);
\draw [-latex, bend left] (c6) edge (c11);
\draw [-latex, bend left] (c11) edge (c5);
\draw [-latex, bend left] (c5) edge (c3);
\draw [-latex, bend left] (c3) edge (c2);
\draw [-latex, bend left] (c2) edge (c3);
\draw [-latex, bend left] (c7) edge (c8);
\draw [-latex, bend left] (c8) edge (c7);
\draw [-latex, bend right] (c10) edge (c9);
\draw [-latex, bend right] (c9) edge (c1);
\draw [-latex, bend right] (c1) edge (c4);
\draw [-latex] (c18) edge (c10);
\draw [-latex] (c4) edge (c17);

\node [circle, draw, inner sep=0pt, minimum width=10pt] (b4) at (-1,6) {};
\node [circle, draw, inner sep=0pt, minimum width=10pt] (b3) at (-2,5) {};
\node [circle, draw, inner sep=0pt, minimum width=10pt] (b2) at (-2,3) {};
\node [circle, draw, inner sep=0pt, minimum width=10pt] (b1) at (-1,2) {};
\node [circle, draw, inner sep=0pt, minimum width=10pt] (b8) at (-2,1) {};
\node [circle, draw, inner sep=0pt, minimum width=10pt] (b7) at (-2,-1) {};
\node [circle, draw, inner sep=0pt, minimum width=10pt] (b9) at (-1,-2) {};
\node [circle, draw, inner sep=0pt, minimum width=10pt] (b6) at (-2,-3) {};
\node [circle, draw, inner sep=0pt, minimum width=10pt] (b11) at (-2,-5) {};
\node [circle, draw, inner sep=0pt, minimum width=10pt] (b10) at (-1,-6) {};
\node [circle, draw, inner sep=0pt, minimum width=10pt] (b13) at (-3,-2) {};
\node [circle, draw, inner sep=0pt, minimum width=10pt] (b12) at (-3,2) {};
\node [circle, draw, inner sep=0pt, minimum width=10pt] (b5) at (-4,0) {};
\node [draw=none, inner sep=0pt] (b17) at (-1,8) {};
\node [draw=none, inner sep=0pt] (b18) at (-1,-8) {};

\draw [-latex] (b9) edge (b7);
\draw [-latex] (b8) edge (b1);
\draw [-latex] (b2) edge (b12);
\draw [-latex] (b12) edge (b13);
\draw [-latex] (b10) edge (b11);
\draw [-latex] (b6) edge (b9);
\draw [-latex] (b7) edge (b13);
\draw [-latex] (b12) edge (b8);
\draw [-latex] (b5) edge (b12);
\draw [-latex] (b13) edge (b5);
\draw [-latex] (b1) edge (b2);
\draw [-latex] (b3) edge (b4);
\draw [-latex] (b13) edge (b6);
\draw [-latex, bend left] (b11) edge (b6);
\draw [-latex, bend left] (b6) edge (b11);
\draw [-latex, bend left] (b11) edge (b5);
\draw [-latex, bend left] (b5) edge (b3);
\draw [-latex, bend left] (b3) edge (b2);
\draw [-latex, bend left] (b2) edge (b3);
\draw [-latex, bend left] (b7) edge (b8);
\draw [-latex, bend left] (b8) edge (b7);
\draw [-latex, bend right] (b10) edge (b9);
\draw [-latex, bend right] (b9) edge (b1);
\draw [-latex, bend right] (b1) edge (b4);
\draw [-latex] (b18) edge (b10);
\draw [-latex] (b4) edge (b17);

\draw [-latex, ultra thick] (a18) edge (a10);
\draw [-latex, ultra thick] (a10) edge (a11);
\draw [-latex, ultra thick] (a5) edge (a12);
\draw [-latex, ultra thick] (a12) edge (a13);
\draw [-latex, ultra thick] (a13) edge (a6);
\draw [-latex, ultra thick] (a6) edge (a9);
\draw [-latex, ultra thick] (a9) edge (a7);
\draw [-latex, ultra thick] (a8) edge (a1);
\draw [-latex, ultra thick] (a1) edge (a2);
\draw [-latex, ultra thick] (a3) edge (a4);
\draw [-latex, ultra thick] (a4) edge (a17);
\draw [-latex, ultra thick] (b18) edge (b10);
\draw [-latex, ultra thick] (b10) edge (b11);
\draw [-latex, ultra thick] (b2) edge (b12);
\draw [-latex, ultra thick] (b12) edge (b13);
\draw [-latex, ultra thick] (b13) edge (b6);
\draw [-latex, ultra thick] (b6) edge (b9);
\draw [-latex, ultra thick] (b9) edge (b7);
\draw [-latex, ultra thick] (b8) edge (b1);
\draw [-latex, ultra thick] (c18) edge (c10);
\draw [-latex, ultra thick] (c10) edge (c11);
\draw [-latex, ultra thick] (c2) edge (c12);
\draw [-latex, ultra thick] (c12) edge (c8);
\draw [-latex, ultra thick] (c7) edge (c13);
\draw [-latex, ultra thick] (c13) edge (c6);
\draw [-latex, ultra thick] (c6) edge (c9);
\draw [-latex, ultra thick] (c4) edge (c17);
\draw [-latex, ultra thick] (b4) edge (b17);
\draw [-latex, ultra thick, bend right] (b1) edge (b4);
\draw [-latex, ultra thick, bend right] (c9) edge (c1);
\draw [-latex, ultra thick, bend right] (c1) edge (c4);
\draw [-latex, ultra thick, bend left] (a11) edge (a5);
\draw [-latex, ultra thick, bend left] (a7) edge (a8);
\draw [-latex, ultra thick, bend left] (a2) edge (a3);
\draw [-latex, ultra thick, bend left] (b11) edge (b5);
\draw [-latex, ultra thick, bend left] (b5) edge (b3);
\draw [-latex, ultra thick, bend left] (b3) edge (b2);
\draw [-latex, ultra thick, bend left] (b7) edge (b8);
\draw [-latex, ultra thick, bend left] (c11) edge (c5);
\draw [-latex, ultra thick, bend left] (c5) edge (c3);
\draw [-latex, ultra thick, bend left] (c3) edge (c2);
\draw [-latex, ultra thick, bend left] (c8) edge (c7);
\end{tikzpicture}
        \caption{}
        \label{fig:satisfy}
    \end{subfigure}
    \begin{subfigure}[b]{0.3\textwidth}
    	\centering
        \begin{tikzpicture}[scale=0.5]

\draw [draw=none, rounded corners = 10pt, fill=gray!20] (-4.45,5.6) -- (-1.4,5.6) -- (-1.4,-5.6) -- (-4.45,-5.6) -- cycle;
\node [circle, draw, inner sep=0pt, minimum width=10pt] (v4) at (-1,6) {};
\node [circle, draw, inner sep=0pt, minimum width=10pt] (v3) at (-2,5) {};
\node [circle, draw, inner sep=0pt, minimum width=10pt] (v2) at (-2,3) {};
\node [circle, draw, inner sep=0pt, minimum width=10pt] (v1) at (-1,2) {};
\node [circle, draw, inner sep=0pt, minimum width=10pt] (v8) at (-2,1) {};
\node [circle, draw, inner sep=0pt, minimum width=10pt] (v7) at (-2,-1) {};
\node [circle, draw, inner sep=0pt, minimum width=10pt] (v9) at (-1,-2) {};
\node [circle, draw, inner sep=0pt, minimum width=10pt] (v6) at (-2,-3) {};
\node [circle, draw, inner sep=0pt, minimum width=10pt] (v11) at (-2,-5) {};
\node [circle, draw, inner sep=0pt, minimum width=10pt] (v10) at (-1,-6) {};
\node [circle, draw, inner sep=0pt, minimum width=10pt] (v13) at (-3,-2) {};
\node [circle, draw, inner sep=0pt, minimum width=10pt] (v12) at (-3,2) {};
\node [circle, draw, inner sep=0pt, minimum width=10pt] (v5) at (-4,0) {};

\draw [-latex] (v9) edge (v7);
\draw [-latex] (v8) edge (v1);
\draw [-latex] (v2) edge (v12);
\draw [-latex] (v12) edge (v13);
\draw [-latex] (v10) edge (v11);
\draw [-latex] (v6) edge (v9);
\draw [-latex] (v7) edge (v13);
\draw [-latex] (v12) edge (v8);
\draw [-latex] (v5) edge (v12);
\draw [-latex] (v13) edge (v5);
\draw [-latex] (v1) edge (v2);
\draw [-latex] (v3) edge (v4);
\draw [-latex] (v13) edge (v6);
\draw [-latex, bend left] (v11) edge (v6);
\draw [-latex, bend left] (v6) edge (v11);
\draw [-latex, bend left] (v11) edge (v5);
\draw [-latex, bend left] (v5) edge (v3);
\draw [-latex, bend left] (v3) edge (v2);
\draw [-latex, bend left] (v2) edge (v3);
\draw [-latex, bend left] (v7) edge (v8);
\draw [-latex, bend left] (v8) edge (v7);
\draw [-latex, bend right] (v10) edge (v9);
\draw [-latex, bend right] (v9) edge (v1);
\draw [-latex, bend right] (v1) edge (v4);
\node [draw=none, inner sep=0pt] (v17) at (-1,8) {};
\node [draw=none, inner sep=0pt] (v18) at (-1,-8) {};

\draw [-latex] (v18) edge (v10);
\draw [-latex] (v4) edge (v17);

\draw [-latex, ultra thick] (v18) edge (v10);
\draw [-latex, ultra thick] (v4) edge (v17);
\draw [-latex, ultra thick, bend right] (v10) edge (v9);
\draw [-latex, ultra thick, bend right] (v9) edge (v1);
\draw [-latex, ultra thick, bend right] (v1) edge (v4);

\end{tikzpicture}
        \caption{}
        \label{fig:notsatisfy}
    \end{subfigure}
    \caption{(a)~Possible constructions of path covers when the clause is satisfied (symmetric cases are omitted). (b)~The shaded region must contain an endpoint if the path cover includes all thick edges.}
\end{figure}
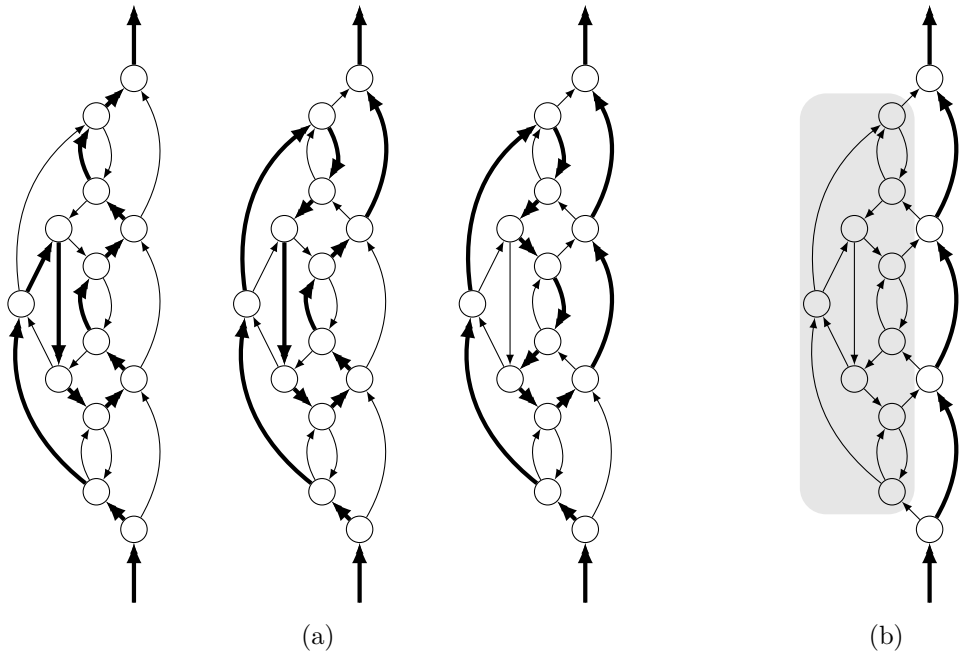

\subsubsection*{\textsc{Max Vertex-Disjoint Path Cover(2)}$\implies$\textsc{Max-3SAT(29)}}

We prove that second part of Lemma~\ref{lem-mvdpc-approx} via contrapositive. That is, suppose that the optimum of the constructed \textsc{Max Vertex-Disjoint Path Cover(2)} instance is greater than $(1-\alpha_{\MPC})(|V|-1)$, we must show that the optimum of the original \textsc{Max-3SAT(29)} instance is greater than $(1-\alpha_{\SAT})m$.

Consider an optimal path cover of our constructed \textsc{Max Vertex-Disjoint Path Cover} instance. Our path cover is of size greater than $(1-\alpha_{\MPC})(|V|-1)$, so it must contain less than $\alpha_{\MPC}|V|+\alpha_{\MPC} \leq \alpha_{\MPC}|V| + 1$ paths.  That is, besides $s$ and $t$, our path cover contains strictly less than $2\alpha_{\MPC}|V|$ other endpoints. To complete our proof, we will explicitly construct an assignment satisfying more than $(1-\alpha_{\SAT})m$ clauses.

Now, observe that if a variable territory contains no endpoints, then exactly one of the two literal edges are in the path cover due to the XOR lines as argued in Observation~\ref{obs-xor}. We construct our assignment as follows. If there are no endpoints in $x_i$'s variable territory, we assign the truth value according to the literal in the path cover. Otherwise, assign an arbitrary one. We now show that our assignment fails to satisfy strictly less than $\alpha_{\SAT} m$ clauses.

First, we prove the following lemma.

\begin{lemma} \label{lemma-clause-endpoint}
If a clause $c_j$ is not satisfied by our assignment, then $c_j$'s clause territory contains an endpoint.
\end{lemma}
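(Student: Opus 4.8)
The plan is to argue the contrapositive: assuming the clause territory of $c_j$ contains no endpoint of the fixed optimal path cover, we show that our assignment satisfies $c_j$. Recall that this territory is the union of the vertices of $c_j$'s clause gadget with the variable territories of the (at most three) variables whose literals occur in $c_j$. So, under the assumption, none of these vertices is a path endpoint; in particular, for each variable $x_i$ having a literal in $c_j$, its variable territory is endpoint-free, hence (i)~our assignment sets $x_i$ according to the literal edge of $x_i$ present in the path cover, and (ii)~every XOR line incident to $x_i$ — including those linking a literal edge of $x_i$ to its occurrences in clauses — has all eight of its vertices endpoint-free, so Observation~\ref{obs-xor} applies to each of them.

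First I would translate truth values into edges. Consider an occurrence of a literal $y_{jk}$ (coming, say, from variable $x_i$) in $c_j$, and let $f_{jk}$ denote the edge of $c_j$'s clause gadget that attaches to the XOR line of that occurrence. Applying Observation~\ref{obs-xor} to the XOR line(s) between the literal edge of $y_{jk}$ and this occurrence and composing the resulting equivalences along the wiring of the gadgets (exactly as in the forward direction), we obtain that $f_{jk}$ lies in the path cover if and only if the literal edge of $y_{jk}$ does not, i.e., if and only if $y_{jk}$ is \textsc{False} under the assignment from~(i). Consequently, if $c_j$ were not satisfied, then all three edges $f_{j1}, f_{j2}, f_{j3}$ would lie in the path cover.

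It then remains to establish the structural property of Plesn{\'i}k's clause gadget: if $f_{j1}, f_{j2}, f_{j3}$ all lie in the path cover, then the vertices of $c_j$'s clause gadget cannot all be covered by a collection of paths whose only entrances and exits are the gadget's two external ports (toward the preceding clause and toward $t$); equivalently, some vertex of the gadget must be a path endpoint. This is a finite check: the configurations covering the gadget without an internal endpoint are exactly those shown in Figure~\ref{fig:satisfy} together with their symmetric variants, each of which leaves at least one $f_{jk}$ out of the cover, whereas forcing all three $f_{jk}$ into the cover produces the obstruction depicted in Figure~\ref{fig:notsatisfy}. Since the clause gadget is endpoint-free by assumption, this rules out $f_{j1}, f_{j2}, f_{j3}$ all lying in the path cover, so by the previous paragraph at least one $y_{jk}$ is \textsc{True} and $c_j$ is satisfied, establishing the contrapositive. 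The main obstacle is this final case analysis on the clause gadget; the rest is bookkeeping on endpoints together with the XOR equivalences of Observation~\ref{obs-xor}.
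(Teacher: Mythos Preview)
Your proposal is correct and follows essentially the same approach as the paper: both argue (via contrapositive or direct contradiction) that endpoint-free variable territories force the three clause-side edges $f_{jk}$ into the cover when all literals are \textsc{False} (via Observation~\ref{obs-xor}), and then invoke the structural fact from Figure~\ref{fig:notsatisfy} that these three edges disconnect the remaining clause-gadget vertices, forcing an endpoint there. Your write-up is somewhat more explicit about the XOR bookkeeping and the finite case analysis, but the logical skeleton is identical to the paper's.
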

\begin{proof}
Suppose that there are no endpoints in any of the variable territories of any variable occurring in $c_j$. Then for $c_j$ to be unsatisfied, all literal edges in $c_j$ must be \textsc{False} and not included in the path cover. Observation~\ref{obs-xor} implies that we must traverse all three edges on the right of $c_j$'s clause gadget; these are shown in Figure~\ref{fig:notsatisfy} as the curved thick edges. Notice that our path leaves the shaded vertices disconnected from the remaining part of the graph, which requires at least one additional path to cover. That is, there must still be an endpoint in the clause territory of $c_j$.
\end{proof}

Lemma~\ref{lemma-clause-endpoint} implies that an endpoint must be present for a clause to be unsatisfied. Next, we show that each endpoint cannot be contained in very many clause territories.

\begin{lemma}
Each endpoint is in at most $29$ clause territories.
\end{lemma}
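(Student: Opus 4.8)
The plan is to prove the stronger statement that \emph{every} vertex of the constructed graph lies in at most $29$ clause territories; since endpoints are (by definition) vertices of the path cover, this immediately gives the lemma. The key observation is that each vertex belongs to exactly one ``structural component'' of the construction: either a clause gadget (one of the vertices in the left gadget of Figure~\ref{fig:gadgets}), a variable gadget together with the single XOR line lying inside it, or one of the ``middle'' XOR gadgets linking a literal edge of some variable $x_i$ to an occurrence of $x_i$ or $\overline{x_i}$ in a clause. These components are pairwise vertex-disjoint by construction: distinct variables (resp.\ clauses) have disjoint gadgets, and each XOR gadget is realized with fresh vertices and is attached to exactly one variable's literal edge.

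First I would handle clause-gadget vertices. If $v$ is a vertex of the clause gadget of $c_j$, then $v$ lies in the clause territory of $c_j$. It cannot lie in the clause territory of any other clause $c_{j'}$, because a clause territory is by definition the union of that clause gadget's own vertices with the variable territories of the variables occurring in the clause, and a variable territory consists only of variable-gadget vertices and XOR-gadget vertices, never clause-gadget vertices. Hence such a $v$ lies in exactly one clause territory, and $1 \le 29$.

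Next I would handle the remaining vertices. If $v$ is not a clause-gadget vertex, then $v$ lies in the variable gadget of a unique variable $x_i$ or in one of the XOR gadgets attached to $x_i$'s literal edges; in either case $v$ lies in $x_i$'s variable territory $T(x_i)$, and in no other variable's territory, since the variable territories are pairwise disjoint (each XOR gadget belongs to exactly one variable). Now $T(x_i)$ is contained in the clause territory of $c_j$ precisely when $x_i$ or $\overline{x_i}$ occurs in $c_j$; and here is where the reduction being from \textsc{Max-3SAT(29)} is used: $x_i$ occurs in at most $29$ clauses. Since $v$ lies in no clause gadget and in $T(x_i)$ for only this one variable, $v$ belongs to at most $29$ clause territories. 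Combining the two cases yields the lemma.

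The only delicate point is the bookkeeping in the first paragraph --- verifying that the three families of components are genuinely vertex-disjoint and that a clause territory ``reaches into'' other clauses only via the variable territories --- after which the bound is an immediate consequence of the $k=29$ occurrence restriction inherited from \textsc{Max-3SAT(29)}. (This is exactly why the bounded-occurrence version of Max-3SAT, rather than the unrestricted one, is needed for the reduction.)
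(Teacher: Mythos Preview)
Your proof is correct and follows essentially the same approach as the paper's own proof. The paper splits on whether the endpoint lies in some variable territory (if not, disjointness of clause gadgets gives at most one; if so, disjointness of variable territories plus the $29$-occurrence bound gives at most $29$), whereas you split on whether the vertex is a clause-gadget vertex --- but these dichotomies coincide once one knows that variable territories contain no clause-gadget vertices and that every non-clause-gadget vertex (other than the trivially irrelevant $s,t$) lies in a unique variable territory, which is exactly the structural bookkeeping you spell out in your first paragraph.
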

\begin{proof}
If the endpoint is not in any variable territory, then it is in at most one clause territory because clause gadgets do not overlap. Otherwise, notice that the endpoint can be only in one variable territory as variable territories also do not overlap. One variable may occur at most $29$ times, so a variable territory can be in at most $29$ clause territories. Therefore, any endpoint can be in at most $29$ clause territories.
\end{proof}

Each endpoint may cause up to $29$ clauses to become unsatisfied. Recall that there are less than $2\alpha_{\MPC}|V|$ endpoints that belong to some clause territories. Thus our assignment leaves strictly less than $29\cdot2\alpha_{\MPC}|V|=58\alpha_{\MPC}|V|$ clauses unsatisfied.

Finally, we relate $|V|$ and $m$ in order to establish our claim. Notice that we add vertices $s$ and $t$ to our graph for convenience when defining endpoints, but they can be safely removed as their only neighbors do not have other incoming or outgoing edges, respectively. Then, the number of vertices becomes $|V| \leq 11n + 37m \leq 11(3m) + 37m = 70m$. As the number of unsatisfied clauses is less than $58\alpha_{\MPC}(70m)=4060m$, our claim holds because $\alpha_{\MPC} = \alpha_{\SAT}/4060$, as desired.

\section{Conclusion and Open Problems} \label{sec:open}

Our results establish an upper bound on the optimal approximation factor at ${67038719 \over 67038720}$,
and the simple approximation algorithms mentioned
in the Introduction give a lower bound of $\frac{1}{2}$.  The optimal
approximation factor remains an open question, though the right answer is
probably closer to $\frac{1}{2}$.
We believe our inapproximability factor can be tightened somewhat by
further modifying Plesn\'ik's construction, or by reducing from a different problem; e.g.,
find a variation of \textsc{SAT} that yields a better trade-off in the reduction,
or make use of the tighter bounds on \textsc{$\{1,2\}$-TSP} from~\cite{Enge03}.


For the objective of placing the maximum tiles without violations,
we can improve the approximation factor for $1 \times n$ puzzles
to ${2 \over 3}$ by computing a maximum-cardinality matching on the tiles,
and interspersing matched groups (of size 2 or 1) with blanks.
Suppose that an optimal solution contains $OPT = \left\lceil {n \over 2} \right\rceil + k$ tiles.
Such a solution contains $n - (\left\lceil {n \over 2} \right\rceil + k) = \left\lfloor {n \over 2} \right\rfloor - k$ blank slots, leaving at least $(n-1) - 2(\left\lfloor {n \over 2} \right\rfloor - k) \geq 2k - 1$ matched edges, inducing a matching of cardinality at least $\left\lceil {{2k-1} \over 2} \right\rceil = k$ for these matched edges which form paths.
If $k \geq {n \over 3}$, this leads to an approximate solution consisting of pairs of matched tiles, except possibly for the last tile, packing at least ${2 \over 3}n \geq {2 \over 3}OPT$ tiles (because $OPT \leq n$).
Otherwise, the solution contains $2k$ tiles in the first $3k$ slots, ending with a blank slot, followed by alternation in the remaining $n - 3k$ spaces, for a total of $2k + \left\lceil{1 \over 2} (n - 3k)\right\rceil \geq \left\lceil {{n+k} \over 2} \right\rceil = \left\lceil {OPT \over 2} + {{\left\lfloor n/2 \right\rfloor} \over 2}\right\rceil \geq {2 \over 3}OPT$ tiles placed.





Another natural question is whether our inapproximability results also apply to
the original NP-hardness scenario of square target shapes \cite{Jigsaw_GC}.
We expect that our proofs can be adapted to this case, but it requires care to
avoid introducing additional error in the approximation factor.

Edge-matching puzzles are also common with tiles in the shape of regular
hexagons or equilateral triangles.  Are these problems are NP-hard and
inapproximable for ``$1 \times n$'' (but bumpy) target shapes?  Our results
carry over directly to hexagons, just by leaving two opposite edges of each
hexagon unlabeled, and using the other two pairs of opposite edges to simulate
a square tile.  Two triangles connected by a unique glue can simulate a square,
which suffices to prove NP-hardness, but more care would be required for
inapproximability to consider the case that the triangle pair gets separated.

\bibliographystyle{alpha}
\bibliography{citations}

\end{document}